\crefname{theorem}{Theorem}{Theorems}
\crefname{lemma}{Lemma}{Lemmata}
\crefname{claim}{Claim}{Claims}
\crefname{definition}{Definition}{Definitions}
\crefname{appendix}{Appendix}{Appendices}
\crefname{example}{Example}{Examples}
\crefname{equation}{Equation}{Equations}
\crefname{section}{Section}{Sections}
\crefname{subsection}{Subsection}{Subsections}
\newtheorem{theorem}{Theorem}
\newtheorem{example}{Example}
\newtheorem{lemma}[theorem]{Lemma}
\newtheorem*{lemma*}{Lemma}
\newtheorem*{theorem*}{Theorem}
\newtheorem{claim}[theorem]{Claim}
\newtheorem{definition}[theorem]{Definition}
\newcommand{\E}{\mbox{\bf E}}
\DeclareFontShape{T1}{cmr}{m}{scit}{<->ssub*cmr/m/sc}{}
\title{On the Power of Randomization for Obviously Strategy-Proof Mechanisms}
\author[1]{Shiri Ron}
\author[2]{Daniel Schoepflin}
\affil[1]{Weizmann Institute of Science}
\affil[2]{Rutgers University}
\begin{document}
\titlepage
\maketitle
\begin{abstract}
We investigate the problem of designing \emph{randomized} obviously strategy-proof (OSP) mechanisms in several  canonical  auction settings. 
Obvious strategy-proofness, introduced by Li \cite{li}, strengthens the well-known concept of dominant-strategy incentive compatibility (DSIC). Loosely speaking, it ensures that even agents who struggle with contingent reasoning can identify that their dominant strategy is optimal.

Thus, one would hope to design OSP mechanisms with good approximation guarantees.
Unfortunately, 
deterministic OSP mechanisms fail to achieve an approximation better than $\min\{m,n\}$ where $m$ is the number of items and $n$ is the number of bidders, even 
for the simple settings of additive and unit-demand bidders \cite{Ron24}. 
We circumvent these impossibilities
by showing that randomized
mechanisms that are obviously strategy-proof in the universal sense obtain a \emph{constant} factor approximation for these classes. 
{We show that this phenomenon occurs also for the setting of a multi-unit auction with single-minded bidders.}
Thus, our results provide a more positive outlook on the design of OSP mechanisms and exhibit a stark separation between the power of randomized and deterministic OSP mechanisms.


To complement the picture, we provide impossibilities for randomized OSP mechanisms in each setting. While the deterministic VCG mechanism is well known to output an optimal allocation in dominant strategies, we show that even randomized OSP mechanisms cannot obtain more than $87.5\%$ of the optimal welfare. This further demonstrates that OSP mechanisms are significantly weaker than dominant-strategy mechanisms.


\end{abstract}
\clearpage
\newcommand{\bidders}{\mathcal{N}}
\newcommand{\items}{M}
\newcommand{\opt}{\text{OPT}}
\newcommand{\allocs}{\mathcal{T}}
\newcommand{\detmech}{A}
\newcommand{\mechanism}{\mathcal{M}}
\newcommand{\random}{R}
    
\section{Introduction}
Economics is the science of how to allocate scarce resources to several competing parties. In particular, auctions serve as a useful playground to understand who should get what and for what price.  
We assume a {good-willed} central planner who aims to allocate the resources in a way that maximizes the social welfare of all parties involved.  To achieve that, she has to overcome the following obstacle: the information of bidders is private and they are 
 interested in maximizing their own utility. Therefore, she must carefully design the elicitation mechanism to align the incentives of the agents with her own objective.



For a specific example, consider the  setting where an auctioneer 
wants to allocate a single item among a set $\bidders$ of  bidders.  
Each bidder $i$ has a value $v_i$ for receiving the good.  To maximize social welfare, the auctioneer should give the item to the bidder of highest value, i.e., $i^* = \text{argmax}_{i \in \bidders}{v_{i}}$.  
For that, the auctioneer must design a mechanism which collects
information about the value of the bidders
and decides which bidder wins
the good and the payment $p_i \geq 0$ of each bidder.

The classic solution proposed for this 
setting is the \emph{sealed-bid} second-price auction, wherein bidders report their values directly to the auctioneer and the highest-valued bidder is awarded the good at the second highest price \cite{Vic61}.  It is well-known that this auction is \emph{dominant-strategy incentive compatible} (i.e., \emph{strategy-proof}), meaning that each bidder maximizes her utility by truthfully reporting her private value regardless of the reports of the other bidders.  Theory suggests, therefore, that bidders should never misreport their value in this auction.  However, in practice, 
``real-world'' bidders report bids not equal to their true value \cite{KHL1987}.  Thus, there appears to be a mismatch between the prediction of the theory of strategy-proof mechanisms and the observed outcomes. 

An alternative to the sealed-bid second-price auction for the single-item setting is the ascending price (Japanese) auction.  In this auction, a price clock gradually increases over time and bidders drop out whenever the asking price becomes too high.  The ascending price auction implements the exact same outcome as the sealed-bid auction: it awards the item to the highest-valued bidder at the second-highest value. However, bidders appear empirically  more likely to follow their optimal truthful strategy when facing an ascending-price auction compared to the sealed-bid format.

To address this discrepancy, Li introduced the notion of \emph{obvious strategy-proofness} (OSP), a strengthening of strategy-proofness \cite{li}. Loosely speaking, an OSP mechanism ensures that even agents unable to perform contingent reasoning can recognize truth-telling as the optimal strategy. OSP provides a theoretical explanation for the prevalence of ascending auctions over sealed-bid implementations, by claiming that their popularity stems from the fact that they are simpler for bidders to understand than sealed-bid implementations.


Since its introduction by \cite{li}, obvious strategy-proofness has emerged as a ``gold standard'' for strategic simplicity in mechanism design and the notion has attracted a great deal of attention.  For instance, various refinements and relaxations of OSP have been proposed (e.g., \cite{pycia2023theory,nagel2023measure,FV23}) and the design of OSP mechanisms has been examined various settings, including, e.g., one-sided matching \cite{troyan2019obviously} and two-sided  matching  markets \cite{AG18,Thomas21}, scheduling \cite{ferraioli2019obviously}, voting problems \cite{BG17,arribillaga2020obvious} and allocation problems \cite{BG17}.  Another line of work has aimed to find \emph{characterizations} of OSP mechanisms in various domains.  For instance, \cite{li} showed that for all single-parameter binary allocation settings the class of OSP mechanisms coincides with the class of ``personalized clock auctions'' (essentially, a natural generalization of the Japanese auction for a single item), and further work has  established characterizations for all linear single-parameter domains \cite{carmine1,carmine2} and beyond \cite{pycia2023obviously,Mackenzie20}.  Using these characterizations, both lower bounds (impossibility results) and upper bounds (mechanisms with proven guarantees) have been proposed for various single-parameter auction settings, including binary allocation with general feasibility constraints \cite{DGR14,CGS22,FGGS22}  and procurement  settings \cite{BGGST22}.  

Beyond single-parameter settings, the picture regarding the performance of OSP mechanisms in auctions becomes somewhat pessimistic.
\cite{BG17} initially showed that, for additive bidders, no obviously strategy-proof mechanism optimizes the welfare,
and recent work of \cite{Ron24} has, essentially, ``closed the book'' on deterministic OSP mechanisms for multi-parameter combinatorial auction settings.  Even if bidders have additive or unit-demand valuation functions (which are commonly thought to be  ``easy''), the trivial OSP mechanism which 
runs an ascending-price auction for the grand bundle achieves the \emph{best possible} approximation guarantee of $\min\{m,n\}$, where $m$ is the number of items and $n$ is the number of bidders. 
To  circumvent these strong impossibilities, we turn our attention to \emph{randomized} OSP mechanisms.

\subsection{Our Results}
Our main results are upper bounds and lower bounds for randomized OSP mechanisms.
We focus our attention on ``universally'' OSP mechanisms, i.e., mechanisms which are a distribution over deterministic OSP mechanisms.
We analyze all the settings considered by 
 \cite{Ron24} and  show that:
\begin{enumerate}
    \item For additive bidders in a combinatorial auction, there is a mechanism that obtains a $4$ approximation and
    no mechanism has approximation better than $\frac{8}{7}\approx 1.14$  (\cref{thm-ub-add} and \cref{thm-lb-add}).
\item For unit-demand bidders in a combinatorial auction, 
there is a mechanism
    that obtains an $e\approx 2.72$ approximation and
no mechanism has approximation better than
than $\frac{8}{7}\approx 1.14$ (\cref{thm:ud-upper} and \cref{thm-lb-ud}). 
    \item For single-minded bidders in a multi-unit auction with unknown demands, 
    there is a mechanism that obtains a $400$ approximation and no mechanism has approximation better than $1.2$ (\cref{thm-ub-mua-sm} and \cref{thm-mua-sm-lb}). 
\end{enumerate}
All the impossibilities 
are for mechanisms that satisfy individual rationality and no negative transfers. Likewise,
our proposed mechanisms
conform to these conditions.

Observe that our upper bounds demonstrate the power of randomization for obviously strategy-proof mechanism design:
whilst deterministic OSP mechanisms 
can only obtain an approximation of 
$\{m,n\}$ to the optimal welfare \cite{Ron24}, 
all these classes admit a randomized OSP mechanisms with a  constant factor approximation. 
In addition, we observe that the randomized $poly(m)$-communication mechanisms that are
dominant-strategy incentive compatible and obtain the state of the art approximation guarantees for \textquote{richer} classes of valuations in combinatorial auctions are in fact  obviously strategy-proof (see \cref{cl:subadditive,cl:general}).\footnote{We also provide a $400$ approximation to the optimal welfare for
multi-unit auctions with bidders whose valuations satisfy decreasing marginal utilities (\cref{thm:decreasing-marginals}). This is the only multi-parameter domain for which the power of deterministic mechanisms is not known. In \cref{subsub::non-mono}, 
we describe a non-monotonicity effect that illustrates a barrier towards proving impossibilities for this class.}


{Our upper bounds are motivated by the following observation:
the constructions of \cite{Ron24}
show 
that
every  mechanism that provides a non-trivial approximation to the welfare satisfies that the first bidder that \textquote{speaks} in the mechanism does not have an obviously dominant strategy. The underlying cause of this  phenomenon is that when querying a bidder for the first time, the mechanism fails because it has no information regarding the valuations of the other bidders. 
Thus, to overcome this impossibility, our proposed mechanisms are based on the classic secretary approach of sampling a sufficient fraction of the bidders and aggregating their information to determine
a price per item. Owing to the use of randomization, this can be done 
in an obviously dominant manner while maintaining a high fraction of the welfare in expectation.}  



Our lower bounds for combinatorial auctions with unit-demand and additive bidders further emphasize 
the restrictiveness of obvious strategy-proofness compared to implementation in dominant strategies. 
Not only getting an approximation better than $\min\{m,n\}$ is impossible deterministically, but even if we allow randomization we cannot get more than $87.5\%$ of the optimal welfare. In contrast, these settings  have  dominant-strategy mechanisms that extract the optimal welfare and are also efficient both from a computation and a communication perspective.
One disadvantage of our main results is that the lower bounds and upper bounds that we provide are quite far apart. As a step to bridge this gap, we show in 
\cref{sec-22} 
that for two bidders and two items, all the aforementioned classes
admit mechanisms that give a $\frac{4}{3}$ approximation.  

\subsection{Why Randomization?}
On first impression, one might argue that randomization adds impractical complexity to a mechanism, and does not align with the simplicity we aim to achieve when designing OSP mechanisms.  Indeed,  ``real-world'' agents could possibly be confused by randomization. 
Moreover, it can be difficult to verify that an outcome is the result of some pre-specified random process.  We emphasize, however, that the randomization at use in our work is rather ``straightforward'' in the sense that bidders do not reason about 
expected outcomes since, on any fixed result of the random process, they face a mechanism where they have an obviously dominant strategy.

Moreover, randomized mechanisms are prevalent in practice, e.g., in drafts in sports for new team members, housing programs, and ``greencard'' allocation, and well-studied elsewhere in theory, in particular in fair division (see \cite{budish2013designing} among many others). 
In fact, designing fair mechanisms without randomization is challenging, as the order of player selection often plays a crucial role—bidders who are selected earlier tend to have an advantage over those selected later. Randomizing the order of choice is a natural solution, and there appears to be no alternative to this approach.

The power of randomization for welfare maximization is,  despite a great deal of work, still not fully understood. If we put aside complexity considerations, then the deterministic VCG mechanism is optimal, so randomized dominant-strategy mechanisms do not \textquote{beat} their randomized counterparts.  

In contrast, 
 if we require communication-efficient mechanisms, i.e. mechanisms with number of bits that is polynomial in $m$ and $n$ in the worst case, then 
 randomized dominant-strategy mechanisms achieve an approximation of $\mathcal O(\sqrt m)$ when the bidders have arbitrary monotone valuations, whilst every deterministic mechanism that achieves approximation better than $\mathcal O(m^{1-\epsilon})$ has exponential communication \cite{dobzinski2012truthful,DRV22}. If we settle for mechanisms that only satisfy the weaker notion of ex-post incentive compatibility\footnote{A mechanism is ex-post incentive compatible if it has strategies that form a Nash equilibrium, in contrast to dominant-strategy mechanisms where the bidders have dominant strategies.}, then
the best deterministic mechanisms lag behind their randomized counterparts \cite{QW24,AS19,assadi2021improved}, but no such separation is known. Our work, in contrast, demonstrates such a separation between deterministic and randomized OSP mechanisms.




\subsection{Open Questions}

{We note that there is a gap between our upper and lower bounds and we} leave open 
the question of understanding the exact approximation ratio of these classes of mechanisms and of randomized OSP mechanisms in general.  A particular enticing question is whether there exists an $\mathcal O(1)$-approximate randomized OSP mechanism for more general complement-free valuations (e.g., submodular or subadditive valuations) or whether one can demonstrate an $\omega(1)$-lower bound on the performance of any OSP mechanism in these settings.  

Toward the latter, one would need to circumvent a crucial limitation of our lower bound approach. The reason for it is that we  prove our lower bounds
by presenting a distribution over valuation profiles, and showing that a mechanism has to fail on at least one valuation profile.  To prove super-constant lower bounds, it is essential to find distributions whose support has super-constant number of instances and show that every deterministic mechanism 
errs on a significant fraction of them.

There are also interesting  questions outside the realm of the auction settings.  For example, exploring other settings to determine whether there exists a separation between the performance
dominant-strategy and obviously strategy-proof mechanisms is an intriguing direction.



\section{Preliminaries and Useful Observations}\label{sec-prelims}
Let $\items$ be the set of $m$ items and $\bidders$ the set of $n$ bidders. Each bidder $i$ has a private valuation function $v_i: 2^{\items} \to \mathbb{R}^{+}$, which specifies her value for every subset of items. This function belongs to a domains of valuations  $V_i$ that is known to the auctioneer.  
The auctioneer's goal is to maximize social welfare, i.e., to output a partition $(S_1, \dots, S_n)$ of the items $\items$ among the bidders $\bidders$ that maximizes $\sum_i v_i(S_i)$.


To achieve this, the auctioneer designs mechanisms, with this work focusing specifically on randomized ones.
A randomized protocol $\mathcal M$ randomly chooses, in advance, one of several deterministic protocols to follow. We denote the deterministic mechanisms in the support of the randomized mechanism with $\mathcal A$.
Throughout the paper, we use the terms protocol and mechanism interchangeably.
We defer all proofs to \cref{app:missing-proofs-prems}. 

\paragraph{Components of Deterministic Protocols} Following the random selection, all bidders face a deterministic protocol.
We represent protocols as trees where each internal node corresponds to a bidder called to "speak" or communicate a message.
Each such node has a set of possible messages, and the next node in the tree is determined by the message sent by the bidder. 
A leaf in the protocol specifies an allocation of items to bidders and a payment for each bidder.

Fixing a randomly chosen deterministic protocol $A\in \mathcal A$, let $\mathcal{N}_i$ denote the set of all nodes in which a particular bidder $i$ is called to speak.  Then, the behavior $B_i$ of player $i$ assigns a message to each node in $\mathcal{N}_i$. We denote with $\mathcal{B}_i$ be the set of all possible behaviors.  Observe that a behavior profile $B = (B_1,\dots, B_n) \in \mathcal B_1\times\cdots\times \mathcal B_n$ thus defines a root to leaf path in $A$.  We let $Path(B)$ denote all the nodes along the path defined by $B$ and $Leaf(B)$ denote the leaf
that $Path(B)$ ends with.  For every behavior profile $B$ and for every player $i$, we 
denote with $f_i(B)$ and with $p_i(B)$ respectively the allocation and the payment of player $i$ that are specified in $Leaf(B)$. For two behavior profiles $B$ and $B'$, we denote all of the nodes appearing in both $Path(B)$ and $Path(B')$ as $Path(B)\cap Path(B')$. Two behavior profiles $(B_1,\ldots,B_n)$ and $(B_1',\ldots,B_n')$ \emph{diverge at vertex $u$} if $u$ is the last shared vertex in $Path(B_1,\ldots,B_n)$ and $Path(B_1',\ldots,B_n')$.

Finally, the \emph{strategy} $\mathcal{S}_i$ of player $i$ is a function specifying a behavior of player $i$ for each possible valuation in $V_i$ and every possible deterministic protocol $A$ in the support $\mathcal A$. Formally, $\mathcal S_i:\mathcal A \times V_i\to\mathcal B_i$. We often abuse notation by referring to the strategy of player $i$ in the deterministic protocol $A$, that is, the partial function $\mathcal{S}_i(A, \cdot)$ simply as a strategy and denoting it by $\mathcal{S}_i$.

A deterministic mechanism $A$ together with strategies $(\mathcal S_1(A,\cdot ),\ldots,\mathcal S_n(A,\cdot))$ realize allocation rule $f:V_1\times \cdots \times V_n \to \allocs$ and payment schemes $P_1,\ldots,P_n:V_1\times \cdots \times V_n \to \mathbb R^n$ if for every $(v_1,\ldots,v_n)\in V_1\times\cdots\times V_n$, it holds that $Leaf(\mathcal S_1(A,v_1),\ldots,\mathcal S_n(A,v_n))$ is labeled with the allocation $f(v_1,\ldots,v_n)$ and with the payment $P_i(v_1,\ldots,v_n)$ for every player $i$. 

\paragraph{Properties of Randomized Mechanisms}
To analyze the performance of our  mechanisms, we compare against the optimal social welfare.  Let $\mathbf{T} = (T_1, \dots ,T_n)$ denote a feasible allocation of the items (i.e., each item is allocated to at most one bidder) and $\allocs$ denote the set of all feasible allocations. 
 Then, we let $\opt(I) = \max_{\mathbf{T} \in \allocs}\sum_{i \in \bidders}{v_i(T_i)}$ denote the optimal social welfare achievable on a given instance $I=(v_1,\ldots,v_n)$ and $\mathbb{E}[W(\mechanism(I))]$ denote the \emph{expected} social welfare achieved by mechanism $\mechanism$ on instance $I$ (where the expectation is taken over the random choice of which deterministic protocol is to be run by the mechanism).  We then say that mechanism $\mechanism$ obtains an $\alpha$-approximation to the optimal social welfare on a class of instances $\mathcal{I}$ if \[\sup_{I \in \mathcal{I}} \frac{\opt(I)}{\mathbb{E}[W(\mechanism(I))]} \leq \alpha.\]

In addition to the objective of welfare maximization, our goal is to design randomized mechanisms  satisfying three key desiderata: (i) \emph{ex-post no negative transfers}; (ii)  \emph{ex-post individual rationality}; and (iii) \emph{universal obvious strategy-proofness}, i.e., ex-post obvious strategy-proofness.\footnote{
A randomized mechanism satisfies a given property \emph{ex-post} if that property holds for every deterministic mechanism that has a non-zero probability of being selected.} For simplicity, in the rest of the paper, we omit the prefix ``ex-post'' when referring to these properties. 

A mechanism $\mechanism$ with support $\mathcal A$ satisfies \emph{no negative transfers} if for every leaf in every deterministic protocol $A\in \mathcal A$, the payment of every player $i$ is at least zero.
A mechanism $\mathcal M$ with strategy profile $(\mathcal S_1,\ldots, \mathcal S_n)$ and support $\mathcal A$ satisfies 
 \emph{individual rationality} if,
for every deterministic protocol $A\in \mathcal A$,
the allocation rule $f:V_1\times \cdots \times V_n\to \allocs$ and payment schemes $P_1,\ldots,P_n$  that it realizes satisfy that 
for every  $(v_1,\ldots,v_n)\in V_1\times \cdots \times V_n$ and every player $i$:
$
v_i(f(v_1,\ldots,v_n))-P_i(v_1,\ldots,v_n)\ge 0
$.
Namely, a mechanism is ex-post individually rational if each player obtains non-negative utility for participating in the mechanism (hence, there is no incentive to avoid participation).  

We will now define obvious strategy-proofness in the universal sense. A mechanism $\mathcal M$ with support $\mathcal A$ and the strategies $(\mathcal S_1,\ldots,\mathcal S_n)$ is universally obviously strategy-proof if for every deterministic protocol $A\in \mathcal A$,
the strategies $(\mathcal S_1(A,\cdot),\ldots,\mathcal S_n(A,\cdot))$ are obviously dominant. 
It remains to define what it means for a strategy to be obviously dominant. Loosely speaking, a strategy 
$\mathcal S_i(A,\cdot)$
of bidder $i$ in a deterministic mechanism $A$ is \emph{obviously dominant} if,
each time player $i$ is called to speak, the worst-case outcome from sending the message defined by 
$\mathcal S_i(A,\cdot)$ 
is weakly better than the best-case outcome from any other strategy.
Thus, it is conceptually ``easy'' for player $i$ to find $\mathcal{S}_i$, understand its dominance and follow it. Despite their intuitive appeal, the definition of obviously
dominant strategies is quite subtle. Thus, 
below, we state only the basic properties of obviously strategy-proof mechanisms that we use to prove our results, and 
we defer the  definition to 
\cref{app-formalities}.


\paragraph{Generalized Ascending Auctions}
To prove some of our positive results, we  employ a specific form of auction, which we name generalized ascending auctions. In particular, some of our randomized mechanisms will be a randomization of such auctions. 

A \emph{generalized ascending auction} defines two possible allocations for each bidder $i$: the \emph{base} bundle $X_i^{B}$ and the \emph{potential} bundle $X_i^{P}$, where $X_i^{B} \subseteq X_i^{P}$. Each bidder $i$ initially ``holds'' the base bundle $X_i^{B}$ and is placed in the ``active'' set $\bidders^{A}$.  
Each $i \in \bidders^{A}$ faces a monotonically increasing price trajectory for receiving $X_i^{P}$ instead of $X_i^{B}$. Bidders drop out when the price for $X_i^{P}$ becomes too high, at which point they are awarded $X_i^{B}$ at a price of $0$ and removed from $\bidders^{A}$ (i.e., they become inactive). The auction terminates when it is feasible to allocate $X_i^{P}$ to all remaining active bidders $i \in \bidders^{A}$ and $X_i^{B}$ to all inactive bidders $i \notin \bidders^{A}$.  
For illustration, an auction where bidder $1$ always receives a fixed item $a$, while the remaining items $M\setminus \{a\}$ are allocated via an ascending auction among the other bidders, is a generalized ascending auction.

\begin{lemma}\label{lemma-partial}
    Every generalized ascending auction is obviously strategy-proof. 
\end{lemma}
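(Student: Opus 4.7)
The plan is to exhibit, for each bidder $i$, the natural ``threshold'' strategy $\mathcal{S}_i$ and verify that it satisfies the obvious-dominance comparison at every node at which $i$ is called to act. Specifically, letting $\tau_i := v_i(X_i^P) - v_i(X_i^B) \ge 0$ (the marginal value of the upgrade), strategy $\mathcal{S}_i$ stays in $\bidders^A$ as long as the current asking price $p$ for $X_i^P$ satisfies $p < \tau_i$, and drops out the first moment the clock reaches $p \ge \tau_i$. This is well-defined because $X_i^B \subseteq X_i^P$, because the clock is monotonically increasing, and because the only action available to an active bidder at her decision node is ``stay'' or ``drop''.

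Next I would fix an arbitrary node $u$ at which $i$ is called to speak, let $p$ denote the current price at $u$, and compare (i) the worst-case utility of following $\mathcal{S}_i$ from $u$ onward against (ii) the best-case utility of any alternative strategy that deviates at $u$. I split on whether $p < \tau_i$ or $p \ge \tau_i$. If $p < \tau_i$, then $\mathcal{S}_i$ stays; regardless of opponents' behavior, she will either end the auction with $X_i^P$ at some final price $p' \in [p,\tau_i]$, giving utility $v_i(X_i^P) - p' \ge v_i(X_i^B)$, or be forced out when the clock first reaches $\tau_i$, in which case she receives $X_i^B$ at $0$ and gets utility exactly $v_i(X_i^B)$. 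Hence the worst-case utility of $\mathcal{S}_i$ at $u$ equals $v_i(X_i^B)$. Any deviation that drops at $u$ yields utility $v_i(X_i^B)$ as well (no negative transfers, payment $0$ for base bundle), so the comparison holds.

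If instead $p \ge \tau_i$, then $\mathcal{S}_i$ drops at $u$ and yields utility $v_i(X_i^B)$. Any deviating strategy that stays at $u$ can, in the best case, terminate with $X_i^P$ at some price $p' \ge p \ge \tau_i$, giving utility $v_i(X_i^P) - p' \le v_i(X_i^P) - \tau_i = v_i(X_i^B)$; in the only other terminal outcome she is subsequently forced to drop and obtains $v_i(X_i^B)$. Thus the best-case utility over all deviations is at most $v_i(X_i^B)$, matching the guaranteed utility of $\mathcal{S}_i$. Invoking the definition of obvious dominance recalled in \cref{app-formalities}, we conclude that $\mathcal{S}_i$ is obviously dominant for every $i$, establishing the lemma.

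The only real subtlety I foresee is bookkeeping the ``worst case'' in the first branch: one has to note that whether bidder $i$ ultimately wins $X_i^P$ or is pushed out at the threshold depends entirely on the still-unrealized actions of other bidders and the feasibility-termination rule of the generalized ascending auction, and that in either terminal scenario her utility collapses to exactly $v_i(X_i^B)$. Once this is observed, the case analysis is essentially forced, and no reasoning about opponents' valuations is required — which is the defining feature one expects from an obviously dominant strategy.
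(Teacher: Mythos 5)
Your proof is correct and takes essentially the same approach as the paper: identify the threshold strategy (stay iff the price is below the marginal value $v_i(X_i^P)-v_i(X_i^B)$) and run the same two-branch comparison showing both worst-case-under-truth and best-case-under-deviation collapse to $v_i(X_i^B)$. The only cosmetic difference is which side gets the tie $p=\tau_i$ (you drop, the paper stays), which is immaterial since both actions yield utility exactly $v_i(X_i^B)$ at that point.
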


\subsection{Tools For Establishing Lower Bounds}
To prove impossibility results for randomized mechanisms, we employ an adaptation of Yao's lemma \cite{Yao83}, which is formalized in \cref{lem:yaos}. This approach allows us to focus on the performance of deterministic mechanisms with obviously dominant strategies when evaluated over a distribution of valuation profiles. We then restate a property of deterministic mechanisms from \cite{Ron24}, which we will use extensively in our impossibility proofs (\cref{lemma-bad-leaf-good-leaf}).


\paragraph{From Randomized Mechanisms to Deterministic Mechanisms}


To state and prove \cref{lem:yaos}, we need the following notations. 
Given a deterministic obviously strategy-proof mechanism $A$ that has obviously dominant strategies $(\mathcal S_1,\allowbreak \ldots, \mathcal S_n)$ and a valuation profile $(v_1,\ldots,v_n)$, we denote with $A(v_1,\ldots,v_n)$ the welfare of the allocation that $A$ outputs given $(\mathcal S_1(v_1),\allowbreak\ldots,\mathcal S_n(v_n))$.
Given a randomized mechanism  $\mathcal M$ 
which is a distribution over such deterministic mechanisms, let $\mathcal M(v_1,\ldots,v_n)$ be the expected welfare of the mechanism given the valuation profile $(v_1,\ldots,v_n)$. 
We denote with $OPT(v_1,\ldots,v_n)$ the optimal welfare.  
\begin{lemma}\label{lem:yaos} \cite{Yao83}
Fix a set of $n$ bidders with domains of valuations $V=V_1\times \cdots \times V_n$.
    Let $\mathcal D$ be a distribution over a set of valuation profiles taken from $V$ and fix an accuracy parameter $\alpha$.
    If for every deterministic mechanism $A$ that is obviously strategy-proof and satisfies no negative transfers and individual rationality, it holds that:
    $$
    \E_{(v_1,\ldots,v_n)\sim \mathcal D}\Big[\dfrac{A(v_1,\ldots,v_n)}{OPT(v_1,\ldots,v_n)}\Big] \le  \frac{1}{\alpha}
    $$
Then, every randomized mechanism $\mathcal M$
that is obviously strategy-proof and satisfies individual rationality and no negative transfers
satisfies that its approximation ratio in the worst case does not exceed $\alpha$. 
\end{lemma}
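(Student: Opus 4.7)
The plan is to prove this by a direct adaptation of Yao's minimax argument to the class of universally OSP mechanisms. The crucial structural observation is that, by the definition of universal OSP given in the preliminaries, any randomized mechanism $\mathcal{M}$ satisfying universal OSP, IR, and NNT is a distribution over deterministic protocols $A$ that are \emph{each individually} OSP, IR, and NNT; hence the hypothesis of the lemma applies pointwise to every $A$ in the support $\mathcal{A}$ of $\mathcal{M}$. This is the only place where one really uses the universality of OSP, as opposed to some weaker relaxation.

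First, I would fix an arbitrary such randomized mechanism $\mathcal{M}$ and swap the order of the two expectations. Using linearity of expectation together with the identity $\mathcal{M}(v_1,\ldots,v_n) = \E_{A \sim \mathcal{M}}[A(v_1,\ldots,v_n)]$, one obtains
\[
\E_{v\sim\mathcal{D}}\!\left[\frac{\mathcal{M}(v)}{OPT(v)}\right]
= \E_{A\sim\mathcal{M}}\,\E_{v\sim\mathcal{D}}\!\left[\frac{A(v)}{OPT(v)}\right]
\le \frac{1}{\alpha},
\]
where the inequality applies the lemma's hypothesis to each deterministic $A$ drawn from $\mathcal{M}$, and $v$ abbreviates $(v_1,\ldots,v_n)$.

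Next, I would observe that an expectation cannot be smaller than its minimum: since the average over $\mathcal{D}$ of $\mathcal{M}(v)/OPT(v)$ is at most $1/\alpha$, there must exist at least one profile $v^\star$ in the support of $\mathcal{D}$ with $\mathcal{M}(v^\star)/OPT(v^\star) \le 1/\alpha$, equivalently $OPT(v^\star)/\mathcal{M}(v^\star) \ge \alpha$. By the paper's definition of approximation ratio (namely $\sup_I OPT(I)/\E[W(\mathcal{M}(I))]$), this single instance $v^\star$ already witnesses that $\mathcal{M}$ cannot achieve an approximation better than $\alpha$, which is exactly the impossibility the lemma is designed to deliver.

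I do not anticipate a real technical obstacle: the proof is essentially a clean averaging/Fubini argument combined with the fact that the minimum is dominated by the average. The only subtle point worth stating explicitly in the write-up is the justification for invoking the hypothesis inside the outer expectation over $A \sim \mathcal{M}$—which, again, hinges on universal OSP ensuring that \emph{every} deterministic protocol in the support is itself an OSP-NNT-IR mechanism and therefore a legal input to the hypothesis.
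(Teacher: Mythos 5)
Your proposal is correct and follows essentially the same route as the paper: invoke the hypothesis for each deterministic mechanism in the support (justified by ex-post OSP/IR/NNT), swap expectations via linearity, and conclude via the average-dominates-minimum observation that some instance $v^\star$ in the support of $\mathcal{D}$ must have $\mathcal{M}(v^\star)/OPT(v^\star)\le 1/\alpha$. The paper presents the same argument in the opposite order (building up from the hypothesis rather than unfolding from $\E_{v\sim\mathcal D}[\mathcal M(v)/OPT(v)]$), but the content is identical.
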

Note the following subtlety in the statement of the lemma: we consider 
deterministic mechanisms that are obviously strategy-proof, individually rational and 
satisfy no negative transfers with respect to all the valuations in $V$, not only the valuations in the support of $\mathcal D$. Accordingly, our proof implies an impossibility for randomized mechanisms which are a probability distribution over deterministic mechanisms that satisfy all the above properties with respect to all the valuations in $V$. 
A familiar reader may anticipate fully the proof of \cref{lem:yaos}, which we write for completeness in \cref{app:missing-proofs-prems}. 

\paragraph{Proving Lower Bounds For Deterministic Mechanisms} 
Having utilized \cref{lem:yaos} to transition from analyzing randomized mechanisms to analyzing deterministic ones, our next step is to establish lower bounds for the latter. To that end, we invoke \cref{lemma-bad-leaf-good-leaf}, a structural property of deterministic obviously strategy-proof mechanisms originally presented in \cite{Ron24}. While the lemma’s statement may appear technical, it is a natural and intuitive property of obviously strategy-proof mechanisms that stems directly from their definition (see Figure~2 in \cite{Ron24} for an explanatory illustration).  


\begin{lemma}
\label{lemma-bad-leaf-good-leaf}
Fix a deterministic obviously strategy-proof mechanism $A$ with strategies $(\mathcal S_1,\ldots, \mathcal S_n)$
that realize an allocation rule and payment schemes $(f,P_1,\ldots,P_n):V_1\times\cdots \times V_n\to \allocs \times \mathbb R^n$.
Fix a player $i$, a vertex $u\in \mathcal N_i$
and
two valuation profiles $(v_i,v_{-i}),(v_i',v_{-i}')$ such that the following conditions hold simultaneously:
\begin{enumerate}
    \item $u\in Path(\mathcal S_i(v_i),\mathcal S_{-i}(v_{-i}))\cap Path(\mathcal S_i(v_i'), \mathcal S_{-i}(v_{-i}'))$. 
    \item $v_i(f(v_i,v_{-i}))-P_i(v_i,v_{-i})< 
v_i(f(v_i',v_{-i}'))-P_i(v_i',v_{-i}')$.
\end{enumerate}
Then,  the strategy  $\mathcal S_i$ dictates the same message for the valuations  $v_i$ and $v_i'$ at vertex $u$.   
\end{lemma}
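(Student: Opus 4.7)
The plan is a proof by contradiction. Suppose conditions 1 and 2 both hold but, contrary to the conclusion, $\mathcal S_i$ prescribes distinct messages at $u$ for the two valuations, i.e., $m := \mathcal S_i(v_i)(u) \neq \mathcal S_i(v_i')(u) =: m'$. The goal is to show that this directly violates the obvious-dominance of $\mathcal S_i$ at node $u$ when player $i$'s true valuation is $v_i$. I would invoke the formal definition from \cref{app-formalities}, but the informal characterization given in the excerpt already contains everything needed: since $\mathcal S_i$ is obviously dominant, the worst possible payoff player $i$ at valuation $v_i$ can attain by sending $m$ at $u$ (ranging over opponent behaviors consistent with reaching $u$ and over any continuation of $i$'s moves after $u$) must weakly exceed the best possible payoff she can attain at $v_i$ by sending any other message at $u$ (ranging the same way).

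The two hypotheses provide precisely the concrete realizations needed to instantiate this inequality. First, by condition 1, playing $\mathcal S_i(v_i)$ against $\mathcal S_{-i}(v_{-i})$ reaches $u$ and is thus an admissible continuation on the ``prescribed message'' side, producing the realized utility $v_i(f(v_i,v_{-i})) - P_i(v_i,v_{-i})$; hence the worst case under $m$ is at most this quantity. Second, again by condition 1, opponent behavior $\mathcal S_{-i}(v_{-i}')$ reaches $u$ as well, so from player $i$'s point of view the following is a legitimate deviation at $u$: send $m'\neq m$ and afterwards follow $\mathcal S_i(v_i')$ on the subtree rooted at the child of $u$ labeled by $m'$. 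Against $\mathcal S_{-i}(v_{-i}')$ this deviation realizes utility $v_i(f(v_i',v_{-i}')) - P_i(v_i',v_{-i}')$, evaluated at $i$'s \emph{true} valuation $v_i$, which therefore lower-bounds the best case under a deviation.

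Chaining the three inequalities yields
\[
v_i(f(v_i,v_{-i})) - P_i(v_i,v_{-i}) \ge v_i(f(v_i',v_{-i}')) - P_i(v_i',v_{-i}'),
\]
in direct contradiction with condition 2. The only delicate point I expect is quantifier alignment in the formal definition, which ranges jointly over opponent behaviors and over player $i$'s continuation past $u$; one has to verify that the restriction of $\mathcal S_i(v_i')$ to the subtree under $m'$ is an admissible continuation for $i$ to use in the deviation, and that $\mathcal S_{-i}(v_{-i}')$ is an admissible opponent profile at $u$ (which is what condition 1 buys us on the second profile). Once this is spelled out using the formal apparatus of \cref{app-formalities}, the argument is just the two-step instantiation above.
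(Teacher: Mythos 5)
Your proof is correct, and since the paper imports this lemma from \cite{Ron24} without re-proving it (remarking only that it ``stems directly from the definition''), there is no in-paper proof to compare against. Your argument is exactly the expected direct instantiation of the formal definition of obviously dominant behavior given in \cref{app-formalities} (Definition~2.1): take $B_i = \mathcal S_i(v_i)$, $B_{-i} = \mathcal S_{-i}(v_{-i})$, and $(B_1',\ldots,B_n') = (\mathcal S_i(v_i'), \mathcal S_{-i}(v_{-i}'))$; condition~1 of the lemma supplies the path condition, the contradiction hypothesis supplies the ``different messages at $u$'' condition, and the definition then yields the inequality that contradicts condition~2.

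One small simplification worth noting: your ``delicate point'' about quantifier alignment dissolves once you read the formal definition. You worried about whether ``the restriction of $\mathcal S_i(v_i')$ to the subtree under $m'$'' is an admissible continuation, but the definition does not decompose the deviation into a prefix-before-$u$ and a continuation-after-$u$. It quantifies over \emph{entire} alternative behavior profiles $(B_1',\ldots,B_n')$ subject only to $u \in Path(B_1',\ldots,B_n')$ and $B_i'$ sending a different message at $u$. So you may take $B_i' = \mathcal S_i(v_i')$ wholesale, with no surgery on the subtree needed; condition~1 already guarantees the path passes through $u$. The chain of three inequalities (worst case under $m$ is at most $U_1$; worst $\geq$ best; best is at least $U_2$) is fine, but it is also cleaner to skip the worst/best framing and apply the single universally-quantified inequality of the definition directly, as it already outputs $U_1 \geq U_2$.
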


\section{Multi-Unit Auctions}\label{sec-mua}
We begin with the setting of multi-unit auctions. A multi-unit auction comprises of $m$ \emph{identical} items, where the valuation of every player $i$ is given by $v_i:[m]\to \mathbb R^{+}$. 
We consider two classes of valuations: unknown single-minded bidders (\cref{subsec::mua-sm}) and bidders whose valuations exhibit decreasing marginal values (\cref{subsec::mua-decreasing}).

\subsection{Single-Minded Valuations} \label{subsec::mua-sm}
In this section, we address two types of single-minded bidders: known and unknown. We begin with some definitions and then provide background on each.  

A valuation $v_i$ is \emph{single-minded} if there is a scalar $x_i$ and a quantity $d_i$ such that $v_i(q)=x_i$
if $q\ge d_i$, and otherwise $v_i(q)=0$.
In particular, we investigate the setting of \emph{unknown single-minded bidders}, where both the demand $d_i$ and the value $x_i$ of every player $i$ are private information (if only $x_i$ is private, then it is a setting with \emph{known single-minded bidders}).

For deterministic obviously strategy-proof mechanisms, the class of  known single-minded bidders admit strategy-proof mechanisms that give ${\mathcal O}(\min\{\log m,\log n\})$ approximation to the welfare \cite{DGR14,CGS22}, and no mechanism  gives an approximation better than $\Omega(\sqrt{\log n})$ of the welfare \cite{FPV21}.  For unknown single-minded bidders, \cite{Ron24} has shown a tight lower bound of $\min\{m,n\}$. 

We note that our proposed mechanisms (both the first attempt in \cref{subsub::first-attempt} and the actual one in \cref{subsub::actual-ub-mua-sm}) are for both unknown and known obviously strategy-proof mechanisms, whilst the impossibility that we provide in \cref{subsec-lbs-22-sm-mua} holds solely for unknown single-minded bidders. We leave open the question of understanding the approximation power of randomized obviously strategy-proof mechanisms for known single-minded bidders.


\subsubsection{Upper Bound: a First Attempt}\label{subsub::first-attempt}
Since an ascending-price auction for the grand bundle of goods is the optimal deterministic OSP mechanism in this setting \cite{Ron24}, we begin with a natural randomized analogue of this approach.  Namely, we ``guess'' a bundle size and run an ascending price auction for bundles of this size.  

Formally,
let $k = \lceil \log{m} \rceil$.  Consider the mechanism \textsc{Random-Bundles} in which 
an integer bundle size $\ell =2^{j}$ is sampled uniformly at random from the set of powers of two $\ell \in \{1,2,4,8,\dots,\frac{m}{2}, m\}$. Given this fixed bundle size, every bidder either wins exactly $\ell$ items or wins no items at all.  Observe, then, that at most $\nicefrac{m}{\ell}$ bidders can win a bundle. We now increase the price of being served $\ell$ items, until at most $\nicefrac{m}{\ell}$ bidders remain. All these bidders win $\ell$ items and pay the price at which we stop, while the remainder get nothing and pay nothing.
Note that \textsc{Random-Bundles} is a generalized ascending auction, so by \cref{lemma-partial} it is OSP.


We now argue that this mechanism obtains a logarithmic approximation to the optimal social welfare.  Due to space constraints, we provide a proof sketch of the approximation guarantee and defer the complete proof 
\cref{app-missing-mua}.

\begin{theorem}\label{claim:ascending-bundles-approx}
{\textsc{Random-Bundles}} obtains an $\mathcal O(\log{m})$ approximation to the optimal social welfare.
\end{theorem}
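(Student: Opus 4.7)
The plan is to combine the obvious strategy-proofness of \textsc{Random-Bundles}---which follows immediately from \cref{lemma-partial}, since for each fixed bundle size the mechanism is a generalized ascending auction---with a standard bucketing-over-powers-of-two argument for the welfare guarantee. I would begin by fixing an optimal allocation and, without loss of generality, assuming every winning bidder $i$ is assigned exactly $d_i$ items, so that $\opt = \sum_{i \in N^*} x_i$ where $N^*$ is the set of optimal winners and $\sum_{i \in N^*} d_i \le m$.

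For each power of two $\ell \in \{1,2,4,\ldots,m\}$ I would partition $N^*$ by rounding demands up to the nearest power of two, letting $N^*_\ell = \{i \in N^* : 2^{\lceil \log d_i\rceil} = \ell\}$. For $\ell \ge 2$ every $i \in N^*_\ell$ satisfies $d_i > \ell/2$, so the packing constraint $\sum d_i \le m$ yields $|N^*_\ell| < 2m/\ell$; the case $\ell=1$ gives $|N^*_1| \le m$ directly. Since the sets $\{N^*_\ell\}$ partition $N^*$, we have $\opt = \sum_\ell \sum_{i \in N^*_\ell} x_i$, and by linearity of expectation it suffices to show that, conditional on the mechanism drawing bundle size $\ell$, its welfare is at least $\tfrac{1}{2}\sum_{i \in N^*_\ell} x_i$.

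To bound this conditional welfare, I would observe that every $i \in N^*_\ell$ has $v_i(\ell) = x_i$ (because $\ell \ge d_i$), and that awarding bundles of size $\ell$ to the top $\min(m/\ell, |N^*_\ell|)$ bidders of $N^*_\ell$ (ranked by $x_i$) is a feasible outcome. Since the top-$k$ sum of any nonnegative sequence is at least a $k/n$ fraction of the full sum, this feasible outcome contributes value at least $\tfrac{1}{2}\sum_{i \in N^*_\ell} x_i$. At bundle size $\ell$ the ascending auction terminates with the top $m/\ell$ bidders by $v_i(\ell)$ among those with positive value for $\ell$, so its welfare dominates the feasible outcome above. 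Averaging over the uniform choice of $\ell$ among the $k+1 = O(\log m)$ available bundle sizes then yields $\mathbb{E}[W(\mathcal{M})] \ge \opt/(2(k+1)) = \Omega(\opt/\log m)$.

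The step that I expect to require the most care is justifying the ``top $m/\ell$ by $v_i(\ell)$'' characterization of the ascending auction's winners in the unknown single-minded setting: I must argue that every bidder with $d_i > \ell$ has an obviously dominant strategy to drop out immediately (since she can never be served her demand, so staying in only risks negative utility), while bidders with $d_i \le \ell$ drop out exactly when the price strictly exceeds $x_i$. Together with a consistent tie-breaking rule this characterization holds uniformly for both the known and unknown single-minded cases, on which the welfare argument depends.
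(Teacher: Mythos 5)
Your decomposition is essentially the paper's: both partition the optimal winners into $O(\log m)$ buckets indexed by powers of two, and charge each bucket to the run of \textsc{Random-Bundles} that draws the corresponding bundle size. Your indexing (round $d_i$ up to the nearest power of two, so $N^*_\ell$ collects $d_i \in (\ell/2,\ell]$) is the same partition as the paper's groups $\mathcal{B}_p$ of demand in $[2^p{+}1, 2^{p+1}]$, just re-indexed by $\ell = 2^{p+1}$; your version has the advantage of handling $d_i = 1$ explicitly. The characterization of the ascending auction's winners in the unknown single-minded case --- that bidders with $d_i > \ell$ drop out immediately and the remaining bidders are ranked by value --- is correct and matches the paper's reasoning.

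One step is quantitatively off, though it does not affect the $O(\log m)$ conclusion. You argue that the top-$\min(m/\ell,|N^*_\ell|)$ bidders of $N^*_\ell$ capture at least a $\tfrac{1}{2}$-fraction of $\sum_{i\in N^*_\ell} x_i$ because $|N^*_\ell| < 2m/\ell$. But the auction serves only $\lfloor m/\ell\rfloor$ bidders, and the ratio $\lfloor m/\ell\rfloor / |N^*_\ell|$ need not be $\ge \tfrac12$: writing $q = \lfloor m/\ell\rfloor$, the packing constraint gives only $|N^*_\ell| \le 2q + 1$ in general. Concretely, with $m = 15$ and $\ell = 8$ one has $q = 1$ yet three bidders of demand $5$ each can all lie in $N^*_8$, so the auction captures only the top $1$ of $3$, a $\tfrac13$ fraction. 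The paper carries exactly this $\tfrac13$ constant through (its \cref{eq-comp} yields $\sum_{\mathcal{B}_p} v_i \le 3\sum_{\mathcal{A}_p} v_i$), arriving at $ALG \ge \text{OPT}/(3\log m)$. Replacing your $\tfrac12$ with $\tfrac13$ fixes the argument without changing its structure or the final $O(\log m)$ bound.
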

\begin{proof}[Proof Sketch]
First, observe that we can partition bidders into groups depending on their demand as follows:  for each bidder $i$, we place bidder $i$ in group $p$ if her demand  $d_i$ is between $2^{p}$ and $2^{p+1}-1$.
Since each bidder has demand at least $1$ and at most $m$, there are at most $\log{m}$ groups in total.  

Now, we  compare the portion of the optimal social welfare coming from bidders in group $p$ against the welfare \textsc{Random-Bundles} obtains when
 selecting bundles of size $2^{p+1}$.  On one hand, the optimal solution selects at most twice as many bidders appearing in group $p$ as the total number of bidders \textsc{Random-Bundles} serves conditioned on it selecting bundles of size $2^{p+1}$.  On the other hand, the bidders served in \textsc{Random-Bundles} conditioned on selecting bundles of size $2^{p+1}$ have the \emph{highest} value among bidders satisfied by receiving $2^{p+1}$ goods.  In total, we obtain an $\mathcal O(\log{m})$-approximation.
\end{proof}

\subsubsection{A Constant Upper Bound }\label{subsub::actual-ub-mua-sm}
Unfortunately, an approximation of $\mathcal O(\log m)$ appears to be the best achievable using the approach of randomly choosing fixed bundle sizes.  As such, we need to turn to a different approach.  We, thus, adopt the ``balanced sampling'' approach utilized extensively in other areas of mechanism design (see, e.g., \cite{feige2005competitive,goldberg2006competitive,dobzinski2012truthful,dobzinski2007two,badanidiyuru2012learning,bei2017worst})  in the form of Mechanism \ref{alg:single-minded},  below:

\begin{theorem}\label{thm-ub-mua-sm}
        There is a universally OSP mechanism for unknown single-minded bidders in a multi-unit auction  that gives a $400$-approximation to the optimal welfare.
\end{theorem}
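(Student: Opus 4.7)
My plan is to construct a randomized OSP mechanism $\mathcal M$ as a $50/50$ mixture of two OSP sub-mechanisms: a single ascending-price auction $\mathcal M_1$ for the grand bundle of $m$ items, and a balanced-sampling posted-price mechanism $\mathcal M_2$. Each will be OSP on every deterministic realization, so $\mathcal M$ is universally OSP. Sub-mechanism $\mathcal M_1$ is a generalized ascending auction (OSP by \cref{lemma-partial}) and captures at least $\max_i x_i$.

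For $\mathcal M_2$, I would independently place each bidder in a sample set $S$ or a serve set $T$ with probability $1/2$ each. Bidders in $S$ receive no items and pay nothing regardless of their reports, so truthful direct revelation of $(x_i, d_i)$ is vacuously obviously dominant for them. From the reports on $S$, the mechanism computes an estimate $\widehat{W}$ of $\opt$ and sets a per-unit price $\hat{p} = \widehat{W}/(cm)$ for a suitable constant $c$. The bidders of $T$ are then processed in a uniformly random order; bidder $i$ is offered her full demand $d_i$ of items at per-unit price $\hat{p}$, subject to the remaining capacity. Accepting is obviously dominant exactly when $x_i \geq \hat{p}\, d_i$: the worst case of accepting (rejection by exhausted capacity) yields utility $0$, matching the utility of declining; and when capacity allows, accepting yields $x_i - \hat{p}\, d_i \geq 0$, strictly better than declining.

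For the approximation analysis, I would case-split on the structure of $\opt$. If some single bidder has value $x_i \geq \opt/k$ for a small constant $k$, then $\mathcal M_1$ captures $\Omega(\opt)$, and running $\mathcal M_1$ with probability $1/2$ yields a constant approximation. Otherwise, no bidder contributes more than $\opt/k$ to any feasible allocation; a Chernoff-type concentration then shows that, with constant probability, the optimal-bidder welfare is split roughly evenly (up to a constant) between $S$ and $T$. Under this event, $\hat{p}$ is at the correct scale (up to constants), so that (i) the optimal bidders in $T$ whose per-unit value exceeds $\hat{p}$ still contribute $\Omega(\opt)$ in total welfare, and (ii) the random serving order guarantees that a constant fraction of their value is actually allocated before capacity is exhausted (a standard random-order posted-price argument). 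Giant-demand bidders ($d_i > m/2$) are absorbed into the single-winner case, since at most one such bidder can be served.

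The main obstacle is getting the constants to combine correctly: the Chernoff concentration, the scaling constant $c$ in $\hat{p}$, the fraction of $T$'s optimum that is ``price-acceptable,'' and the capacity-induced rejection under random order each contribute a constant factor of loss. Tracking them loosely yields the claimed $400$-approximation; sharpening these constants is largely routine bookkeeping, while the core conceptual work is the split between $\mathcal M_1$ and $\mathcal M_2$ together with the OSP verification of the posted-price stage under \emph{unknown} demands.
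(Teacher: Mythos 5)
Your proposal is essentially the same as the paper's: a $1/2$--$1/2$ mix of an ascending auction for the grand bundle and a ``sample-then-post-a-per-unit-price'' sub-mechanism, with the approximation analysis split on whether some single bidder already carries a constant fraction of $\opt$. The OSP verification you give for each branch matches the paper's (sampled bidders are never served, so truthful reporting is vacuously obviously dominant; served bidders face a take-it-or-leave-it price).

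Two points worth flagging. First, you invoke ``Chernoff-type concentration'' to argue that $\opt$ splits roughly evenly between $S$ and $T$ when there is no dominant bidder; the quantity $\opt(S)$ is not directly a sum of independent variables, so this needs a small extra observation (e.g.\ lower-bounding $\opt(S)$ by $\sum_{i\in S} v_i(O_i)$ for the globally optimal allocation $O$ and then applying Chernoff to that sum). The paper instead cites the subadditivity-based sampling lemma of Bei et al.\ (their Lemma~2.1), which gives the needed statement directly; your route works but deserves the intermediate step spelled out. Second, you do not need the random serving order or a ``random-order posted-price argument'' in the final step. The paper uses an arbitrary order and observes that if fewer than $m/2$ items are sold, any bidder blocked by insufficient capacity would need demand exceeding $m/2$, hence value at least $(m/2)\cdot\hat{p}\ge\opt/100$, contradicting the no-critical-bidder assumption; so in that regime there simply are no blocked bidders, and the loss reduces to bidders whose per-unit value lies below $\hat p$, which contributes at most $\opt/10$. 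This removes one of the constant-factor losses you were budgeting for and is also a cleaner argument than appealing to random order. With these adjustments your sketch is a faithful reconstruction of the paper's proof.
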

We prove \cref{thm-ub-mua-sm} by describing a randomized mechanism, i.e., Mechanism \ref{alg:single-minded}, and showing it is universally OSP (\cref{claim-mua-sm-mechanism-osp}) and indeed gives a $400$-approximation to the optimal social welfare (\cref{lem:single-minded-approx}).

\begin{algorithm2e}
\SetKwInOut{Input}{Input}
\Input{A set of bidders $\bidders$ and $m$ identical items}
 With probability $\nicefrac{1}{2}$: 

 \quad Bundle all $m$ items together and run an ascending price auction on the grand bundle

 With remaining probability $\nicefrac{1}{2}$:

 \quad Let $S \leftarrow \emptyset$, $U \leftarrow \emptyset$
 
 \quad Place each bidder independently in $S$ w.p. $\nicefrac{1}{2}$ and each bidder in $U$ with the remaining probability

 \quad ``Discard'' each bidder and $S$ and learn their valuation function

 \quad Compute the optimal solution among bidders only in $S$ and let $O$ denote the value of this solution


\quad  Iterate over the bidders (in an arbitrary) order, and for each bidder $i \in \bidders$, let them purchase their preferred bundle from the remaining items at a price of $\nicefrac{O}{10m}$ per item
 
 
 \caption{``\textsc{Single-Minded}''}
 \label{alg:single-minded}
\end{algorithm2e}

\begin{lemma}\label{claim-mua-sm-mechanism-osp}
    Mechanism \ref{alg:single-minded} is universally OSP.
\end{lemma}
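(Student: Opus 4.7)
The plan is to handle separately each deterministic protocol in the support of Mechanism~\ref{alg:single-minded}, since universal OSP is simply OSP of every such protocol. The random bits consist of the initial coin flip selecting a branch, and, in the second branch, the independent coins placing each bidder in $S$ or $U$. Fixing any realization of all these bits yields a specific deterministic protocol to analyze.

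For the first branch, the resulting protocol is a plain ascending-price auction on the grand bundle. I would present it as a generalized ascending auction with base bundle $X_i^B=\emptyset$ and potential bundle $X_i^P=M$ for every $i$, so that \cref{lemma-partial} immediately yields OSP. The obviously dominant behavior is the usual one: remain active as long as the posted price does not exceed $x_i$ (since $m\ge d_i$ for every single-minded bidder, the grand bundle is worth exactly $x_i$ to $i$).

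For the second branch, after sampling, the sets $S$ and $U$ are fixed by the realization, and I would argue obvious dominance separately for bidders in $S$ and for bidders in $U$. For $i\in S$, the allocation and payment at every reachable leaf are $(\emptyset, 0)$ regardless of what $i$ reports, so the worst-case utility from truthful reporting equals the best-case utility of any deviation, both being $0$; truth-telling is thus obviously dominant at every node where $i$ is queried.

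For $i\in U$, when $i$'s single decision node is reached, the path to that node has already fixed both the remaining set of items and the price $O/(10m)$, and $i$'s own outcome is a deterministic function of her choice at that node. Therefore the worst case and the best case of any behavior coincide, and the behavior ``buy $d_i$ items whenever $d_i$ items remain and $x_i \ge d_i\cdot O/(10m)$, otherwise buy nothing'' pointwise maximizes $i$'s utility and is hence obviously dominant. I do not foresee a serious obstacle here; the only point to double-check is that the price $O/(10m)$, though derived from reports of bidders in $S$, is fully determined by the path leading to $i$'s node and so introduces no residual uncertainty that $i$ needs to hedge against.
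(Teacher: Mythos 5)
Your proof is correct and takes essentially the same approach as the paper: invoke \cref{lemma-partial} for the grand-bundle branch, observe that sampled bidders receive nothing regardless of report, and observe that unsampled bidders face a fixed posted price on the remaining items and hence have an obviously dominant truthful strategy. You spell out the obviously dominant behaviors more explicitly than the paper does (e.g., the exact drop-out rule and the precise buy/no-buy decision), and you make explicit the useful observation that a bidder in $U$ faces no residual uncertainty once her node is reached because the price and remaining stock are pinned down by the path, but these are refinements of the same argument, not a different route.
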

\begin{proof}
Under the realization of randomness where we auction the grand bundle, we utilize a generalized ascending auction for the grand bundle which is OSP by \cref{lemma-partial}.  
If we run the uniform pricing auction, then no bidder in $S$ can win any items and, thus, reporting their valuations truthfully is an obviously dominant strategy. Bidders in $U$ select their preferred bundle of goods, so reporting their preferences truthfully is also an obviously dominant strategy for them.
\end{proof}


To prove the approximation factor of Mechanism \ref{alg:single-minded}, 
we define a bidder as \emph{critical} if her value for the grand bundle of goods is at least $\nicefrac{1}{100}$ of the total optimal social welfare. We also use the notation $OPT$ to denote the optimal social welfare for a given valuation profile $(v_1, \ldots, v_n)$ and define $OPT(S)$ as the optimal social welfare attainable by allocating all items exclusively among the bidders in a specified subset $S \subseteq N$.

\cref{lem:auction-sampling} establishes that the sampling phase yields 
\textquote{representative}
sampled and unsampled sets in the case that there are no critical bidders with \textquote{high enough} probability. Note that the proof utilizes a lemma of \cite{bei2017worst}. 
We defer the proof of \cref{lem:auction-sampling}  to 
\cref{subsec::proof-auction-sampling}. 

\begin{lemma}\label{lem:auction-sampling}
Consider an instance $(v_1,\ldots,v_n)$ of bidders
in a multi-unit auction\footnote{
Our lemma actually holds also for bidders in a combinatorial auctions,  
but for simplicity we state it solely for multi-unit auctions.} where no bidder is critical.  Suppose each bidder is placed in a ``sampled'' set $S$ with probability $\nicefrac 1 2$ and placed in an ``unsampled'' set $U$ with the remaining probability independently. Then the optimal welfare obtained from bidders in the sampled set $\text{OPT}(S)$ and the optimal welfare obtained from bidders in the unsampled set $\text{OPT}(U)$ are such that $\text{OPT}(S) \geq \nicefrac{\text{OPT}}{5}$ and $\text{OPT}(U) \geq \nicefrac{\text{OPT}}{5}$ with probability at least $\nicefrac{1}{2}$.
\end{lemma}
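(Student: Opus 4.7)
The plan is to reduce the claim to analyzing the welfare of the \emph{optimal} allocation restricted to each side of the random split, and then to invoke the balanced-sampling concentration lemma of \cite{bei2017worst} (the bulk of the work is then bookkeeping).

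First I would fix any optimal allocation $(O_1,\dots,O_n)$ with $\sum_i v_i(O_i)=\text{OPT}$ and set $W_S := \sum_{i\in S} v_i(O_i)$ and $W_U := \sum_{i\in U} v_i(O_i)$. Because the bundles $\{O_i\}_{i\in S}$ were pairwise disjoint and used at most $m$ items in the global optimum, they still form a feasible allocation when restricted to $S$, so $\text{OPT}(S)\geq W_S$; symmetrically $\text{OPT}(U)\geq W_U$. Hence it suffices to prove $\Pr\bigl[W_S\geq \text{OPT}/5 \text{ and } W_U\geq \text{OPT}/5\bigr]\geq 1/2$.

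Next I would observe that $W_S = \sum_i v_i(O_i)\,\mathbf{1}[i\in S]$ is a sum of independent, bounded, non-negative random variables. The indicators are i.i.d.\ Bernoulli$(1/2)$, so $\mathbb{E}[W_S]=\text{OPT}/2$. Crucially, since no bidder is critical, every weight satisfies $v_i(O_i)\leq v_i(M)\leq \text{OPT}/100$, i.e.\ no single term contributes more than a tiny fraction of the mean. This is precisely the setting handled by the sampling lemma of \cite{bei2017worst}, which I would invoke as a black box to conclude that $W_S\geq \text{OPT}/5$ with probability at least $3/4$; by symmetry the same bound holds for $W_U$, and a union bound gives both simultaneously with probability at least $1/2$.

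The main obstacle I foresee is calibrating constants: getting the factor $1/5$ with success probability $1/2$ requires the non-critical threshold $1/100$ to be tuned to the quantitative form of the Bei--Gravin--Lu--Tang statement. If their precise phrasing does not drop in cleanly, a fallback is to argue directly via Chebyshev: $\mathrm{Var}(W_S)\leq \tfrac14 \sum_i v_i(O_i)^2 \leq \tfrac14\cdot (\text{OPT}/100)\cdot \text{OPT}$, whence $\Pr[\,|W_S-\text{OPT}/2|\geq \text{OPT}/4\,]\leq 4/25$. This gives $\Pr[W_S\geq \text{OPT}/4]\geq 21/25$, and the union bound over $S$ and $U$ still clears the required $1/2$ with margin to spare. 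Either route delivers the lemma; no obvious-strategy-proofness subtlety enters, since the argument is purely probabilistic.
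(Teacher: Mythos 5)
Your proof is correct and, while it ultimately rests on the same Lemma~2.1 of \cite{bei2017worst} as the paper, it takes a genuinely cleaner route to get there. The paper applies that lemma directly to the set function $T \mapsto \mathrm{OPT}(T)$ and therefore first has to establish that this function is subadditive over the ground set of bidders; you instead fix one optimal allocation $(O_1,\dots,O_n)$, work with the \emph{additive} function $T \mapsto W_T := \sum_{i\in T} v_i(O_i)$ (which is trivially subadditive), and observe $\mathrm{OPT}(T)\ge W_T$ since the restriction of a global optimum to $T$ remains feasible. This removes the need for the subadditivity argument entirely and makes the non-critical hypothesis $v_i(O_i)\le v_i(M)\le \mathrm{OPT}/100$ line up immediately with the $k=100$ required by the lemma. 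Your Chebyshev fallback is a further simplification: since $W_S + W_U = \mathrm{OPT}$ identically, the target event is just $\mathrm{OPT}/4 \le W_S \le 3\,\mathrm{OPT}/4$, and the variance bound $\mathrm{Var}(W_S)\le \tfrac14\sum_i v_i(O_i)^2 \le \mathrm{OPT}^2/400$ gives $\Pr[\,|W_S - \mathrm{OPT}/2|\ge \mathrm{OPT}/4\,]\le 1/25$ in one step, no union bound and no external lemma at all (and with a better constant than $1/5$). Two small slips worth correcting: Lemma~2.1 of \cite{bei2017worst} already states the \emph{joint} event $f(T_1),f(T_2)\ge \tfrac{k-1}{4k}f(S)$ with probability $\ge 1/2$, so no extra union bound is needed on that route; and your Chebyshev arithmetic gives $1/25$, not $4/25$ — a harmless error since $1/25\le 4/25$, but the cleaner figure is worth stating.
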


With Lemma \ref{lem:auction-sampling} in hand, we are ready to prove that: 

\begin{lemma}\label{lem:single-minded-approx}
Mechanism \ref{alg:single-minded} obtains 
a $400$-approximation
to the optimal social welfare.
\end{lemma}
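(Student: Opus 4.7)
My plan is to split the analysis based on whether the instance contains a \emph{critical} bidder---a bidder whose value for the grand bundle is at least $\opt/100$---mirroring the two branches of the mechanism.

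If a critical bidder exists, the grand-bundle ascending auction branch (run with probability $\nicefrac{1}{2}$) awards the bundle to the highest-valued bidder, so the conditional welfare is at least $\opt/100$ and the expected welfare is at least $\opt/200$. If no critical bidder exists, I would invoke \cref{lem:auction-sampling}: with probability at least $\nicefrac{1}{2}$ over the sampling, both $\opt(S)\geq \opt/5$ and $\opt(U)\geq \opt/5$ simultaneously. Conditioning on this good event and on the posted-price branch being selected (a joint event of probability at least $\nicefrac{1}{4}$), I aim to show the posted-price subroutine achieves welfare at least $\opt/100$, which then yields the overall $\opt/400$ lower bound on expected welfare after combining with the critical-bidder case.

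The core step is lower bounding the welfare of the posted-price subroutine. Let $p=\opt(S)/(10m)$ be the per-unit price, let $H\subseteq U$ consist of the ``high'' bidders with $v_i\geq p\cdot d_i$, and let $B\subseteq H$ be the bidders who actually purchase. I plan to establish the following dichotomy: either $\sum_{i\in B}d_i\geq m/2$, or every bidder in $H\setminus B$ has demand strictly greater than $m/2$. The dichotomy follows because any high bidder who fails to buy must encounter fewer than $d_i$ items upon arrival, and the number of remaining items only decreases over the run of the mechanism. In the first alternative, the welfare is at least the revenue $p\cdot m/2=\opt(S)/20\geq \opt/100$. In the second, any such $i\in H\setminus B$ would satisfy $v_i\geq p\cdot d_i> pm/2\geq \opt/100$, contradicting non-criticality; hence $H\setminus B$ must be empty, i.e., $B=H$. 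To handle this residual case I would split the value of $\opt(U)$ into low-per-unit-value bidders (total contribution at most $pm=\opt(S)/10\leq \opt/10$) and high-per-unit-value bidders. The latter contribute at least $\opt(U)-\opt/10\geq \opt/10$, all lie in $H=B$, so the welfare is at least $\opt/10\geq \opt/100$.

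I expect the main obstacle to be identifying the correct dichotomy and correctly handling the residual $B=H$ case via the $\opt(U)$ split; once that is in hand, the remaining work---combining probabilities to obtain $\tfrac{1}{4}\cdot\opt/100=\opt/400$ and comparing with the $\opt/200$ bound from the critical-bidder case---is routine.
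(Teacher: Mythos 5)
Your proposal is correct and follows essentially the same approach as the paper's proof: the same critical-bidder split, the same case analysis on whether at least $m/2$ items are sold (your $\sum_{i\in B}d_i\geq m/2$), and the same argument that any large-demand blocked bidder would have to be critical. Your ``high/low per-unit value'' split of $\opt(U)$ in the residual case is precisely the paper's ``blocked/small'' decomposition relative to $\vec q$, arriving at the same constants and the same final bound.
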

\begin{proof}
First we handle the case that there is a critical bidder, i.e., 
a bidder whose value for the grand bundle is at least $\nicefrac{OPT}{100}$. 
The existence of a critical bidder $i$ implies that allocating $i$ the grand bundle gives a $\nicefrac{1}{100}$-approximation to the optimal welfare.  Since we run an ascending auction on the grand bundle with probability $\nicefrac{1}{2}$ we obtain a $\nicefrac{1}{200}$-approximation  
in this case.

We now turn to the case that there does not exist a critical bidder.
In this case, Lemma \ref{lem:auction-sampling} implies that with probability $\nicefrac{1}{2}$ over the random sampling of bidders,
the optimal welfare achievable by the sampled set is within a factor $5$ of the optimal welfare.
As such when we proceed to the pricing phase, we set a price per item $p \in [\nicefrac{\text{OPT}}{50m}, \nicefrac{\text{OPT}}{10m}]$. Since we run a uniform price auction with probability $\nicefrac{1}{2}$, these conditions hold simultaneously with probability at least $\nicefrac{1}{4}$.
We perform case analysis on the number of items sold during this phase. 

Suppose the uniform pricing phase sells at least $\nicefrac{m}{2}$ goods.  In this case, since an unsampled bidder buying $t$ goods spends at least $\frac{t\text{OPT}}{50m}$, their value for the purchased bundle is at least $\frac{t\text{OPT}}{50m}$.  
Then, the total value of all bidders who purchase goods is at least $\frac{m}{2}\cdot\frac{\text{OPT}}{50m} = \frac{\text{OPT}}{100}$. Altogether, since we run uniform sampling with probability $\nicefrac{1}{2}$ and the estimation is
\textquote{good} with probability $\nicefrac{1}{2}$, we obtain  a $400$-approximation to the welfare.

Now, we analyze the complementary case 
where the uniform pricing phase sells fewer than $\nicefrac{m}{2}$ goods. For that, let $\vec{q}=(q_1,\dots,q_n)$ be the optimal allocation if the items are divided only among the bidders in $U$ (clearly, every bidder not in $U$ is allocated zero items).  

For that, observe that if  a bidder is allocated in $\vec q$ but not allocated in the allocation of the algorithm, it necessarily happens because of one of the following reasons. The first possibility is that the bidder is \emph{blocked}, meaning that   the number of items that she wants $d_i$ is not available when it is her turn. The other reason is that the bidder is \emph{small}, meaning that  $v_i(d_i)\le p\cdot d_i$, i.e., 
$d_i$ items are available  but
the 
price set is too high for her. 
Note that every bidder $i$ that 
is neither blocked nor small, is also satisfied in the algorithm.  

We will bound the loss of welfare from both kinds of bidders conditioned on our assumptions
of running a uniform price auction and having a ``balanced''  sampling  (i.e., both $OPT(S) \geq OPT/5$ and $OPT(U) \geq OPT/5$).
First, we show that blocked bidders do not exist, so they do not cause any loss of welfare. 
For that, we remind that by assumption the uniform phase sells less than $\frac
m 2$ items. Thus, a  blocked bidder wants to 
purchase strictly more than $\frac m 2
$ goods at a price of at least $\frac{\text{OPT}}{50m}$ and thus has a value of at least $\frac{\text{OPT}}{100}$, meaning that she is critical.  By assumption, there are no critical bidders, which implies that there are no blocked bidders, so they incur no loss of welfare.

We will now bound the welfare that comes from small bidders in $\vec q$. Note that the price $p$ that we set per item is at most $\nicefrac{OPT}{10m}$ and that the number of items allocated to small bidders in $\vec q$ is at most $m$. Since by definition $v_i(d_i)\le p\cdot d_i$, those bidders contribute to the welfare of $\vec q$ at most
$\frac{OPT}{10}$. 
Since the welfare of $\vec{q}$ is at least $\frac{OPT}{5}$,  bidders who are neither blocked or small contribute at least $\frac{OPT}{10}$ to the welfare of $\vec q$. Since 
Mechanism \ref{alg:single-minded} allocates to these bidders their desired  number of items, it achieves welfare of at least $\frac{OPT}{10}$. 
As we said before, this 
depends on finding a \textquote{good} partition of $U,S$ and running a uniform price auction which occurs in probability $\nicefrac{1}{4}$, 
so overall the expected welfare of the mechanism in this case is at least $\frac{OPT}{40}$. 

Combining all cases, we conclude that the expected welfare of Mechanism \ref{alg:single-minded} is at least $\frac{OPT}{400}$, thereby completing the proof.
\end{proof}
We note that
Mechanism \ref{alg:single-minded} also achieves a constant-approximation for bidders with 
{decreasing marginal valuations}. We discuss this in greater detail in \cref{subsec::mua-decreasing}.


\subsubsection{Lower Bound}\label{subsec-lbs-22-sm-mua}

\begin{theorem}\label{thm-mua-sm-lb}
For a multi-unit auction with $m\ge 2$ items and $n \ge 2$ unknown single-minded bidders,
no randomized mechanism that satisfies OSP, individual rationality and no negative transfers has approximation better than $\nicefrac{6}{5}$.
\end{theorem}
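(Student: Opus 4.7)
The plan is to apply \cref{lem:yaos} to reduce the impossibility for randomized mechanisms to one for deterministic OSP mechanisms averaged against a distribution $\mathcal{D}$ over valuation profiles, and then to use the structural restriction of \cref{lemma-bad-leaf-good-leaf} to bound the performance of any such deterministic mechanism.

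First I would construct $\mathcal{D}$, working in the minimal hard case of $n=2$ bidders and $m=2$ items (the extension to arbitrary $m, n \geq 2$ follows by padding with dummy items and dummy bidders of value $0$). The distribution should be supported on a small collection of unknown single-minded profiles whose optimal allocations differ qualitatively: some profiles on which the unique way to achieve $\opt$ is to split the two items between the bidders, and some on which $\opt$ requires awarding the grand bundle to a particular bidder. The weights on the profiles should be tuned so that no matter how an OSP mechanism resolves the tension between these allocation patterns at its root, its expected ratio $A/\opt$ is at most $5/6$.

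Next, I would fix an arbitrary deterministic OSP mechanism $A$ with IR and NNT and examine its root node $u$, at which some bidder $i$ is first called to speak. I would invoke \cref{lemma-bad-leaf-good-leaf} to constrain the mechanism's behavior at $u$: for each pair of valuations of bidder $i$ in the support of $\mathcal{D}$, either the strategies at $u$ merge (so the mechanism proceeds identically after $u$ on both valuations) or they diverge (forcing structural constraints on the payment and allocation leaves of each subtree). IR and NNT then sharply limit what leaves can look like; in particular, whenever bidder $i$ is allocated zero items her payment must be zero, which restricts the ``punishment'' the mechanism can use to make types self-identify. A case analysis over these root-level configurations---and, where needed, the subsequent node addressing the other bidder---should show that in every configuration there is at least one profile in the support of $\mathcal{D}$ on which $A$ strictly underperforms $\opt$, so that the chosen weights force $\E_{\mathcal{D}}[A/\opt] \leq 5/6$.

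The hardest step will be identifying the right $\mathcal{D}$. Unknown single-minded bidders permit OSP mechanisms that delay commitment: a bidder can report a demand at the root and still end up with zero or more items depending on subsequent reports, so for many natural two-profile distributions an OSP mechanism realizes $\opt$ exactly, giving only a trivial lower bound. To force the $6/5$ ratio I expect to need several profiles arranged so that the allocations required for $\opt$ push simultaneously against IR and NNT on both branches emanating from the root, leaving the mechanism no OSP-compatible way to avoid welfare loss on at least one profile in the support.
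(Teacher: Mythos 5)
Your high-level plan is the same as the paper's: apply Yao's lemma to reduce to deterministic OSP mechanisms averaged over a hard distribution $\mathcal{D}$, then use \cref{lemma-bad-leaf-good-leaf} together with IR and NNT to contradict the existence of a deterministic mechanism that succeeds on $\mathcal{D}$. However, there are two concrete gaps.

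First, you never construct $\mathcal{D}$, and you yourself flag this as ``the hardest step.'' The entire weight of the argument rests on exhibiting a concrete set of profiles with concrete weights. The paper uses five profiles over two single-minded types per bidder --- roughly, a pair of ``small'' and ``large'' valuations for a single item, and a pair of ``small'' and ``large'' valuations for the grand bundle --- weighted so that a near-optimal mechanism is forced, on each profile, into a specific allocation (split the items, give all to bidder $1$, give all to bidder $2$), and then derives contradictory payment bounds from IR and NNT. Without pinning down these profiles, the weights, and the required allocations (the analogue of \cref{claim-mua-sm-instances}), the argument cannot get off the ground; in particular, naive two-profile distributions genuinely fail here, as you observe, because an OSP mechanism can delay commitment.

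Second, your application of \cref{lemma-bad-leaf-good-leaf} is slightly miscast. You describe examining the root node and, for each pair of bidder $i$'s valuations, asking whether the strategies ``merge'' or ``diverge,'' with constraints in the diverge case. The lemma actually gives a one-directional conclusion: under its hypotheses (shared vertex on both paths plus a strict utility comparison), the two valuations are \emph{forced} to send the same message. The paper's contradiction is not obtained at the root but at the \emph{first vertex of divergence} along the paths taken by a carefully chosen set of valuations $\mathcal{V}_1 \times \mathcal{V}_2$. One first argues that different profiles must reach different leaves (hence some vertex of divergence exists), then shows that at that vertex the lemma forces all the relevant valuations of the speaking bidder to send the same message --- a direct contradiction. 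Restricting attention to the root and ``the subsequent node'' does not capture this, since bidder $1$ may speak several times before bidder $2$ does, and divergence may occur at a deep internal node. You would need to reorganize the case analysis around the first divergence point rather than the root.
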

We note that that the proof uses a distribution of valuations that is based on the 
construction of \cite{Ron24}.  
\begin{proof}
We assume the domain $V_i$ of each bidder consists of single-minded valuations with values in  $\{0,1,\ldots,k^4\}$, where $k$ is an arbitrarily large number. 
Our example has only two bidders,  but it can be  extended to any number of bidders by adding bidders with the all-zero valuation. 

To  use 
    our variant of Yao's principle, 
we define a distribution $\mathcal D$ of valuation profiles
    and show that no deterministic mechanism that satisfies
    obvious strategy-proofness, individual rationality and no negative transfers with respect to $V=V_1\times V_2$ has approximation better than $\frac{6}{5}$ in expectation over $\mathcal D$. 
    To define it, 
consider the following valuations:  
\[
\renewcommand{\arraystretch}{1}
\begin{aligned}
v_i^{\text{one}}(x) &= \begin{cases}
1 & x \geq 1,\\
0 & \text{else.}
\end{cases}
\quad
v_i^{\text{ONE}}(x) = \begin{cases}
k^2 + 1 & x \geq 1,\\
0 & \text{else.}
\end{cases}\\[4pt]
v_i^{\text{all}}(x) &= \begin{cases}
k^2 & x=m,\\
0 & \text{else.}
\end{cases}
\quad
v_i^{\text{ALL}}(x) = \begin{cases}
k^4 & x=m,\\
0 & \text{else.}
\end{cases}
\end{aligned}
\]
Consider the following valuation profiles:
\[
\begin{aligned}
& I_1 = (v_1^{\text{one}}, v_2^{\text{one}}) \quad 
I_2 = (v_1^{\text{all}}, v_2^{\text{one}}) \quad 
I_3 = (v_1^{\text{ONE}}, v_2^{\text{ALL}}) \quad 
I_4 = (v_1^{\text{one}}, v_2^{\text{all}}) \quad
I_5 = (v_1^{\text{ALL}}, v_2^{\text{ONE}})
\end{aligned}
\]
Denote with $\mathcal D$ the distribution over 
profiles where the probability of  $I_1$ is $\frac{1}{3}$, and the probability of $I_2,I_3,I_4$ and $I_5$ is $\frac{1}{6}$ each. 
Observe that:
\begin{claim}\label{claim-mua-sm-instances}
Every deterministic mechanism that has approximation strictly better than $\nicefrac{6}{5}$ necessarily satisfies all of the following conditions:
\begin{enumerate}
    \item Given the valuation profile $I_1=(v_1^{one},v_2^{one})$, the mechanism
    allocates at least one item to every bidder. \label{condi-1}
    \item Given the valuation profile $I_2 = (v_1^{\text{all}}, v_2^{\text{one}})$, the mechanism allocates all items to bidder $1$.
    \label{condi-3}
\item Given the valuation profile $I_3 = (v_1^{\text{ONE}}, v_2^{\text{ALL}})$, the mechanism allocates all items to bidder $2$. \label{condi-2}
\item Given the valuation profile $I_4 = (v_1^{\text{one}}, v_2^{\text{all}})$, the mechanism allocates all items to bidder $2$. 
\item Given the valuation profile $I_5 =(v_1^{\text{ALL}}, v_2^{\text{ONE}})$, the mechanism allocates all items to bidder $1$. \label{condi-5}
\end{enumerate}
\end{claim}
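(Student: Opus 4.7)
The plan is to prove each of the five conditions independently via a direct case analysis on the corresponding valuation profile $I_j$. Since the claim concerns the worst-case approximation of a deterministic mechanism $A$, if $A$ achieves approximation strictly better than $\nicefrac{6}{5}$, then on every single profile $I_j$ the realized welfare must satisfy $A(I_j) > \tfrac{5}{6}\,\opt(I_j)$. Thus it suffices to show, for each condition separately, that any allocation violating it produces welfare strictly less than $\tfrac{5}{6}\,\opt(I_j)$ on profile $I_j$. The distribution $\mathcal{D}$ itself plays no role at this stage --- it is only invoked afterwards, when the claim is combined with \cref{lem:yaos} and \cref{lemma-bad-leaf-good-leaf} to derive \cref{thm-mua-sm-lb}.

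For $I_1=(v_1^{\text{one}},v_2^{\text{one}})$, using $m\ge 2$ one has $\opt(I_1)=2$ (split the items so each bidder gets at least one). Any allocation that gives some bidder zero items yields welfare at most $1<\tfrac{5}{3}=\tfrac{5}{6}\opt(I_1)$, establishing Condition~\ref{condi-1}. For $I_2=(v_1^{\text{all}},v_2^{\text{one}})$, the key observation is that bidder~$1$ is single-minded with demand $m$, so unless she receives all $m$ items her contribution to welfare is zero; hence any allocation not giving bidder~$1$ the grand bundle has welfare at most $1$, while $\opt(I_2)=k^2$. For large $k$ this violates the ratio, forcing Condition~\ref{condi-3}.

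The remaining three conditions are symmetric: in each of $I_3$, $I_4$, and $I_5$, exactly one bidder is single-minded for the grand bundle with a dominant value ($k^2$ or $k^4$), so failing to allocate her the full bundle collapses the achievable welfare to at most the other bidder's value ($1$ or $k^2+1$), which is far smaller than $\tfrac{5}{6}$ of the optimum whenever $k$ is sufficiently large. This yields Conditions~\ref{condi-2} through~\ref{condi-5}.

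I do not expect any substantive obstacle; the only subtlety is to be explicit that ``approximation strictly better than $\nicefrac{6}{5}$'' is a pointwise condition across every instance in the domain $V=V_1\times V_2$ (not an expectation over~$\mathcal{D}$), and to note that taking $k$ arbitrarily large ensures every comparison of the form ``welfare of a bad allocation versus $\tfrac{5}{6}\opt(I_j)$'' goes the right way. All five cases then follow by a uniform template: identify the optimal allocation, observe which single-minded valuation would be left unsatisfied by any deviation, and bound the resulting welfare loss.
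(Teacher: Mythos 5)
Your per-instance case analysis is correct, but your meta-level interpretation of the claim is not, and this matters. You assert that ``the distribution $\mathcal{D}$ itself plays no role at this stage'' and read ``approximation strictly better than $\nicefrac{6}{5}$'' as a pointwise (worst-case) condition on every profile. The paper does the opposite: in the proof of \cref{thm-mua-sm-lb}, the claim is applied to a deterministic mechanism $A$ that is only assumed to satisfy $\E_{\mathcal D}[A/\opt] > \nicefrac{5}{6}$, i.e., approximation better than $\nicefrac{6}{5}$ \emph{in expectation over} $\mathcal D$. Worst-case $>\nicefrac{5}{6}$ implies expected $>\nicefrac{5}{6}$ but not conversely, so your version of the claim has a strictly stronger premise and would not license the deduction actually made in the theorem's proof, where $A$ is only guaranteed a good expected ratio.

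The missing step is precisely the aggregation against the probabilities in $\mathcal D$, which is also where the constant $\nicefrac{5}{6}$ comes from. Under the paper's reading, if condition~1 fails then the ratio on $I_1$ drops to at most $\nicefrac{1}{2}$, so $\E_{\mathcal D}[A/\opt] \le \tfrac{1}{3}\cdot\tfrac{1}{2} + \tfrac{2}{3}\cdot 1 = \tfrac{5}{6}$; and if any of conditions~2--5 fails then the ratio on the corresponding $I_j$ tends to $0$ as $k\to\infty$, so $\E_{\mathcal D}[A/\opt] \to \tfrac{1}{6}\cdot 0 + \tfrac{5}{6}\cdot 1 = \tfrac{5}{6}$. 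The weights $\tfrac{1}{3}$ and $\tfrac{1}{6}$ are chosen exactly so that every single violated condition produces the \emph{same} cap of $\tfrac{5}{6}$ --- this is what gives the theorem its tight $\nicefrac{6}{5}$ impossibility bound. In the worst-case reading, by contrast, the threshold $\tfrac{5}{6}$ is arbitrary (for $I_1$ you could replace it by anything above $\tfrac{1}{2}$), and the resulting bound would only be the much weaker ``no approximation better than $2$.'' Your per-profile bounds ($\le 1$ vs.\ $\opt=2$ on $I_1$, and welfare collapse to $O(k^{-2})\cdot\opt$ on $I_2,\dots,I_5$) are exactly the ingredients the paper needs; you just need to finish by weighting them by $\mathcal D$, rather than declaring $\mathcal D$ irrelevant.
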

The proof of \cref{claim-mua-sm-instances} is straightforward: if a deterministic mechanism violates one of the conditions, then since $k$ is arbitrarily large, then it  extracts at most $\frac{5}{6}$ of the optimal welfare in expectation over the distribution $\mathcal D$.

Fix a deterministic mechanism $A$ and strategies
$(\mathcal S_1,\mathcal S_2)$ that are individually rational and satisfy no negative transfers with respect to the valuations $V_1\times V_2$ 
and give approximation better than $\frac{6}{5}$
in expectation over the valuation profiles in the distribution $\mathcal D$. 
Let  $(f,P_1,P_2)$ be the allocation and payment rules that the mechanism $A$ and the strategies $(\mathcal S_1,\mathcal S_2)$ jointly realize. Assume towards a contradiction that $A$ and $(\mathcal S_1,\mathcal S_2)$ are OSP.

To analyze the mechanism, we focus on the following subsets of the domains of the valuations:
$
\mathcal{V}_1=\{v_1^{one},v_1^{ONE},v_1^{ALL}\}$ and  $\mathcal{V}_2=\{v_2^{one},v_2^{ONE},v_2^{ALL}\}$.\footnote{The cautious reader may have noticed that $\mathcal V_i$ does not contain  $v_i^{all}$. This is intentional, and it will be clear from the remainder of the proof why including this valuation is not necessary.} 
We begin by observing that there necessarily exists a vertex $u$, and valuations $v_1,v_1' \in \mathcal{V}_1$, and  $v_2,v_2' \in \mathcal{V}_2$ such that $(\mathcal{S}_1(v_1), \mathcal{S}_2(v_2))$ diverge at vertex $u$. This follows from \cref{claim-mua-sm-instances}, which implies that the mechanism $A$ must output different allocations for different valuation profiles in $\mathcal{V}_1 \times \mathcal{V}_2$. Consequently, not all valuation profiles end up in the same leaf, meaning that divergence must occur at some point. 
 
 Let $u$ be the first vertex in the protocol such that 
 $(\mathcal{S}_1(v_1),\mathcal{S}_2(v_2))$ and $(\mathcal{S}_1(v_1'),\mathcal{S}_2(v_2'))$ diverge, i.e., dictate different messages. 
Note that by definition this implies that $u\in Path(\mathcal{S}_1(v_1),\mathcal{S}_2(v_2))\cap Path(\mathcal{S}_1(v_1'),\mathcal{S}_2(v_2'))$ and that either bidder $1$ or bidder $2$ sends different messages for the valuations in $\mathcal{V}_1$ or $\mathcal V_2$, respectively. 
Without loss of generality, we assume that bidder $1$ sends different messages, meaning that there exist $v_1,v_1'\in \mathcal{V}_1$ such that $\mathcal S_1(v_1)$ and $\mathcal S_1(v_1')$ dictate different messages at vertex $u$.  
We remind that $\mathcal{V}_1=\{v_1^{one},v_1^{ONE},v_1^{all}\}$, so the  following claims jointly imply a contradiction, completing the proof: 
\begin{claim}\label{claim-oneone-same}
    The strategy $\mathcal S_1$ dictates the same message at vertex $u$ for the valuations $v_1^{one}$ and $v_1^{ONE}$. 
\end{claim}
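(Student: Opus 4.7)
The plan is to invoke Lemma~\ref{lemma-bad-leaf-good-leaf} with $v_i = v_1^{\text{one}}$ and $v_i' = v_1^{\text{ONE}}$. It suffices to exhibit $v_2, v_2' \in V_2$ such that (i) $u \in Path(\mathcal S_1(v_1^{\text{one}}),\mathcal S_2(v_2)) \cap Path(\mathcal S_1(v_1^{\text{ONE}}),\mathcal S_2(v_2'))$ and (ii) $v_1^{\text{one}}(f(v_1^{\text{one}},v_2)) - P_1(v_1^{\text{one}},v_2) < v_1^{\text{one}}(f(v_1^{\text{ONE}},v_2')) - P_1(v_1^{\text{ONE}},v_2')$. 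The lemma then yields $\mathcal S_1(v_1^{\text{one}}) = \mathcal S_1(v_1^{\text{ONE}})$ at $u$, establishing the claim.

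For condition~(ii), I would exploit the asymmetry between $v_1^{\text{one}}$ (value $1$ for a single item) and $v_1^{\text{ONE}}$ (value $k^2+1$ for a single item): both are single-minded for the same quantity, so bidding $v_1^{\text{ONE}}$ is strictly more aggressive. I would pick $v_2$ so that bidder~$1$ wins no items in $(v_1^{\text{one}},v_2)$---forcing payment $0$ by NNT and individual rationality, and hence utility $0$ under $v_1^{\text{one}}$---while picking $v_2'$ so that bidder~$1$ wins at least one item in $(v_1^{\text{ONE}},v_2')$ at a price strictly below $1$, yielding strictly positive utility under $v_1^{\text{one}}$. Individual rationality for the bid $v_1^{\text{ONE}}$ alone only bounds this price by $k^2+1$; pushing it below $1$ would chain obvious dominance of $\mathcal S_1(v_1^{\text{one}})$ at a suitable bidder-$1$ vertex (where the worst case of bidding $v_1^{\text{one}}$ must weakly exceed the best case of deviating to $\mathcal S_1(v_1^{\text{ONE}})$) with the allocation patterns enforced on the mechanism by Claim~\ref{claim-mua-sm-instances} over the profiles $I_1,\ldots,I_5$.

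For condition~(i), I would leverage that $u$ lies on the path of a specific profile $(v_1^a,v_2^a) \in \mathcal V_1 \times \mathcal V_2$ and that $u$ is the \emph{first} vertex where this profile diverges from its paired companion; in particular, the bidder-$2$ messages along the prefix to $u$ are exactly those dictated by $\mathcal S_2(v_2^a)$. Using the richness of $V_2$ (all single-minded valuations with values up to $k^4$), I can select $v_2, v_2'$ whose $\mathcal S_2$-strategies reproduce these prefix messages, steering the protocol to $u$ even after substituting bidder~$1$'s valuation by $v_1^{\text{one}}$ or $v_1^{\text{ONE}}$, respectively. The main obstacle is condition~(i) in the subcase where neither $v_1^{\text{one}}$ nor $v_1^{\text{ONE}}$ is the specific bidder-$1$ valuation appearing in the diverging pair: there one must argue that bidder-$1$ messages at all bidder-$1$ vertices before $u$ remain consistent when the valuation is swapped, which should follow from minimality of $u$ and iterated applications of Lemma~\ref{lemma-bad-leaf-good-leaf} at earlier vertices.
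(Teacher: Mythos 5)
Your high-level plan is right---invoke Lemma~\ref{lemma-bad-leaf-good-leaf} at vertex $u$---but you have the roles of the two valuations reversed in a way that breaks the argument, and you misunderstand how condition~(i) is established.

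For condition~(ii), you set $v_i = v_1^{\text{one}}$ and try to show $v_1^{\text{one}}(f(v_1^{\text{one}},v_2)) - P_1(\ldots) < v_1^{\text{one}}(f(v_1^{\text{ONE}},v_2')) - P_1(\ldots)$. Under $v_1^{\text{one}}$, winning an item is worth only $1$, so to make the right-hand side strictly positive you must bound the price \emph{strictly} below $1$. Individual rationality under $v_1^{\text{one}}$ or $v_1^{\text{ONE}}$ only gives $P_1 \le 1$ or $P_1 \le k^2+1$ respectively, and your proposed ``chaining obvious dominance with the allocation patterns'' is not a proof---it is precisely the kind of payment bound that is hard to extract and that the paper avoids needing. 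The paper instead takes $v_i = v_1^{\text{ONE}}$ (the high valuation) and the profiles $(v_1^{\text{ONE}},v_2^{\text{ALL}})$ and $(v_1^{\text{one}},v_2^{\text{one}})$: by Claim~\ref{claim-mua-sm-instances} and IR/NNT (packaged as Lemma~\ref{lemma-small-pay}), bidder~$1$ wins nothing at price $0$ in the first profile (utility $0$ under $v_1^{\text{ONE}}$) and wins at least one item at price $\le 1$ in the second (utility $\ge k^2$ under $v_1^{\text{ONE}}$), giving a gap of size $k^2$ rather than a fragile sub-unit margin. Your choice of $v_i$ forces you to need exactly the hard-to-obtain strict payment bound; the paper's choice makes the strict inequality automatic from $\le 1$.

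For condition~(i), you propose constructing $v_2, v_2'$ that ``steer the protocol to $u$'' and worry about iterating Lemma~\ref{lemma-bad-leaf-good-leaf} on earlier vertices when the relevant valuations are not in the diverging pair. This is unnecessary and not how the paper's setup works: $u$ is defined as the \emph{first} vertex at which \emph{any} two behavior profiles arising from $\mathcal V_1 \times \mathcal V_2$ diverge. Hence every profile in $\mathcal V_1 \times \mathcal V_2$, including $(v_1^{\text{one}},v_2^{\text{one}})$ and $(v_1^{\text{ONE}},v_2^{\text{ALL}})$, has its path pass through $u$ by minimality. Condition~(i) is immediate once $u$ is chosen this way; no bootstrapping is needed. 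So the gap is concrete: a reversed choice of $v_i$ that requires an unprovable payment bound, plus a misreading of how $u$'s minimality already delivers the path condition.
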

\begin{claim}\label{claim-ONE-ALL-same}
        The strategy $\mathcal S_1$ dictates the same message at vertex $u$ for the valuations $v_1^{ONE}$ and $v_1^{ALL}$.
\end{claim}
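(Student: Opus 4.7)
The plan is to apply \cref{lemma-bad-leaf-good-leaf} with bidder~1's valuations $v_i = v_1^{ONE}$ and $v_i' = v_1^{ALL}$, together with bidder~2's valuations $v_{-i} = v_2^{ALL}$ and $v_{-i}' = v_2^{one}$, so that the two relevant profiles are $I_3 = (v_1^{ONE}, v_2^{ALL})$ and $(v_1^{ALL}, v_2^{one})$. To verify the first hypothesis of the lemma, I would use the structural observation that $u$, being the first divergence of any two paths for profiles in $\mathcal{V}_1 \times \mathcal{V}_2$, forces every valuation in $\mathcal{V}_1$ (respectively $\mathcal{V}_2$) to dictate the same message at every bidder~1 (resp.\ bidder~2) node strictly above $u$; otherwise an earlier divergence would be exhibited, contradicting the minimality of $u$. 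Consequently every profile in $\mathcal{V}_1 \times \mathcal{V}_2$, including $I_3$ and $(v_1^{ALL}, v_2^{one})$, has a path that reaches $u$.

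For the utility comparison, the LHS is $v_1^{ONE}(f(I_3)) - P_1(I_3) = 0$: Condition~3 of \cref{claim-mua-sm-instances} allocates no items to bidder~1 at $I_3$, and individual rationality together with no negative transfers force her payment to be zero. The RHS is $v_1^{ONE}(f(v_1^{ALL}, v_2^{one})) - P_1(v_1^{ALL}, v_2^{one})$, and the key step is to show that this is strictly positive by arguing that bidder~1 receives all $m$ items at $(v_1^{ALL}, v_2^{one})$ for a payment of at most $k^2$. This in turn uses an auxiliary application of \cref{lemma-bad-leaf-good-leaf} to the pair $(v_1^{ALL}, v_1^{all})$, with $v_2^{one}$ fixed on both sides: by Condition~2 of \cref{claim-mua-sm-instances} bidder~1 receives all $m$ items at $I_2 = (v_1^{all}, v_2^{one})$, and individual rationality bounds $P_1(I_2) \le v_1^{all}(m) = k^2$, so bidder~1 with true valuation $v_1^{ALL}$ pretending to be $v_1^{all}$ would obtain utility at least $k^4 - k^2$. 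Either the two paths $(\mathcal{S}_1(v_1^{ALL}), \mathcal{S}_2(v_2^{one}))$ and $(\mathcal{S}_1(v_1^{all}), \mathcal{S}_2(v_2^{one}))$ never diverge (in which case $(v_1^{ALL}, v_2^{one})$'s outcome coincides with $I_2$'s), or they diverge at some vertex $u_0$, at which \cref{lemma-bad-leaf-good-leaf} applied with $v_{-i} = v_{-i}' = v_2^{one}$ forbids the deviation and therefore forces $v_1^{ALL}(f(v_1^{ALL}, v_2^{one})) - P_1(v_1^{ALL}, v_2^{one}) \ge k^4 - k^2$; since $v_1^{ALL}$ is valued at $k^4$ only when the full allocation is received and no negative transfers rule out negative payments, this pins $f_1(v_1^{ALL}, v_2^{one}) = m$ and $P_1(v_1^{ALL}, v_2^{one}) \le k^2$. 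Either way, the RHS is at least $(k^2 + 1) - k^2 = 1 > 0$, so the strict utility inequality in \cref{lemma-bad-leaf-good-leaf} holds and we conclude that $\mathcal{S}_1$ dictates the same message at $u$ for $v_1^{ONE}$ and $v_1^{ALL}$.

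The hard part will be the auxiliary payment bound on the profile $(v_1^{ALL}, v_2^{one})$, which lies outside the support of $\mathcal{D}$. One must import the out-of-$\mathcal{V}_1$ valuation $v_1^{all}$ into the OSP comparison and then carefully handle the reachability hypothesis of the auxiliary lemma---this works out cleanly because $v_2^{one}$ is fixed on both sides, so the relevant divergence node $u_0$ is, by definition, a common vertex on both paths.
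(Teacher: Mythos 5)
Your proof is correct and takes essentially the same approach as the paper: it compares the same two profiles $(v_1^{ALL},v_2^{one})$ and $(v_1^{ONE},v_2^{ALL})$ through \cref{lemma-bad-leaf-good-leaf} at vertex $u$, with the same utility computation $1>0$. The only difference is how the auxiliary fact (bidder $1$ wins everything at $(v_1^{ALL},v_2^{one})$ and pays at most $k^2$, which is \cref{lemma-small-pay} item 3) is derived: the paper invokes the implication that OSP is stronger than DSIC and applies a standard DSIC deviation argument from $v_1^{ALL}$ to $v_1^{all}$, whereas you re-derive that same DSIC-type inequality from scratch via the contrapositive of \cref{lemma-bad-leaf-good-leaf} with $v_{-i}=v_{-i}'=v_2^{one}$ and a case split on whether the two paths diverge; both routes land on the same inequality and neither is materially harder, though the paper's phrasing is cleaner since it doesn't need the divergence case analysis.
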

We include the proofs for the sake of completeness, {but note that} they are identical to the proofs provided in \cite{Ron24}.
The proofs make use of the following lemma, which is a collection of observations about the allocation and the payment scheme of player $1$:    
\begin{lemma}\label{lemma-small-pay}
    The allocation rule $f$ and the payment scheme $P_1$ of bidder $1$ satisfy that:
    \begin{enumerate}
        \item Given $(v_1^{one},v_{2}^{one})$, bidder $1$ wins at least one item and pays at most $1$.  \label{item-1}
        \item  Given $(v_1^{ONE},v_2^{ALL})$, bidder $1$ gets the empty bundle and pays zero.   \label{item-2}
        \item Given $(v_1^{ALL},v_2^{one})$, bidder $1$ wins all the items and pays at most $k^2$. \label{item-3}  
    \end{enumerate}
\end{lemma}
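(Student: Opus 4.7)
Items 1 and 2 follow directly from \cref{claim-mua-sm-instances} together with individual rationality and no negative transfers. Condition~1 of \cref{claim-mua-sm-instances} forces bidder~$1$ to win at least one item under $(v_1^{\text{one}}, v_2^{\text{one}})$, so $v_1^{\text{one}}(f_1) = 1$, and individual rationality bounds $P_1 \le 1$. Condition~3 forces bidder~$1$ to get the empty bundle under $(v_1^{\text{ONE}}, v_2^{\text{ALL}})$ (all items go to bidder~$2$), so individual rationality gives $P_1 \le 0$ and no negative transfers gives $P_1 \ge 0$, yielding $P_1 = 0$.

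Item 3 is the main obstacle because the profile $(v_1^{\text{ALL}}, v_2^{\text{one}})$ is \emph{not} in the support of $\mathcal D$, and therefore \cref{claim-mua-sm-instances} does not directly constrain its allocation. My plan is to ``transfer'' information from the nearby profile $I_2 = (v_1^{\text{all}}, v_2^{\text{one}})$ via an OSP deviation argument that exploits the value gap $v_1^{\text{ALL}}(\text{all items}) = k^4 \gg k^2 = v_1^{\text{all}}(\text{all items})$. From condition~2 of \cref{claim-mua-sm-instances} together with individual rationality applied with $v_1^{\text{all}}$, under $(v_1^{\text{all}}, v_2^{\text{one}})$ bidder~$1$ wins all items and pays at most $k^2$.

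To transfer this to $(v_1^{\text{ALL}}, v_2^{\text{one}})$, compare the paths $Path(\mathcal S_1(v_1^{\text{ALL}}), \mathcal S_2(v_2^{\text{one}}))$ and $Path(\mathcal S_1(v_1^{\text{all}}), \mathcal S_2(v_2^{\text{one}}))$. If they coincide, the conclusion is immediate at their shared leaf. Otherwise they diverge; since $\mathcal S_2(v_2^{\text{one}})$ is common to both, the first divergence node $u_0$ must be one where bidder~$1$ is called and both paths share every prior edge. Now apply OSP for bidder~$1$ at $u_0$ with true valuation $v_1^{\text{ALL}}$: the worst-case utility of following $\mathcal S_1(v_1^{\text{ALL}})$ from $u_0$ weakly dominates the best-case utility of any deviation. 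As a deviation I take the behavior that plays $\mathcal S_1(v_1^{\text{all}})$'s prescription from $u_0$ onward; its best case (over opponent behaviors consistent with reaching $u_0$, which includes $\mathcal S_2(v_2^{\text{one}})$) lands at the leaf of $I_2$, whose utility under the \emph{true} valuation $v_1^{\text{ALL}}$ equals $k^4 - P_1 \ge k^4 - k^2$. Hence the outcome under $(v_1^{\text{ALL}}, v_2^{\text{one}})$ itself yields utility at least $k^4 - k^2$; since $v_1^{\text{ALL}}$ is single-minded for the grand bundle, this forces bidder~$1$ to win all items (otherwise $v_1^{\text{ALL}}(f_1) = 0$), and then $k^4 - P_1 \ge k^4 - k^2$ caps $P_1 \le k^2$. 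The only delicate point that will require a sanity-check against the formal OSP definition in \cref{app-formalities} is that the ``best case'' quantifier ranges over opponent behaviors consistent with reaching $u_0$; modulo that standard reading, the argument is a clean OSP swap.
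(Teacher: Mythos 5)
Your proposal is correct and takes essentially the same route as the paper's proof. Parts~1 and~2 are identical, and for part~3 you unfold the OSP condition at the first divergence node to derive the inequality $v_1^{\text{ALL}}(f(v_1^{\text{ALL}},v_2^{\text{one}}))-P_1(v_1^{\text{ALL}},v_2^{\text{one}})\ge k^4-k^2$, whereas the paper simply invokes the general fact that OSP implies dominant-strategy incentive compatibility to get the same inequality in one line; the substantive idea (use the pinned-down allocation and payment at $(v_1^{\text{all}},v_2^{\text{one}})$ plus the value gap to force the allocation and cap the payment at $(v_1^{\text{ALL}},v_2^{\text{one}})$) is the same.
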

The lemma is a direct consequence of the approximation guarantees of the mechanism, together with the fact that it is obviously strategy-proof and satisfies individual rationality and no negative transfers.  We use \cref{lemma-small-pay} 
 now and defer the proof to \cref{subsec::proof-lemma-small-pay}.
\begin{proof}[Proof of \cref{claim-oneone-same}]
     Note that by Lemma \ref{lemma-small-pay} item \ref{item-1}, $f(v_1^{one},v_2^{one})$ allocates at least one item to player $1$ and $P_1(v_1^{one},v_2^{one})\le 1$. Therefore:
\begin{equation}\label{eq-good-leaf1}
 v_1^{ONE}(f(v_1^{one},v_2^{one}))-P_1(v_1^{one},v_2^{one})\ge k^2   
\end{equation}
 In contrast, by part \ref{item-2} of Lemma \ref{lemma-small-pay},   $f(v_1^{ONE},v_2^{ALL})$ allocates no items to player $1$ and $P_1(v_1^{ONE},v_2^{ALL})=0$, so:
 \begin{equation}\label{eq-bad-leaf1}
 v_1^{ONE}(f(v_1^{ONE},v_2^{ALL}))-P_1(v_1^{ONE},v_2^{ALL})= 0   
\end{equation}
Combining inequalities (\ref{eq-good-leaf1}) and (\ref{eq-bad-leaf1}) gives:
\begin{align*}
  v_1^{ONE}(f(v_1^{ONE},v_2^{ALL}))-P_1(v_1^{ONE},v_2^{ALL})< 
  v_1^{ONE}(f(v_1^{one},v_2^{one}))-P_1(v_1^{one},v_2^{one})  
\end{align*}
We remind that vertex $u$ belongs in $Path(\mathcal S_1(v_1^{one}),\mathcal S_2(v_2^{one}))$ and also in
$Path(\mathcal{S}_1(v_1^{ONE}),\allowbreak\mathcal{S}_2(v_2^{ALL}))$. Therefore, Lemma \ref{lemma-bad-leaf-good-leaf} gives that the strategy $\mathcal S_1$ dictates the same message for  $v_1^{one}$ and $v_1^{ONE}$ at vertex $u$. 
\end{proof}
\begin{proof}[Proof of \cref{claim-ONE-ALL-same}]
    Following the same approach as in the proof of Claim \ref{claim-oneone-same}, note that by \cref{lemma-small-pay} \cref{item-3}: 
\begin{equation}\label{break-align}
v_1^{ONE}(f(v_1^{ALL},v_2^{one}))-P_1(v_1^{ALL},v_2^{one}) \ge k^2+1-k^2     
\end{equation}
Also, by \cref{lemma-small-pay}  \cref{item-2}:
\begin{equation}\label{break-align2}
  v_1^{ONE}(f(v_1^{ONE},v_2^{ALL}))  
-P_1(v_1^{ONE},v_2^{ALL})=0   
\end{equation}
Combining \cref{break-align} and \cref{break-align2}:
\begin{equation*}
    v_1^{ONE}(f(v_1^{ONE},v_2^{ALL}))  
-P_1(v_1^{ONE},v_2^{ALL})<  v_1^{ONE}(f(v_1^{ALL},v_2^{one}))-P_1(v_1^{ALL},v_2^{one})
\end{equation*}
Given the above inequality with the fact that 
vertex $u$ belongs in $Path(\mathcal S_1(v_1^{ALL}),\mathcal S_2(v_2^{one}))$ and in
$Path(\mathcal{S}_1(v_1^{ONE}),\mathcal{S}_2(v_2^{ALL}))$,
Lemma \ref{lemma-bad-leaf-good-leaf}
implies  
that the strategy $\mathcal S_1$ dictates the same message for the valuations $v_1^{ONE}$ and $v_1^{ALL}$ at vertex $u$. 
\end{proof}

\end{proof}

\subsection{Decreasing Marginal Valuations}\label{subsec::mua-decreasing}
In this section, we consider valuations that exhibit decreasing marginals. 
A valuation $v:[m]\to \mathbb R$  has decreasing marginals if  for every quantity $j\in [m]$, $v(j)-v(j-1)\ge v(j+1)-v(j)$. 
This is the only class of multi-parameter valuations for which the power of deterministic obviously strategy-proof mechanisms is not yet understood: 
the best-to-date mechanism achieves an $\mathcal O(\log n)$ approximation, and no mechanism for two bidders and two items obtains approximation better than $\sqrt 2$ \cite{GMR17}. We begin by showing that if we allow randomization:
\begin{theorem}\label{thm:decreasing-marginals}
There exists a randomized obviously strategy-proof mechanism that achieves a $400$\allowbreak-approximation to the optimal social welfare for bidders with decreasing marginal valuations.
\end{theorem}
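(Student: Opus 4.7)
The plan is to show that Mechanism~\ref{alg:single-minded} itself, used verbatim, already achieves a $400$-approximation for bidders with decreasing marginal valuations. Universal obvious strategy-proofness transfers directly from Lemma~\ref{claim-mua-sm-mechanism-osp}: in the grand-bundle branch we still run a generalized ascending auction (OSP by Lemma~\ref{lemma-partial}), and in the uniform-pricing branch each unsampled bidder simply selects her favorite available quantity at a posted per-item price, which is obviously dominant for any valuation class.

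For the approximation, I would follow the same high-level case split as the proof of Lemma~\ref{lem:single-minded-approx}. Call a bidder \emph{critical} if $v_i(m)\ge \text{OPT}/100$. If any critical bidder exists, the grand-bundle branch alone yields $\Omega(\text{OPT})$, giving a $200$-approximation overall. Otherwise, Lemma~\ref{lem:auction-sampling} (whose statement already covers arbitrary monotone valuations) guarantees that with probability at least $1/2$ the posted per-item price $p = O/(10m)$ lies in $[\text{OPT}/(50m),\,\text{OPT}/(10m)]$. When the uniform-pricing round then sells at least $m/2$ items, individual rationality forces realized welfare $\ge (m/2)\cdot p \ge \text{OPT}/100$, exactly as before.

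The substantive adaptation occurs when fewer than $m/2$ items are sold. For each bidder $i$, let $q_i(p)$ denote her preferred quantity at price $p$ under unlimited supply, i.e.\ the largest $q$ with $v_i(q)-v_i(q-1)\ge p$, and let $q_i^\ast$ denote her allocation in the welfare-optimum restricted to $U$. Decreasing marginals together with monotonicity give the key inequality
\[
v_i(q_i^\ast) \;\le\; v_i(q_i(p)) + p\cdot \max\{0,\, q_i^\ast-q_i(p)\},
\]
obtained by writing $v_i(q_i^\ast)$ as a telescoping sum of marginals and bounding those beyond index $q_i(p)$ by $p$. Summing over $i\in U$, using $\sum_i \max\{0,q_i^\ast-q_i(p)\}\le m$, and invoking $p\le \text{OPT}/(10m)$ and $\sum_i v_i(q_i^\ast)\ge \text{OPT}/5$, yields $\sum_i v_i(q_i(p)) \ge \text{OPT}/5 - \text{OPT}/10 = \text{OPT}/10$.

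The main obstacle that remains is to rule out \emph{blocking}, i.e.\ to show that every bidder in $U$ actually receives her full $q_i(p)$ items during the sequential purchase round, so that the algorithm's realized welfare matches $\sum_i v_i(q_i(p))$. This is the step where the single-minded argument must be re-checked for decreasing marginals. The plan is as follows: since in this subcase the algorithm sells strictly fewer than $m/2$ items in total, any blocked bidder must desire more than $m/2$ items at price $p\ge \text{OPT}/(50m)$; but by decreasing marginals her value for $q_i(p)$ items is at least $p\cdot q_i(p) > \text{OPT}/100$, making her critical and contradicting the case assumption. Hence no blocking occurs, the algorithm extracts at least $\text{OPT}/10$ conditional on a balanced sample and the uniform-pricing branch firing, and multiplying by the two independent probability-$1/2$ events yields an unconditional $\text{OPT}/400$ lower bound on expected welfare.
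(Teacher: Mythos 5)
Your proof is correct and follows essentially the same approach as the paper: the same case split (critical bidder / balanced sample / items sold above or below $m/2$), the same reliance on Lemma~\ref{lem:auction-sampling}, and the same decreasing-marginals bound — your telescoping inequality $v_i(q_i^\ast)\le v_i(q_i(p))+p\cdot\max\{0,q_i^\ast-q_i(p)\}$ is just a restatement of the paper's split of $\vec q$'s welfare into ``high'' (marginal $>p$) and ``low'' (marginal $\le p$) contributions. The one place you are slightly more explicit than the paper is ruling out blocking via a criticality contradiction, which is a clean way of making precise the paper's terse ``every bidder in $U$ could have bought additional items.''
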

We note that the mechanism described in the proof of \cref{thm:decreasing-marginals} corresponds to Mechanism \ref{alg:single-minded}, which was previously introduced for the class of single-minded bidders. The proof of \cref{thm:decreasing-marginals} is deferred to \cref{subsec::proof--dec-mua}.

Having established a constant-factor randomized obviously strategy-proof mechanism for decreasing marginal valuations, a natural question arises: is this result tight? Specifically, can we establish impossibility results for this class? To further deepen our understanding of this class of valuations, we now describe a phenomenon that highlights the challenges in proving such impossibilities.

\subsubsection{A Non-Monotonicity Effect for Bidders with Decreasing Marginal Values}\label{subsub::non-mono}
In this section, we describe a non-monotonicity phenomenon that occurs for obviously strategy-proof mechanisms in multi-unit auction with decreasing marginal valuations.  In contrast to the rest of the paper, we focus on deterministic mechanisms rather than randomized ones. 
The phenomenon is that    \emph{deterministic} obviously strategy-proof mechanisms for bidders with decreasing marginal valuations, 
 adding an item improves the approximation power:
\begin{theorem}\label{thm-lb-mua-dec}
    For $2$ bidders and $2$ items, no obviously strategy-proof \emph{deterministic} mechanism that satisfies individual rationality and no negative transfers gives approximation better than $2$.  
\end{theorem}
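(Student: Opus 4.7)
The plan is to adapt the template of the paper's proof of \cref{thm-mua-sm-lb} to the decreasing-marginal domain with $m=2$ items. For any deterministic OSP mechanism satisfying individual rationality and no negative transfers, I would pin down forced allocations on a short list of valuation profiles, assume without loss of generality that the mechanism first queries bidder~$1$, identify a vertex $u$ at which the induced strategies must diverge given the forced allocations, and invoke \cref{lemma-bad-leaf-good-leaf} to deduce a contradiction.

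Concretely, I would parameterize the construction by a large value $H$ and consider three decreasing-marginal valuations: $v^A_H = (H, 2H)$ (additive, high value), $v^U_H = (H, H)$ (unit-demand, high value), and a small anchor $v^s = (1, 1)$.  A claim analogous to \cref{claim-mua-sm-instances} would state that, for any mechanism with approximation strictly better than $2$ and $H$ sufficiently large, the allocation on $(v^A_H, v^s)$ must be $(2,0)$, on $(v^s, v^A_H)$ must be $(0,2)$, and on $(v^U_H, v^U_H)$ must be $(1,1)$; indeed, with only three feasible allocation vectors available, every alternative leaves welfare at most $H+1$ against optimum $2H$, giving ratio arbitrarily close to $2$.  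These three forced allocations are pairwise distinct, so the three strategy-induced root-to-leaf paths cannot all coincide, and at least one pair must diverge at some vertex $u$ of the protocol tree (assumed without loss of generality to lie in $\mathcal{N}_1$).  Mirroring the structure of \cref{claim-oneone-same} and \cref{claim-ONE-ALL-same}, I would apply \cref{lemma-bad-leaf-good-leaf} to the pair of forced profiles sharing $u$: the payments at the forced leaves are pinned into narrow intervals by IR and NNT combined with the approximation constraint, and the resulting utility comparison strictly favors deviating for one of the valuations -- forcing bidder~$1$'s strategy to actually dictate the \emph{same} message at $u$ for two valuations whose subsequent allocations must differ, a contradiction.

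The main obstacle is making the strict payoff inequality required by \cref{lemma-bad-leaf-good-leaf} robust to the continuous nature of the decreasing-marginal domain: unlike the single-minded case, the ``one-each'' allocation is near-optimal on a large fraction of instances, so the forced-allocation argument bites only asymptotically in $H$.  Augmenting the family with intermediate-size valuations such as $v^M = (M, 2M)$ for $1 \ll M \ll H$ will likely be necessary, since their only OSP-consistent placements at the root push the ratio towards $2$ on either $(v^M, v^s)$ (if they are grouped with the additive-high valuations) or $(v^M, v^A_H)$ (if they are grouped with the anchor).  The core tension to exploit is that when a bidder's marginal value for the first item is $H$, the mechanism cannot distinguish the truly additive case (wanting both items against a small bidder) from the truly unit-demand case (content with one) early enough to allocate correctly against every possible opponent, and the restriction to three allocation vectors (afforded only by $m=2$) leaves no room to hedge.
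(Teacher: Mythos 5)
Your high-level template is the right one and matches the paper's proof: force a few allocations, identify a divergence vertex, bound payments using IR and NNT, and invoke \cref{lemma-bad-leaf-good-leaf} to force equal messages and obtain a contradiction. However, the concrete family $\{v^s = (1,1),\ v^U_H = (H,H),\ v^A_H = (H,2H)\}$ does not work, because $v^U_H$ and $v^A_H$ share the same scale $H$, and this breaks both halves of the argument. First, the ``bad leaf'' fails: on the profile $(v^U_H, v^A_H)$ the welfare of both the allocation $(1,1)$ and the allocation $(0,2)$ is exactly $2H$, so an approximation ratio better than $2$ forces nothing, and you cannot conclude that bidder~$1$ gets the empty bundle there. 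Second, the ``good leaf'' fails: at the profile $(v^A_H,v^s)$ the only payment bound IR gives you is $P_1(v^A_H,v^s)\le 2H$, so the utility bidder~$1$ would get there under $v^U_H$ is $\ge H - 2H$, which is not positive. The paper needs \emph{four} well-separated scales rather than two: $v^{one}=(1,1)$, an auxiliary $\hat v=(k,2k)$, $v^{ONE}=(4k,4k)$, and $v^{ALL}=(k^2,2k^2)$ with $1 \ll k < 4k \ll k^2$. The auxiliary $\hat v$ is not added to the chain $\mathcal V_1$ at all; it is used once, in the proof of the payment bound, to pin down $P_1(v^{ALL},v^{one})\le 2k$ via DSIC (because $\hat v$ and $v^{ALL}$ receive the same allocation against $v^{one}$, IR on $\hat v$ caps the common payment). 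Then the unit-demand value $4k$ is chosen strictly above $2k$ (so the good leaves are strictly positive for $v^{ONE}$) but strictly below $k^2$ (so $(v^{ONE},v^{ALL})$ forces the allocation $(0,2)$ and gives a genuine bad leaf). You sensed that an intermediate valuation $v^M=(M,2M)$ ``will likely be necessary,'' but the rationale you give for it (a ``root grouping'' pushing the ratio to $2$) is not what is needed; its role is as an auxiliary to transfer a payment bound through DSIC, and in addition the unit-demand valuation itself must sit on a separate, strictly intermediate scale so that $(v^U,v^A_H)$ is forced and the payment bound is strictly smaller than the unit-demand value. Without re-choosing the scales along these lines, the lemma applications you outline cannot be carried out.
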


\begin{lemma}\label{lemma:mono-mua-dec}
        For $2$ bidders and $3$ items, there is an obviously strategy-proof 
\emph{deterministic} mechanism that gives an approximation of  $1.5$.
\end{lemma}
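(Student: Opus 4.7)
The plan is to exhibit an explicit deterministic generalized ascending auction and show it achieves the claimed $1.5$-approximation. The mechanism assigns each of the two bidders a base bundle of one item and a potential bundle of two items; it then runs a single ascending price clock $p$ starting at $0$, where bidder $i$ drops out the moment $p$ strictly exceeds her second marginal $a_i^{(2)} := v_i(2)-v_i(1)$. The bidder who remains active longer upgrades to two items and pays the loser's drop-out price; the other bidder keeps her single item at price $0$. Since $X_i^B \subsetneq X_i^P$ for both bidders and the auction terminates as soon as allocating $X_i^P$ to the active bidder and $X_j^B$ to the dropped bidder uses exactly three items, this mechanism fits the template of a generalized ascending auction, so \cref{lemma-partial} immediately gives obvious strategy-proofness. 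Individual rationality and no negative transfers are direct: the loser pays $0$, and by the drop-out rule the winner pays at most her own $a_i^{(2)}$.

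For the approximation analysis, I would case-split on the shape of the optimal allocation, which must lie in $\{(3,0),(2,1),(1,2),(0,3)\}$. In the ``split'' cases $(2,1)$ and $(1,2)$, the ascending rule awards the second item to the bidder with the larger $a_i^{(2)}$, and this is exactly the bidder assigned two items in the optimum (since the inequality $v_1(2)+v_2(1) \geq v_1(1)+v_2(2)$ is equivalent to $a_1^{(2)} \geq a_2^{(2)}$). Consequently, the mechanism matches the optimum exactly in these cases.

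The critical case is when the optimum concentrates all three items on one bidder, WLOG bidder $1$. The key sub-claim I would establish is that, under decreasing marginals, optimality of $(3,0)$ already forces $a_1^{(2)} \geq a_2^{(2)}$. Indeed, decreasing marginals yield $v_2(2) \geq 2\,a_2^{(2)}$ and $v_1(3)-v_1(1) = a_1^{(2)}+a_1^{(3)} \leq 2\,a_1^{(2)}$; if $a_2^{(2)} > a_1^{(2)}$, chaining these two bounds gives $v_1(1)+v_2(2) > v_1(3)$, contradicting optimality of $(3,0)$. Hence bidder $1$ wins the ascending auction, the mechanism outputs $(2,1)$ with welfare at least $v_1(2)$, and one final application of decreasing marginals, $a_1^{(3)} \leq \tfrac{1}{2}(a_1^{(1)}+a_1^{(2)}) = \tfrac{1}{2}v_1(2)$, yields $v_1(3) \leq \tfrac{3}{2}v_1(2)$. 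Thus the mechanism recovers at least a $2/3$ fraction of the optimum. The $(0,3)$ case is symmetric. Tightness is witnessed by $v_1(k)=k$, $v_2 \equiv 0$.

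The main obstacle is the concentrated-optimum case: it is not a priori clear that the ascending auction identifies the ``right'' winner, and the argument genuinely exploits concavity both to exclude $a_2^{(2)} > a_1^{(2)}$ and to bound the gap $v_1(3)-v_1(2)$. The other verifications — OSP via \cref{lemma-partial}, individual rationality, no negative transfers, and the split cases — are essentially mechanical.
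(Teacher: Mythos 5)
Your proof is correct and uses the same mechanism as the paper: allocate one item to each bidder and run an ascending auction on the third, which is a generalized ascending auction and hence OSP by Lemma~\ref{lemma-partial}. The route to the approximation bound differs in one respect worth noting. You case-split on the shape of the optimal allocation and then, in the concentrated case, establish the extra sub-claim that optimality of $(3,0)$ forces $a_1^{(2)} \ge a_2^{(2)}$, so the auction always hands the upgrade to the bidder that the optimum favors. The paper sidesteps this sub-claim entirely by fixing WLOG which bidder wins the ascending auction \emph{first} (say bidder $1$, so $v_1(2)-v_1(1)\ge v_2(2)-v_2(1)$) and only then casing on OPT: in its Case~IV, where the auction upgrades the ``wrong'' bidder relative to the optimum ($o=(0,3)$ but the auction outputs $(2,1)$), a short chain of inequalities shows the mechanism's welfare is still at least OPT. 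With that ordering, the paper never needs to argue that the auction identifies the ``right'' winner. Both proofs then hinge on the same decreasing-marginals inequality, $v_1(3)-v_1(2)\le \tfrac{1}{2}v_1(2)$, in the single genuinely lossy case. So your sub-claim is a correct and mildly informative observation, but the paper's ordering of the case analysis makes it unnecessary and yields a slightly shorter argument; one small point you should also nail down is the tie $a_1^{(2)}=a_2^{(2)}$, which your ``remains active longer'' rule leaves ambiguous (any fixed tie-break works, as the paper notes).
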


Observe that this is not typical, as the approximation guarantee of mechanisms typically deteriorates as the number of items increases: intuitively, the more items there are, the \textquote{harder} it becomes to allocate them optimally. From a more formal perspective, impossibility results for auctions with $m$ items can be extended to those with $m+1$ items by introducing a \textquote{dummy} item that no bidder values. However, for the class of valuations with decreasing marginals, the additional item enables a new mechanism: we can now 
 allocate one item to each bidder and then run an ascending auction on the remaining item. 


Note that the previously known lower bound on deterministic obviously strategy-proof mechanisms is $\sqrt 2$ \cite{GMR17}.\footnote{Note that \cite{GMR17} actually prove an impossibility for all mechanisms that are weakly group strategy-proof. This implies an impossibility for obviously strategy-proof mechanisms because as \cite{li} shows, every obviously strategy-proof mechanism is weakly group strategy-proof.} However, in contrast to \cite{GMR17}, \cref{thm-lb-mua-dec} applies solely to mechanisms that satisfy individual rationality and no negative transfers.   
The proofs  of \cref{thm-lb-mua-dec} and \cref{lemma:mono-mua-dec} can be found in \cref{subsec-lb-proof-mua-dec,sec-impos-mua-dec}.  


\section{Combinatorial Auctions}\label{sec-combi}
We now turn to settings with heterogeneous items.
We explore settings involving additive and unit-demand bidders and conclude by considering mechanisms for subadditive and general valuations.
The proofs of all theorems can be found in 
\cref{app-missing-combi}. 
\subsection{Additive Valuations}
A valuation $v_i$ is \emph{additive} if bidder $i$ has a value $v_{ij} \geq 0$ for item $j$ and the value bidder $i$ has for receiving a set of items $A_i$ is equal to $\sum_{j \in A_i}{v_{ij}}$. 
\subsubsection{Upper Bound}
We show that the sampling approach yields a $4$-approximation for this setting:
\begin{theorem}\label{thm-ub-add}
    There is a universally OSP mechanism for bidders with additive valuations that gives a $4$-approximation to the optimal welfare.
\end{theorem}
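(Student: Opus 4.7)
The plan is to adapt the sampling-based approach of Mechanism~\ref{alg:single-minded} to the additive setting. Specifically, with probability $\nicefrac{1}{2}$ I would run an ascending-price auction on the grand bundle (which is OSP by \cref{lemma-partial}), and with probability $\nicefrac{1}{2}$ I would invoke a sampling-then-pricing stage: independently place each bidder into a sampled set $S$ or an unsampled set $U$, each with probability $\nicefrac{1}{2}$; discard the bidders in $S$ after eliciting their valuations; and use $\mathrm{OPT}(S)$ to set a uniform per-item price $p = \Theta(\mathrm{OPT}(S)/m)$. Then I would iterate over the bidders in $U$ in an arbitrary order and let each purchase any subset of remaining items at $p$ per item.

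Universal OSP would follow the template of \cref{claim-mua-sm-mechanism-osp}. The grand-bundle branch is OSP by \cref{lemma-partial}. In the sampling branch, bidders in $S$ cannot win any item, so truthful reporting is an obviously dominant strategy; bidders in $U$ face a fixed per-item price decided before their turn, and by additivity their obviously dominant strategy is simply to take every available item whose value exceeds $p$, regardless of what other bidders have done or will do.

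For the approximation analysis, I would define a bidder as \emph{critical} if her total value is at least $\mathrm{OPT}/c_1$ for a suitable constant $c_1$. If a critical bidder exists, the grand-bundle ascending auction (run with probability $\nicefrac{1}{2}$) recovers at least her value in expectation by the standard Vickrey-style guarantee, already producing a constant-factor approximation. Otherwise, every bidder's total value is small, and I would invoke \cref{lem:auction-sampling} to obtain $\mathrm{OPT}(S), \mathrm{OPT}(U) \geq \mathrm{OPT}/5$ with probability at least $\nicefrac{1}{2}$. Conditioning on this event, I would perform a posted-price analysis against $\mathrm{OPT}(U)$: items that get sold contribute at least $p$ each to the welfare (the buyer's value exceeds $p$), while items that remain unsold must satisfy $\max_{i \in U} v_{ij} < p$, contributing at most $mp = \Theta(\mathrm{OPT}(S))$ of welfare loss relative to $\mathrm{OPT}(U)$. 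By tuning $c_1$ and the price coefficient appropriately, the two cases can be balanced to yield the stated $4$-approximation.

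The main obstacle I anticipate is the posted-price analysis in the non-critical case. Unlike the single-minded setting of \cref{lem:single-minded-approx}, for additive bidders we must account for the welfare loss caused by the first-come-first-served ordering, where a bidder with a modest value for an item may consume it before the bidder who values it most in $U$. The key observation that makes the analysis tractable is that additivity allows us to reason item by item: each item sold still contributes at least $p$ to the welfare, so the aggregate loss can be charged against the budget $mp$, while the contribution of the items the mechanism actually sells covers the remainder of $\mathrm{OPT}(U)$. Getting the constants to line up to $4$, rather than a worse constant, is where the analysis must be tightest.
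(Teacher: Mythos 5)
Your charging argument in the non-critical case is where the proof breaks. The budget $mp$ only bounds the contribution of \emph{unsold} items to $\mathrm{OPT}(U)$: if item $j$ is left unsold then $\max_{i\in U}v_{ij}\le p$, so it contributes at most $p$ to $\mathrm{OPT}(U)$, giving a total of at most $mp$. But for a \emph{sold} item $j$, the mechanism collects $v_{ij}>p$ from whichever bidder $i\in U$ reaches it first, while $\mathrm{OPT}(U)$ credits $\max_{i'\in U}v_{i'j}$, and nothing controls that gap under a uniform per-item price. A ``sniper'' in $U$ valuing every item at, say, $1.01\,p$ and scheduled first will buy everything, leaving the mechanism with only $\Theta(mp)=\Theta(\mathrm{OPT}(S))$ welfare while $\mathrm{OPT}(U)$ can be far larger. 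Stacked on top of the other losses in your template (probability $\nicefrac12$ for the pricing branch, $\nicefrac12$ for balanced sampling, the $\mathrm{OPT}(S)\ge\mathrm{OPT}/5$ slack from \cref{lem:auction-sampling}), this cannot be tuned down to $4$; the identical template in \cref{lem:single-minded-approx} yields $400$.

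Mechanism~\ref{alg:additive} avoids all of this by making the price \emph{item-specific}: it posts $p_j=\max_{i\in S}v_{ij}$ (with a tie-break index $n(j)$ equal to the lowest-indexed maximizer in $S$), drops the grand-bundle branch, and needs no critical-bidder case split. This turns the posted price into a per-item second-price filter, which kills the sniper issue: only a bidder whose value meets or exceeds the second-highest value for $j$ can take it. The proof is then a clean per-item argument. Fix the optimal allocation giving item $j$ to $i^*_j$, the lowest-indexed bidder in $\arg\max_i v_{ij}$, and let $\tilde i_j$ be the second-highest (lowest-indexed in $\arg\max_{i\ne i^*_j}v_{ij}$). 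With probability $\nicefrac14$, independently, $i^*_j\in U$ and $\tilde i_j\in S$; in that event $p_j$ equals the second-highest value, no other bidder in $U$ can clear the threshold for item $j$, so it remains available for $i^*_j$, who takes it. Linearity of expectation over items gives a $4$-approximation exactly. If you want to push a uniform-price sampling scheme through, you would still need some mechanism to stop low-value unsampled bidders from consuming items that high-value unsampled bidders want, and the natural fix is precisely the item-specific price.
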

We now describe Mechanism \ref{alg:additive}. Simply put, Mechanism \ref{alg:additive} samples a threshold price for each item and then uses these threshold prices as a posted-price mechanism for the unsampled bidders. To handle tie-breaking, priority is given to bidders with higher indices. 
\begin{algorithm2e}
\SetKwInOut{Input}{Input}
\Input{A set of bidders $\bidders$ and a set of $M$ items}

 Index the bidders in some arbitrary fixed order

 $S \leftarrow \emptyset$, $U \leftarrow \emptyset$
 
Independently assign each bidder to set $S$ with probability $\nicefrac{1}{2}$ and to set $U$ with probability $\nicefrac{1}{2}$.

 ``Discard'' each bidder in $S$ and learn their value for each item

 For each $j \in M$: set a price $p_j$ on item $j$ equal to $\max_{i \in S}{v_{ij}}$ and let $n(j)$ denote the smallest index among bidders in $\arg\max_{i \in S}{v_{ij}}$
 
 For each $i \in U$ in an arbitrary order:  Let $i$ purchase all previously unsold items $j \in M$ for which either: (i) $v_{ij} > p_j$; or (ii) $v_{ij} = p_j$ and $i$ has a lower index than $n(j)$
 
 \caption{``\textsc{Additive}''}
 \label{alg:additive}
\end{algorithm2e}
The following two lemmata jointly provide the proof of \cref{thm-ub-add}.
\begin{lemma}\label{lemma-add-osp}
Mechanism \ref{alg:additive} is universally OSP.
\end{lemma}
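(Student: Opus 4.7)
The plan is to fix an arbitrary realization of the coin flips that produces a partition $(S,U)$ and show that the resulting deterministic mechanism admits obviously dominant strategies. The argument naturally splits by whether a bidder lies in $S$ or in $U$.

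For every bidder $i \in S$, the mechanism elicits her valuation and then awards her the empty bundle at zero price, regardless of what she reports; her utility is identically zero on every message profile. Thus any strategy---in particular truthful reporting---is weakly obviously dominant, because at each node where $i$ is called the worst-case outcome under truthful reporting coincides with the best-case outcome under any deviation (both are zero).

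For every bidder $i \in U$, by the time $i$ is called the set $S$ has been queried so the prices $p_j$ and tie-breakers $n(j)$ are fixed, and the set $R$ of items still available is determined by the prefix of the path. Hence $i$'s allocation is a function of her message alone: she receives some $A_i \subseteq R$ at payment $\sum_{j \in A_i} p_j$. I would implement $i$'s step as a generalized ascending auction in the sense of \cref{lemma-partial}: for each $j \in R$, offer her a single-item subauction with base bundle $\emptyset$, potential bundle $\{j\}$, and a price that ascends from $0$ up to $p_j$, with the case $v_{ij}=p_j$ resolved by the fixed index comparison encoded in $n(j)$. By additivity, $i$'s optimal behavior on item $j$ depends only on whether $v_{ij}$ exceeds $p_j$, and dropping out exactly at that threshold is obviously dominant on the corresponding subauction. \cref{lemma-partial} then yields that the concatenated truthful strategy is obviously dominant for $i$.

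The main obstacle---a mild one---is to verify that composing the stages across the different bidders of $U$ preserves \emph{obvious} strategy-proofness rather than merely dominant-strategy incentive compatibility. I would handle this directly from the definition: at each of $i$'s decision nodes she is the only active player, and the items still available at that node are already pinned down by the history; subsequent moves by later bidders in $U$ do not affect $i$'s allocation or payment. Consequently the worst-case and best-case continuations both coincide with the deterministic outcome implied by $i$'s message, and the obvious-dominance inequality reduces to the ordinary dominance inequality, which holds by additivity on each item. Combined with the trivial case of bidders in $S$, this shows that every realization of the randomness yields an OSP deterministic mechanism, so Mechanism~\ref{alg:additive} is universally OSP.
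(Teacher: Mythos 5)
Your argument is correct and, in its last paragraph, matches the paper's proof almost exactly. For bidders in $S$ the observation is identical. For bidders in $U$ the paper, like you, reasons directly: once the partition, prices $p_j$, tie-breakers $n(j)$, and the set of items still available at $i$'s node are fixed by the history, $i$'s allocation and payment depend only on her current message, and nothing a later bidder does changes them; so worst-case and best-case continuations coincide and obvious dominance collapses to ordinary dominance, which holds item-by-item by additivity. That is precisely the intended reasoning, just stated more verbosely than in the paper.

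The middle paragraph, in which you re-implement bidder $i$'s step as a collection of single-item ascending subauctions and invoke \cref{lemma-partial}, is a detour that you should cut or treat with more care. Mechanism~\ref{alg:additive} as written is \emph{not} a generalized ascending auction (the overall protocol is a sequential posted-price process with a prior sampling stage, not one with a global active set and a single termination condition), and OSP is a property of the extensive form, not of the allocation rule it computes. Observing that a \emph{different} protocol realizing the same per-bidder outcome would be a generalized ascending auction does not by itself certify OSP of the protocol actually used. You evidently noticed this yourself, since you then give the direct argument in the final paragraph; that direct argument is the one that does the work, and it stands on its own without the ascending-auction framing.
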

The proof of \cref{lemma-add-osp} is straightforward, and we write it for completeness in \cref{subsec::proof-add-osp}.

\begin{lemma}\label{lem-add-approx}
Mechanism \ref{alg:additive} obtains a $4$-approximation to the optimal social welfare in the presence of additive bidders.
\end{lemma}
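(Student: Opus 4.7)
The plan is to establish the per-item inequality $\mathbb{E}[\alpha_j] \geq v^*_j/4$ for every item $j$, where $\alpha_j$ is the mechanism's welfare contribution from item $j$ and $v^*_j := \max_i v_{ij}$. Since bidders are additive, the optimal allocation assigns each item to its highest-value bidder, so $\mathrm{OPT} = \sum_j v^*_j$, and summing the per-item bounds yields the desired $4$-approximation.

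To set up this per-item analysis, for each item $j$ let $i^*_j$ denote the highest-value bidder for item $j$ with ties broken by smallest index, and let $i^{**}_j$ denote the smallest-indexed bidder in $\arg\max_{i \neq i^*_j} v_{ij}$. Define the favorable event $E_j := \{ i^*_j \in U \text{ and } i^{**}_j \in S \}$. By independence of the random partition, $\Pr[E_j] = 1/4$ whenever such an $i^{**}_j$ exists; the degenerate case of a single positive-value bidder is easy to handle separately. The heart of the argument is to show that conditional on $E_j$, the \emph{unique} bidder in $U$ who qualifies (by either condition~(i) or~(ii)) to purchase item $j$ in Mechanism~\ref{alg:additive} is $i^*_j$ itself, so $i^*_j$ inevitably buys item $j$ regardless of the arrival order, contributing exactly $v^*_j$ to the mechanism's welfare.

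To prove this uniqueness claim, I would first use the definitions of $i^*_j$ and $i^{**}_j$ together with $E_j$ to deduce two facts: $p_j = v_{i^{**}_j, j}$ (since $i^*_j \notin S$ while $i^{**}_j \in S$ attains the maximum value among $S$), and $n(j)$ equals the index of $i^{**}_j$ (since, by the definition of $i^{**}_j$, every bidder other than $i^*_j$ with value $v_{i^{**}_j, j}$ has index at least that of $i^{**}_j$, and $i^*_j \notin S$). I would then split into two cases according to whether $v^*_j > v_{i^{**}_j, j}$ or $v^*_j = v_{i^{**}_j, j}$. In the strict case, $i^*_j$ satisfies condition~(i) strictly, while any other bidder in $U$ either has value $< p_j$ (failing both conditions) or value $= p_j$ with index $\geq n(j)$ (failing condition~(ii)). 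In the tie case, $i^*_j$ fails~(i) but satisfies~(ii) because its index is strictly smaller than $n(j)$ by definition; every other bidder either has strictly smaller value (failing both) or has value $v^*_j$ with index $\geq n(j)$ (failing~(ii)), so again only $i^*_j$ qualifies.

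The main obstacle is precisely this tie-breaking bookkeeping: one must carefully use the strict-inequality form of condition~(i) together with the precise definition of $n(j)$ to rule out alternative qualifiers under $E_j$, which is what makes the arrival order over $U$ irrelevant to the welfare bound. Once uniqueness is established, we have $\mathbb{E}[\alpha_j] \geq v^*_j \cdot \Pr[E_j] = v^*_j/4$, and summing over items completes the proof.
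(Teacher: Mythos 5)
Your proof is correct and takes essentially the same approach as the paper: per-item analysis with $i^*_j$ (the smallest-indexed value-maximizer) and the ``second-highest'' bidder $\tilde{i}_j = i^{**}_j$, with the event $\{i^*_j \in U,\ i^{**}_j \in S\}$ of probability $1/4$ guaranteeing that $i^*_j$ purchases item $j$. The only difference is that you spell out in full the tie-breaking bookkeeping (the two cases distinguishing $v^*_j > v_{i^{**}_j,j}$ from $v^*_j = v_{i^{**}_j,j}$) that the paper's proof leaves implicit behind the assertion that ``item $j$ is necessarily available for bidder $i^*_j$ who gets it indeed,'' so your write-up is a more detailed version of the same argument.
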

\begin{proof}
Observe that since the valuation functions are additive,  optimal solutions must allocate each item $j$ to some bidder $i \in \text{argmax}_{i \in \bidders}{v_{ij}}$.  In particular, one optimal solution allocates each item $j$ to the bidder $i^*_j \in \text{argmax}_{i \in \bidders}{v_{ij}}$ with the smallest index according to the order over the bidders that the mechanism specifies.  We argue that each item $j$ is allocated to its corresponding bidder $i^*_j$ by Mechanism \ref{alg:additive} with probability at least $\frac 1 4$. Linearity of expectation directly implies a $4$-approximation to the optimal welfare.

To see that we allocate each item $j$ to
$i^*_j$
with probability $\frac 1 4$, first observe that $i^*_j$ is placed in $U$ with probability $\frac 1 2$.  Moreover, let $\tilde{i}_j$ denote the \textquote{second-highest bidder}, which we define as
the bidder in $\text{argmax}_{i \in \bidders \setminus \{i^*_j\}}{v_{ij}}$
that has the smallest index.
 This bidder is placed in $S$ (independently of the placement of $i^*_j$) with probability $1/2$.  
Observe that when bidder $\tilde{i}_j$ is in $S$ and $i^*_j$ is in $U$,
then item $j$ is necessarily available for bidder $i^*_j$ who gets it indeed. 
This occurs with probability at least $\frac{1}{4}$, as desired.\qedhere
\end{proof}




\subsubsection{Lower Bound}
\begin{theorem}\label{thm-lb-add}
For a combinatorial auction with $m\ge 2$ items and $n\ge 2$ additive bidders, there 
is no randomized obviously strategy-proof mechanism that satisfies individual rationality and no negative transfers and gives approximation better than $\frac{8}{7}$ to the optimal social welfare.
\end{theorem}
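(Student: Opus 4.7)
The plan is to mirror the structure of the proof of \cref{thm-mua-sm-lb}. First, I invoke \cref{lem:yaos} to reduce the task to designing a distribution $\mathcal{D}$ over additive valuation profiles against which every deterministic obviously strategy-proof mechanism (individually rational, no negative transfers with respect to the full additive domain) achieves at most a $\frac{7}{8}$ fraction of $\opt$ in expectation. As in the single-minded proof, it suffices to handle $n = m = 2$, since additional bidders can be padded in with all-zero valuations and additional items with items that no bidder values.

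Concretely, I would work with additive valuations of the form $v_i^{(a,b)}$ that assign $a$ to item $1$, $b$ to item $2$, and $a+b$ to the grand bundle, drawing $a,b$ from a set such as $\{0, 1, k^2, k^4\}$ with $k$ arbitrarily large. The distribution $\mathcal{D}$ would consist of a small number of profiles, with the bulk of the probability placed on a ``tied low'' instance analogous to $I_1$ in the single-minded proof (for example $(v_1^{(1,1)}, v_2^{(1,1)})$), and the remaining mass split among ``asymmetric'' profiles in which one bidder has a large value $k^2$ or $k^4$ on a particular item, forcing an essentially unique optimal allocation. The weights would be calibrated so that mis-allocating on any single positive-probability profile costs an $\Omega(1)$-fraction of the expected $\opt$, with the exact ratio tuned so that the minimum loss equals $\frac{1}{8}$. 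This yields the analog of \cref{claim-mua-sm-instances}: any mechanism beating $\frac{8}{7}$ must output the correct (essentially $\opt$-achieving) allocation on every profile in the support of $\mathcal{D}$.

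Fix any such deterministic OSP mechanism with obviously dominant strategies $(\mathcal{S}_1, \mathcal{S}_2)$. The per-profile allocation requirements force distinct profiles to land in distinct leaves of the protocol tree, so some pair of profiles in the support must diverge. Let $u$ be the first such divergence vertex, and without loss of generality assume that $u \in \mathcal{N}_1$. I would then exhibit a chain $v_1^{(1)}, v_1^{(2)}, v_1^{(3)}$ of bidder-$1$ valuations, each paired with an appropriate bidder-$2$ valuation from $\mathcal{D}$'s support, such that each consecutive pair triggers the hypothesis of \cref{lemma-bad-leaf-good-leaf}: both profiles reach $u$, and the utility that one of them obtains by deviating to the other's outcome strictly exceeds its utility under its own outcome. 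The strict gap will come, as in Claims \ref{claim-oneone-same} and \ref{claim-ONE-ALL-same}, from the ``deviator'' valuation placing value $k^2$ or $k^4$ on an item that the mechanism gives it in the partner profile, combined with the individual-rationality and no-negative-transfers bound on its payment in its own profile. Applying the lemma along the chain forces $\mathcal{S}_1$ to dictate a common message at $u$ for $v_1^{(1)}, v_1^{(2)}, v_1^{(3)}$, contradicting the assumed divergence.

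The main obstacle I anticipate is the calibration of values and probabilities. Unlike the single-minded case, additive bidders still derive positive value from ``wrong'' allocations, so the loss from a single misallocation is typically smaller than in the single-minded setting; driving the ratio down to exactly $\frac{8}{7}$ likely requires a handful of profiles and a more delicate balancing of the asymmetric instances against the tied instance. A secondary delicate point is producing a chain of \emph{three} valuations on one bidder's side that interacts with at least two partner valuations on the other side in such a way that \cref{lemma-bad-leaf-good-leaf} can be chained without running afoul of individual rationality on the ``bad'' profiles; getting the $a, b$ gaps between $1, k^2$, and $k^4$ to yield the required strict utility inequalities in both links of the chain is the most technical part of the construction.
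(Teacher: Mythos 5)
Your high-level plan---invoke \cref{lem:yaos}, design a hard distribution over additive profiles, find the first divergence vertex $u$, and apply \cref{lemma-bad-leaf-good-leaf} along a chain to derive a contradiction---matches the paper's strategy. But there is a genuine gap in the chaining step, and it is not merely one of calibration. You propose a chain of \emph{three} valuations on bidder~$1$'s side, directly mirroring the single-minded proof. That cannot work here: unlike the multi-unit single-minded setting, where bidder~$1$'s relevant types all lie on a single "axis" (want one item vs.\ want all items), in a heterogeneous two-item auction bidder~$1$ could be focused on item~$a$ or on item~$b$, and a perfectly reasonable deterministic OSP protocol can begin by asking bidder~$1$ which item she cares about. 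A chain of three $a$-focused valuations (e.g.\ low/mid/high value on~$a$) would only show that those three types must send the same message; it says nothing about the $b$-focused types and so does not preclude the mechanism from diverging on that question at~$u$.

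The paper's actual hard distribution $\mathcal{D}$ places probability on \emph{seven} profiles, and the contradiction requires showing that all \emph{seven} bidder-$1$ valuations $\mathcal{V}_1 = \{v_1^{a,one}, v_1^{a,mid}, v_1^{a,large}, v_1^{b,small}, v_1^{b,mid}, v_1^{b,large}, v_1^{both}\}$ dictate the same message at~$u$. The crucial ingredient you are missing is the "bridge" valuation $v_1^{both}$ (with values $2k+3$ on~$a$ and $2k+1$ on~$b$), which is the unique genuinely two-item additive type and is what links the $a$-focused chain $v_1^{a,one} \leftrightarrow v_1^{a,mid} \leftrightarrow v_1^{a,large}$ to the $b$-focused chain $v_1^{b,small} \leftrightarrow v_1^{b,mid} \leftrightarrow v_1^{b,large}$ (via \cref{claim-a-both-add} and \cref{claim-b-both-add}). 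Moreover, establishing the payoff bounds for $v_1^{both}$ needed to feed \cref{lemma-bad-leaf-good-leaf} (specifically, that bidder~$1$ cannot win item~$a$ given $(v_1^{both}, v_2^{a,large})$ in \cref{lemma-small-pay-add}) requires invoking \emph{weak monotonicity} of the realized allocation rule (\cref{wmon-lemma}), a tool with no analogue in the single-minded proof. Without the bridge valuation and the weak-monotonicity step, the argument as you describe it does not close.
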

The proof follows the same structure as the proof of \cref{thm-mua-sm-lb} of describing 
a distribution $\mathcal D$ and showing that it is hard for every deterministic mechanism. By applying Yao's Lemma (\cref{lem:yaos}), we get hardness for randomized obviously strategy-proof mechanisms. 
The proof can be found in \cref{lb-add-proof-place}. 
We note that the construction of \cref{thm-lb-add} is similar to a construction in \cite{Ron24}.  However, the case analysis we employ is more involved as it includes additional valuation profiles.

\subsection{Unit-Demand Valuations}
We now address bidders with \emph{unit-demand} valuations, where there is a value $v_{ij} \geq 0$ for each $i \in \bidders$ and  $j \in \items$ and the value bidder $i$ has for a set $A_i$ is equal to $\max_{j \in A_i}{v_{ij}}$.

\subsubsection{Upper Bound}
This setting appears more complicated than the setting of additive valuations, as the approach of setting the price of each item to be the price of the second highest bid fails miserably:
\begin{example}\label{ex:ud-failure}
    Consider the instance where there are $\sqrt{n}$ \textquote{high} bidders with value $2$ for all items, and the rest of the bidders are \textquote{low}, in the sense that they value all items at $1$. Assume that the number of items, $m$, is equal to the number of bidders, $n$.
    
    Note that if we sample roughly half of the bidders and use their highest values to determine the prices for the unsampled bidders, as we
    do in Mechanism \ref{alg:additive},
    we get $\approx\frac{1}{\sqrt n}$ of the optimal welfare. This is because at least one of the \textquote{high} bidders is in the sample with probability $1-\frac{1}{2^{\sqrt n}}$. Thus in this very likely case, the price of all items is set to be $2$ and none of the \textquote{low} bidders take any item, so the welfare obtained in expectation is at most $\sqrt{n}$. However, the  optimal welfare is $n+\sqrt{n}$. 
\end{example}
Thus, to obtain a constant factor approximation for this setting,  we use the beautiful algorithm of \cite{reiffenhauser2019optimal}, originally formulated for the problem of strategy-proof online matching: 
\begin{algorithm2e}
\SetKwInOut{Input}{Input}
\Input{A set of bidders $\bidders$ and a set of  items $M$}

Ensure uniqueness of all optimal solutions for any fixed subset of bidders and items by fixing a consistent tie-breaking rule between optimal allocations

 Choose uniformly at random permutation $\sigma$ over the bidders and index the bidders in this order

 $S \leftarrow \emptyset$, $M_A \leftarrow M$
  
 ``Discard'' the first $\lfloor n/e \rfloor$ bidders and learn their value for each item and add these bidders to $S$

 For each consecutive bidder $i \in \{\lfloor n/e \rfloor + 1, \dots, n\}$:

    \quad Compute a price $p_j$ for each item $j \in M_A$ equal to $OPT(S, M_A) - OPT(S, M_A \setminus \{j\})$ (i.e., the decrease in welfare if $j$ were taken away from bidders in $S$)

    \quad Let $i$ purchase her favorite item (i.e., the item for which $v_{ij} - p_j$ is maximized and greater than $0$) at the current prices and let $j^i$ denote this item (if any). Use the tie breaking rule to determine whether player $i$ can take items for which $v_{ij}-p_j=0$

    \quad $M_A \leftarrow M_A \setminus \{j^i\}$, $S \leftarrow S \cup \{i\}$

    \quad Ask $i$ for her value of $i$ for each item
 
 \caption{``\textsc{Unit-Demand}'' (adapted from Algorithm 2 of \cite{reiffenhauser2019optimal})}
 \label{alg:unit-demand}
\end{algorithm2e}

We note that Mechanism \ref{alg:unit-demand} is Algorithm 2 of \cite{reiffenhauser2019optimal}, which we slightly adapt to our offline setting and rephrase to make the fact that the mechanism is universally obviously strategy-proof more clear.

\begin{theorem}\label{thm:ud-upper}
Mechanism \ref{alg:unit-demand} is universally OSP and achieves an $e$-approximation
to the optimal social welfare in the presence of unit-demand bidders.
\end{theorem}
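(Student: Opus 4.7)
The plan is to address the two claims separately: universal OSP is the novel contribution of this paper, while the $e$-approximation is essentially inherited from \cite{reiffenhauser2019optimal}.

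For universal OSP, I would fix an arbitrary realization of the permutation $\sigma$, obtaining a deterministic protocol in the support of the randomized mechanism, and verify obvious dominance of truthful behavior for every bidder. Sampled bidders (the first $\lfloor n/e \rfloor$ in $\sigma$) are allocated nothing and pay nothing regardless of their reports, so truth-telling is trivially obviously dominant: the worst-case utility from any behavior is $0$, which matches the best-case utility under every deviation. For any post-sample bidder $i$, at the moment she is called to speak, the prices $p_j$ on remaining items $M_A$ are already fixed by information revealed strictly before $i$'s turn and are independent of $i$'s message. Thus $i$ effectively faces a posted-price menu over $M_A$: her own allocation and payment are fully determined by her choice at this node and do not depend on future bidders. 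Consequently, selecting the utility-maximizing item (with the consistent tie-breaking) is obviously dominant in the strong OSP sense, and since this holds for every $\sigma$, the randomized mechanism is universally OSP.

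For the $e$-approximation, I would invoke the analysis of \cite{reiffenhauser2019optimal} essentially as a black box. Drawing the permutation $\sigma$ uniformly at random in advance and processing bidders in that order is precisely equivalent to the random-order arrival model for which their Algorithm 2 is analyzed. The marginal-welfare pricing $p_j = OPT(S, M_A) - OPT(S, M_A \setminus \{j\})$, the threshold position $\lfloor n/e \rfloor$, and the consistent tie-breaking rule are exactly their ingredients, and their competitive-ratio theorem gives expected welfare at least $OPT/e$ on every fixed instance. The offline specialization changes nothing in their analysis, so the bound transfers directly to Mechanism \ref{alg:unit-demand}.

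The main subtlety I would carefully address is the OSP reasoning for the sampled bidders: the mechanism \emph{does} elicit their valuations (and these values influence the downstream prices), so one might superficially worry that they could strategize to manipulate future allocations or be forced to send distinguishable messages at shared nodes. The resolution is that obvious dominance concerns only the speaking bidder's own utility, and since every leaf reachable by a sampled bidder assigns her utility $0$, the obvious-dominance inequality in \cref{lemma-bad-leaf-good-leaf}-style reasoning holds with equality regardless of her reported valuation. I would make this explicit by noting that any two behaviors of a sampled bidder lead to leaves whose utility for her is identically $0$, so truth-telling (along with every other behavior) vacuously satisfies the obvious-dominance condition.
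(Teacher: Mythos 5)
Your proposal is correct and follows essentially the same route as the paper's proof: the $e$-approximation is inherited black-box from Theorem 1 of \cite{reiffenhauser2019optimal}, sampled bidders are shown to have no utility impact from their reports, and post-sample bidders face a posted-price choice determined before they speak. One small point worth making explicit that the paper does state: after a post-sample bidder $i$ purchases her item, the mechanism asks $i$ again for her full valuation, so OSP must also be verified at that later node in $\mathcal{N}_i$ --- which holds trivially since $i$'s allocation and payment are already fixed by then.
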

\begin{proof}
The approximation ratio of the mechanism directly follows from Theorem 1 of \cite{reiffenhauser2019optimal}.  To see that the mechanism is obviously strategy-proof, observe that the discarded bidders obtain no utility regardless of their report (and, thus, true value reporting is weakly obviously dominant). As for the remaining bidders, they get to select their most preferred remaining item. After that, their reported information does not affect their utility. Therefore, picking their favorite remaining item and then answer the queries afterwards truthfully is an obviously dominant strategy.   
\end{proof}



\subsubsection{Lower Bound}
\begin{theorem}\label{thm-lb-ud}
For a combinatorial auction with $m\ge 2$ items and $n\ge 2$ unit-demand bidders, there 
is no randomized obviously strategy-proof mechanism that satisfies individual rationality and no negative transfers and gives approximation better than $\frac{8}{7}$ to the optimal social welfare.
\end{theorem}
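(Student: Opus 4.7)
My plan is to mirror, as closely as possible, the template established for \cref{thm-mua-sm-lb} and \cref{thm-lb-add}. Since the target ratio $\tfrac{8}{7}$ matches that of the additive case and since unit-demand valuations (for two items) sit in a closely related design space, the overall skeleton should carry over. First, I apply Yao's Lemma (\cref{lem:yaos}) to reduce the statement to showing that no deterministic OSP mechanism satisfying individual rationality and no negative transfers can extract more than a $\tfrac{7}{8}$-fraction of the optimal welfare in expectation over some carefully chosen distribution $\mathcal D$ of unit-demand valuation profiles on two bidders and two items $\{a,b\}$. As in the earlier proofs, the WLOG reduction to two bidders and two items comes from adding dummy zero-valued bidders and a dummy item that no bidder wants.

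For the domain, I would fix a large parameter $k$ and consider unit-demand valuations of the form $v_i^{x,\text{low}}$ (value $1$ on $x$, $0$ elsewhere), $v_i^{x,\text{HIGH}}$ (value $k^2+1$ on $x$, $0$ elsewhere), and $v_i^{x,\text{TOP}}$ (value $k^4$ on $x$, $0$ elsewhere) for $x\in\{a,b\}$, together with a ``flexible'' unit-demand valuation of value $k^2$ on both items. The distribution $\mathcal D$ would be supported on (at most) eight profiles chosen so that (i) each has a unique welfare-maximizing allocation, (ii) the high-value profiles have optimal welfare of order $k^4$ so they dominate the expectation as $k\to\infty$, and (iii) the weights are balanced so that missing the optimal allocation on any single profile costs strictly more than $\tfrac{1}{8}$ of the expected welfare. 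This yields the analogue of \cref{claim-mua-sm-instances}: a list of allocation conditions that any deterministic mechanism beating $\tfrac{8}{7}$ must simultaneously satisfy.

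I would then run the divergence argument exactly as in the proofs of \cref{thm-mua-sm-lb} and \cref{thm-lb-add}. Restricting attention to subdomains $\mathcal V_1\subseteq V_1$ and $\mathcal V_2\subseteq V_2$ containing the ``high-value'' variants, I would identify the first vertex $u$ in the protocol tree where the obviously dominant strategy of one player (say bidder $1$) dictates different messages for two valuations $v_1,v_1'\in\mathcal V_1$ along compatible paths. Using the allocation/payment information forced by the conditions above (an analogue of \cref{lemma-small-pay}, where the large-$k$ approximation guarantees pin down both the allocation and a nontrivial payment upper bound), I would verify that each adjacent pair $v_1,v_1'$ in the chain satisfies the hypotheses of \cref{lemma-bad-leaf-good-leaf}, forcing $\mathcal S_1$ to send \emph{the same} message at $u$ for all valuations in $\mathcal V_1$, which contradicts divergence at $u$. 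The symmetric case where bidder $2$ diverges at $u$ is handled the same way.

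The main obstacle I anticipate is in the combinatorial design of the profiles: under unit-demand, each bidder cares only about her single best item, so the ``good-leaf dominates bad-leaf'' inequalities that drive \cref{lemma-bad-leaf-good-leaf} are tighter than in the single-minded/additive settings, and must be maintained uniformly in $k$. I expect this to be achievable by placing the low, HIGH, and TOP values at the well-separated scales $1$, $k^2+1$, and $k^4$, so that the payment upper bounds extracted from individual rationality and no-negative-transfers in one profile are dominated by the utility available in the ``target'' profile of the divergence argument. The delicate part will be verifying that the chain of valuations in $\mathcal V_1$ (and symmetrically $\mathcal V_2$) can be threaded through \cref{lemma-bad-leaf-good-leaf} in \emph{both} directions simultaneously, rather than only the direction exploited in the $\tfrac{6}{5}$ single-minded construction; this is what drives the sharper $\tfrac{8}{7}$ bound and necessitates the additional profiles beyond the five used for single-minded bidders, paralleling the more involved case analysis mentioned for \cref{thm-lb-add}.
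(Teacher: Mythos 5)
Your overall skeleton is correct and matches the paper's approach: Yao's Lemma to reduce to deterministic OSP mechanisms, a carefully designed distribution $\mathcal D$ on two bidders and two items (extended via dummies), a divergence argument at the first vertex $u$ where some bidder's obviously dominant strategy must split, and a chain of applications of \cref{lemma-bad-leaf-good-leaf} to conclude all valuations in $\mathcal V_1$ send the same message at $u$, yielding a contradiction. You also correctly anticipate that unit-demand requires more profiles and a more involved case analysis than the single-minded lower bound.

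However, two design details in your proposed valuation set would break the chain. First, your ``flexible'' valuation of $k^2$ on \emph{both} items is symmetric, which is problematic: the paper's $v_1^{both}$ assigns $2k+3$ to $a$ and $2k+1$ to $b$ precisely so that, under profiles like $(v_1^{both},v_2^{b,one})$, dominant-strategy IC together with the forced allocation in $(v_1^{a,one},v_2^{b,one})$ pins down that bidder $1$ must receive item $a$ (and a tight payment bound). With a symmetric $k^2/k^2$ valuation you cannot disambiguate which item bidder $1$ should win, and the analogue of \cref{lemma-small-pay-ud} part~\ref{item-6-ud} (and consequently \cref{claim-a-both} and \cref{claim-b-both}) does not go through. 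Second, you use only three single-item scales ($1$, $k^2+1$, $k^4$), but the paper needs an intermediate ``small'' scale at $k$ (the valuation $v_1^{b,small}$) in profile $I_6$ to establish a $k$-payment bound when bidder $1$ wins $b$ against $v_2^{b,one}$ (value $1$); this payment bound is then propagated up the $b$-chain to $v_1^{b,mid}$ and $v_1^{b,large}$. Without the $k$ level, your $b$-side chain argument loses its anchor. Both issues are fixable by copying the paper's exact scales $(1,k,k^2,k^4)$ and the asymmetric ``both'' valuation, after which the argument is as you outlined.
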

Similarly to \cref{thm-mua-sm-lb} and \cref{thm-lb-add}, the proof follows the structure of describing a distribution $\mathcal{D}$, proving that it is hard for every deterministic mechanism, and applying Yao's lemma (\cref{lem:yaos}). In fact, the proof closely resembles that of \cref{thm-lb-add} due to the fact that most valuations used in both constructions are simultaneously additive and unit-demand. The full proof is provided in \cref{lb-ud-proof-place}.


\subsection{More General Valuations}
In light of our previous results, one may wonder whether there exists a \textquote{rich enough} class of valuations for which randomized OSP mechanisms are provably unable to extract more than a constant fraction of the welfare. Perhaps surprisingly, the answer is no. However, the state-of-the-art \emph{computationally efficient} randomized mechanisms for subadditive and general valuations\footnote{A function $v$ is subadditive if for every two bundles of items $A, B \subseteq M$, it holds that $v(A \cup B) \leq v(A) + v(B)$. A valuation is general monotone if for every two bundles $A\subseteq B$, it holds that $v(B)\ge v(A)$.} are, in fact, universally OSP:
\begin{claim}\label{cl:subadditive}
    The $O((\log\log(m))^3)$-approximate randomized mechanism of \cite{assadi2021improved} for subadditive valuations is universally OSP.
\end{claim}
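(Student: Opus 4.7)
The plan is to recall the structure of Assadi's mechanism and then verify that every deterministic protocol in its support is obviously strategy-proof, following the same template used to prove \cref{lemma-add-osp} and \cref{thm:ud-upper}. At its core, the mechanism of \cite{assadi2021improved} is a sample-and-posted-price mechanism: it first tosses some public random coins, then partitions the bidders into a ``sampled'' set $S$ and an ``unsampled'' set $U$ independently (with $U$ arriving in a uniformly random order), elicits enough information from bidders in $S$ to compute a price $p_j$ for every item $j$ from these samples, and finally lets each bidder $i \in U$ purchase, in order, her utility-maximizing bundle out of the currently available items at the prevailing item prices. Conditioning on the random coins fixes a deterministic protocol $A$ in the support $\mathcal{A}$, and it is precisely this deterministic protocol whose obvious strategy-proofness we need to verify.

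The verification proceeds in two steps. First, for every bidder $i\in S$: by construction, the allocation and payment assigned to $i$ do not depend on the messages $i$ sends, because sampled bidders never win items and never pay. Hence any strategy yields the same zero-utility leaf for $i$, so truthfully answering the mechanism's queries is (weakly) obviously dominant by the definition of OSP. Second, for every bidder $i\in U$: when $i$ is called to speak, the item prices $p_j$ and the set of currently available items $M_A$ are already determined by the messages of bidders who spoke earlier, independently of $i$'s own future reports. The mechanism hands $i$ a take-it-or-leave-it menu consisting of every bundle $T\subseteq M_A$ priced at $\sum_{j\in T}p_j$, and $i$ is simply asked which bundle she wants. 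Reporting her true utility-maximizing bundle $T^*\in\arg\max_T v_i(T)-\sum_{j\in T}p_j$ therefore yields utility $v_i(T^*)-\sum_{j\in T^*}p_j$ on every continuation of the protocol, whereas any alternative strategy yields at most the utility of some other bundle in the same menu on every continuation. Thus the worst-case outcome of the truthful message weakly dominates the best-case outcome of any deviation at every node where $i$ speaks, which is exactly the obvious-dominance condition.

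Combining these two observations, every deterministic protocol in $\mathcal{A}$ admits an obviously dominant strategy profile, so the randomized mechanism is universally OSP in the sense of \cref{sec-prelims}. The only subtlety, and the step that most of the writeup will be devoted to, is to confirm that Assadi's mechanism really does have this clean ``sample, then post prices, then let unsampled bidders pick greedily'' structure when one inspects it closely: the randomness used to construct prices and the randomness used to order the unsampled bidders must be drawn up front so that, on each realization, the bidders in $U$ genuinely face a posted-price menu rather than any mechanism with contingent queries that depend on their own reports in a nontrivial way. Once this structural check is in place, the OSP argument above applies verbatim and the claim follows.
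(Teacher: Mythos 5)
Your high-level strategy—argue that on each realization of the public coins, every bidder is either a discarded ``sample'' with zero utility regardless of report, or a ``picker'' facing a menu that is fixed before she speaks—is the same template the paper uses. But your description of the mechanism of \cite{assadi2021improved} is materially inaccurate, and the inaccuracies matter because OSP is a per-node, per-realization property that must be checked against the real protocol tree.

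Two structural features you omit. First, the Assadi et al.\ mechanism is not a pure sample-and-post-price mechanism: with probability $\nicefrac{1}{2}$ it ignores the prices entirely and instead runs a second-price (ascending) auction on the grand bundle among the sampled group $G_1$. This branch requires its own OSP argument (the paper handles it via \cref{lemma-partial}); your writeup has no analogue, and bidders in $G_1$ are \emph{not} zero-utility in that branch. Second, the phase on the unsampled group $G_2$ is not a single sequential posted-price round where each bidder greedily buys from a shrinking pool. It is the \textsc{Binary-Search-Mechanism}: each $i \in G_2$ is assigned a random round $r_i$, there is a randomly drawn final round $r^*$ (both drawn before any $G_2$ bidder acts), bidders in round $r_i$ ``conditionally claim'' a demand set, and only the bidders whose round equals $r^*$ are actually allocated anything. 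The reason the resulting deterministic protocol is OSP is therefore a two-case argument: bidders with $r_i = r^*$ genuinely receive their claimed demand set (so truth-telling is obviously dominant for the usual posted-price reason), while bidders with $r_i \neq r^*$ receive nothing under every continuation, making truth-telling \emph{weakly} obviously dominant for the same ``discarded bidder'' reason as the samples. Your proposal conflates these two cases into a single ``unsampled bidders pick greedily at posted prices'' story, which does not describe what the protocol does to bidders outside round $r^*$. You flag the need for a ``structural check,'' which is the right instinct, but the check as you have sketched it would discover that the structure is not the one your argument is built on; the argument needs to be re-run against the real branching and the real round structure, which is precisely what the paper's proof does.
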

\begin{claim}\label{cl:general}
The $O(\sqrt{m})$-approximate randomized mechanism of \cite{dobzinski2012truthful} for general valuations is universally OSP.
\end{claim}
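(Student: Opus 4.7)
The plan is to recall the two-branch structure of the $O(\sqrt{m})$-approximate mechanism of \cite{dobzinski2012truthful} and verify that every deterministic protocol in its support admits obviously dominant truthful strategies for every bidder. Recall that the mechanism is (essentially) a constant-probability mixture of two branches: in the first branch, the auctioneer runs an ascending-price auction on the grand bundle; in the second branch, the bidders are partitioned uniformly at random into a sampled set $S$ and an unsampled set $U$, the bidders in $S$ are queried and then discarded with allocation $\emptyset$ and payment $0$, per-item (or per-bundle) prices are derived from the reported valuations of $S$, and the bidders in $U$ are then approached sequentially in some fixed order and offered their favorite remaining bundle at these fixed prices on a take-it-or-leave-it basis.

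First, I would handle the grand-bundle branch. Observing that an ascending-price auction on the grand bundle is a generalized ascending auction in the sense of \cref{sec-prelims} (setting $X_i^B = \emptyset$ and $X_i^P = M$ for every $i \in \bidders$), \cref{lemma-partial} directly implies that this branch is OSP.

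Second, I would handle the sampling branch, showing that for every fixed realization of the partition $(S, U)$ and every fixed order over $U$, the resulting deterministic protocol is OSP. For a bidder $i \in S$, her allocation and payment are fixed at $(\emptyset, 0)$ regardless of her message, so truthful reporting is trivially obviously dominant---both its worst case and the best case of any deviation yield zero utility. For a bidder $i \in U$, by the time she is queried the per-bundle prices and the set of remaining items are already completely determined by the messages that precede hers, so her truthful choice of a utility-maximizing remaining bundle delivers a \emph{deterministic} outcome whose worst case is the largest utility she can extract from the posted prices, weakly dominating the best case of any alternative behavior.

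Combining the two branches shows that every deterministic protocol in the support is OSP and, therefore, that the mixture is universally OSP. I expect the main (rather mild) obstacle to lie in verifying that the specific mechanism of \cite{dobzinski2012truthful} can indeed be cast as an extensive-form protocol matching the above template---specifically, that the original presentation does not require unsampled bidders to reveal their entire valuations before receiving their take-it-or-leave-it offers, and that no contingent reasoning is embedded in the query structure. This is the standard implementation of such random-sampling posted-price mechanisms, so the verification should amount only to a straightforward unrolling of the algorithm into a protocol tree.
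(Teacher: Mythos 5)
Your verification toolkit is right, but your description of the mechanism of \cite{dobzinski2012truthful} is not. That mechanism is not a constant-probability mixture of a grand-bundle branch and a sampling/posted-price branch; rather, each deterministic protocol in its support partitions the bidders into \emph{three} coexisting groups: \texttt{STAT} (discarded and used only to set statistics/prices), \texttt{SECOND-PRICE} (participate in a second-price auction for the grand bundle), and \texttt{FIXED} (approached sequentially and offered their utility-maximizing remaining bundle at the fixed prices). The two-branch structure you describe is closer to the paper's own \textsc{Single-Minded} mechanism (Mechanism~\ref{alg:single-minded}) than to DNS; had you run with that description you would have proved a statement about a different mechanism.

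Fortunately the three arguments you invoke are exactly what the paper uses, and they map bijectively onto the three groups: your ``discarded sample'' argument handles \texttt{STAT} (allocation $\emptyset$ and payment $0$ regardless of message, so truth-telling is trivially obviously dominant); your ``ascending grand-bundle auction is a generalized ascending auction, hence OSP by \cref{lemma-partial}'' argument handles \texttt{SECOND-PRICE}; and your ``sequential take-it-or-leave-it offers at already-fixed prices'' argument handles \texttt{FIXED}. So once the mechanism's structure is corrected, your proof goes through essentially verbatim and coincides with the paper's. Your closing caveat about whether the mechanism ``can indeed be cast'' in the right form is well-placed: that is precisely where the argument needed a correction, and you should make sure to state the three-group partition explicitly rather than a two-branch mixture.
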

The proofs are straightforward and can be found in \cref{subsec::proofs-subadd-general}. 

\bibliographystyle{alpha}
\bibliography{bibliography}

\clearpage

\appendix

\section{Additional Formalities}\label{app-formalities}
We now provide a formal definition of obviously dominant strategies. 
and note that our definitions and setup closely follows \cite{Ron24}.  Fixing a protocol and behavior $B_i$ of bidder $i$ we say that vertex $u$ of the protocol is \emph{attainable} if there exists some $B_{-i}\in \mathcal{B}_{-i}$ (i.e., some profile of behaviors for the other players) such that $u\in Path(B_i,B_{-i})$. 
We now may formally define an \emph{obviously dominant behavior}:
\begin{definition}[Definition 2.1 in \cite{Ron24}]\label{def-obvs-behavior}
Consider a deterministic mechanism $A$, together with a behavior $B_i$ and a valuation $v_i$ of some player $i$.
Fix a vertex $u\in \mathcal N_i$ that is attainable given the behavior $B_i$. 
Behavior $B_i$ is an \emph{obviously dominant behavior for player $i$ at vertex $u$ given the valuation $v_i$} if for every behavior profiles $B_{-i}\in \mathcal B_{-i}$ and  $(B_1',\ldots,B_n')\in \mathcal B_1 \times \cdots \times \mathcal{B}_n$ such that:
\begin{enumerate}
    \item $u\in Path(B_1,\ldots,B_n)\cap Path(B_1',\ldots,B_n')$ \emph{and} 
    \item $B_i$ and $B_i'$ dictate sending different messages at vertex $u$.
\end{enumerate}
it holds that:
$$
v_i(f_i(B_i,B_{-i})) - p_i(B_i,B_{-i}) \geq v_i(f_i(B_i',B_{-i}')) - p_i(B_i',B_{-i}')
$$
\end{definition}

Note that \cref{def-obvs-behavior} only deals with behaviors at individual nodes in the protocol.  We then say that a behavior $B_i$ is an obviously dominant behavior given valuation $v_i$ if it is an obviously dominant behavior for player $i$ at
all  attainable vertices.  Formally:
\begin{definition}[Definition 2.2 in \cite{Ron24}]
Fix a behavior $B_i$ together with the subset of vertices in $\mathcal N_i$ that are attainable for it,  which we denote with $U_{B_i}$. Fix a valuation $v_i$ of player $i$.
The behavior $B_i$ is an \emph{obviously dominant behavior for player $i$ given the valuation $v_i$} if
it is an obviously dominant behavior for player $i$ given the valuation $v_i$ for every vertex $u\in U_{B_i}$. 
\end{definition} 

Finally, the definition of \emph{obviously dominant strategies} naturally follows.  Particularly, an obviously dominant strategy is one that defines an obviously dominant behavior for each possible valuation.  Formally:
\begin{definition}[Definition 2.3 in \cite{Ron24}]
A strategy $\mathcal{S}_i$ of player $i$ is an \emph{obviously dominant strategy} if for every $v_i$, the behavior $\mathcal S_i(v_i)$ is an obviously dominant behavior.
\end{definition}

\paragraph{Weak Monotonicity}
Weak monotonicity is a property of social choice functions.  To define it, we denote with $f_i$ the function that outputs for every player $i$ the bundle that she wins given $f$.

    An allocation rule $f:V\to\allocs$ is \emph{weakly monotone} if for every player $i$, for every valuation profile $v_{-i}$ of the bidders $N\setminus \{i\}$, and every two valuations $v_i,v_i'\in V_i$, it holds that if $f_i(v_i,v_{-i})=S$ and $f_i(v_i',v_{-i})=S'$, then $v_i(S)-v_i(S')\ge v_i'(S)-v_i'(S')$. 
    It is well known that:
\begin{lemma}\cite{BCLMNS06,LMN03}\label{wmon-lemma}
    Every allocation rule that is implemented by a dominant-strategy mechanism is weakly monotone. 
\end{lemma}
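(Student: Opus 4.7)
The plan is to give the classical two-line argument that derives weak monotonicity from pairwise dominant-strategy incentive compatibility. Fix a player $i$ and a report profile $v_{-i}$ of the others. Since the allocation rule $f$ is implemented by a dominant-strategy mechanism, there exist payment rules $P_1,\ldots,P_n$ such that for each player, truthful reporting is a dominant strategy. In particular, fixing the reports of the others at $v_{-i}$, for any two valuations $v_i, v_i' \in V_i$ of player $i$, truth-telling dominates misreporting at both of these valuations.

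Write $S := f_i(v_i, v_{-i})$, $S' := f_i(v_i', v_{-i})$, $p := P_i(v_i, v_{-i})$ and $p' := P_i(v_i', v_{-i})$. Applying dominant-strategy incentive compatibility for player $i$ when her true valuation is $v_i$ (so that reporting $v_i$ weakly dominates reporting $v_i'$) yields
\[
v_i(S) - p \;\geq\; v_i(S') - p'.
\]
Symmetrically, when her true valuation is $v_i'$ (so that reporting $v_i'$ weakly dominates reporting $v_i$), we obtain
\[
v_i'(S') - p' \;\geq\; v_i'(S) - p.
\]

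Adding the two inequalities cancels $p$ and $p'$ and rearranging gives
\[
v_i(S) - v_i(S') \;\geq\; v_i'(S) - v_i'(S'),
\]
which is exactly the definition of weak monotonicity. Since the choice of $i$, $v_{-i}$, $v_i$, and $v_i'$ was arbitrary, this establishes the lemma.

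There is no genuine obstacle here: the whole proof is the standard ``add the two IC inequalities and cancel the payments'' trick. The only thing to be careful about is to invoke the existence of a payment rule (guaranteed by the hypothesis that $f$ is implemented by a dominant-strategy mechanism) and to apply dominant-strategy IC in the correct direction at each of the two true valuations.
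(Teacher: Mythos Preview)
Your proof is correct: adding the two incentive-compatibility inequalities (truth-telling at $v_i$ against the deviation to $v_i'$, and vice versa) and cancelling the payment terms is exactly the standard derivation of weak monotonicity. The paper itself does not supply a proof of this lemma; it simply states it as a well-known fact with citations to \cite{BCLMNS06,LMN03}, so your argument is the classical one that those references contain.
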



\section{Improved Approximation Ratios For Two Agents and Two Items}
\label{sec-22}
In this section, we describe improved approximation guarantees for the special case of two bidders and two items. Observe that this is the simplest case for which the power of randomized obviously strategy-proof mechanisms is not resolved.\footnote{If we consider only one bidder, then a mechanism that always allocates to her all the items is optimal and obviously strategy-proof. If there is only one item, then an ascending auction on this item among the bidders is optimal and obviously strategy-proof.} We believe that understanding the power of randomized obviously strategy-proof mechanisms in this simple case is a first step towards understanding their power for the more general settings of arbitrary number of bidders and items. 

We consider three classes of valuations: multi-unit auctions with unknown single-minded bidders (\cref{subsec:22-mua-sm}), which admit a $\frac{4}{3}$-approximation, combinatorial auctions with subadditive bidders (\cref{subsec:22-ca-subadd}), which also admit a $\frac{4}{3}$-approximation, and combinatorial auctions with general bidders (\cref{subsec:22-ca-general}), which admit a $\frac{3}{2}$-approximation.

We remind the reader that for unknown single-minded bidders in a multi-unit auction, even with just two bidders and two items, no mechanism achieves an approximation better than $\frac{5}{6}$ (\cref{thm-mua-sm-lb}). Thus, there remains a gap of $\frac{1}{12}$ in our understanding of this setting.
Since unit-demand and additive valuations are subadditive, these classes also admit a $\frac{4}{3}$-approximation. As we have shown that even for two bidders and two items (\cref{thm-lb-ud,thm-lb-add}) no mechanism achieves an approximation better than $\frac{7}{8}$, there remains a gap of $\frac{1}{8}$.

In the proofs, we denote $v_i(j~|~j') = v_i(\{j,j'\}) - v(\{j'\})$ as the marginal gain in value of bidder $i$ when adding item $j$ to her bundle already containing item $j'$. We denote the optimal welfare with $\opt$. 

\subsection{Multi Unit Auction for Single-Minded Bidders with Unknown Demands} \label{subsec:22-mua-sm}
Let $p = 1/2$ and consider the mechanism $\mathcal{M}_1$ which with probability $p$ runs a second-price auction for the grand bundle and with probability $(1-p)$ allocates a uniformly random agent a single item and runs a second-price auction for the second item.  
\begin{claim}\label{claim-mec-sm-mua-osp}
   The mechanism $\mathcal{M}_1$ is universally OSP.  
\end{claim}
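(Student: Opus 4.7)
The plan is to exhibit $\mathcal{M}_1$ as a distribution over deterministic protocols, each of which fits the template of a generalized ascending auction, and then to invoke \cref{lemma-partial}. Since universal obvious strategy-proofness requires only that every deterministic protocol in the support be OSP (with the corresponding obviously dominant strategies), this reduction will suffice.

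First, I would enumerate the support of $\mathcal{M}_1$. It consists of (i) the ascending-price auction on the grand bundle, chosen with probability $p$, and (ii) for each choice $i \in \{1,2\}$ of the ``gifted'' bidder, the protocol that assigns one of the two (identical) items to bidder $i$ and then runs an ascending auction on the remaining item, each chosen with probability $(1-p)/2$. For protocol (i), taking $X_j^{B} = \emptyset$ and $X_j^{P} = \items$ for each bidder $j$ yields a generalized ascending auction: one price clock rises until the first bidder drops out, who is awarded her base bundle (nothing), while the other is awarded her potential bundle (the grand bundle) at the drop-out price, matching the second-price outcome.

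For protocol (ii), assuming without loss of generality that bidder $1$ is the gifted bidder and receives item $a$, I would set $X_1^{B} = \{a\}$, $X_1^{P} = \{a,b\}$, $X_2^{B} = \emptyset$, $X_2^{P} = \{b\}$. Here the price clock represents the asking price for upgrading from the base bundle to the potential bundle, i.e., the asking price of item $b$; bidder $1$ stays in while the current price is below her marginal value $v_1(2)-v_1(1)$, and bidder $2$ stays in while the current price is below $v_2(1)$. Whenever one of them drops out, allocating potential bundles to active bidders and base bundles to inactive bidders produces a feasible allocation (each item appears in exactly one bundle), so the termination condition of a generalized ascending auction is met.

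Applying \cref{lemma-partial} to each deterministic protocol in the support then gives that each is OSP with the obviously dominant truthful strategies, and hence $\mathcal{M}_1$ is universally OSP by definition. I do not foresee a substantive obstacle: the only non-routine step is the case~(ii) identification of base and potential bundles, and the mild subtlety there is to check that the same rising clock can simultaneously model an ``upgrade'' price for the gifted bidder and an acquisition price for the other bidder, which is exactly what the generalized ascending auction framework permits.
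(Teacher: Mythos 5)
Your proposal is correct and follows the same route as the paper: decompose $\mathcal{M}_1$ into the deterministic protocols in its support, observe that each is a generalized ascending auction, and invoke \cref{lemma-partial}. The paper states this in two sentences without spelling out the base/potential bundle identifications; your explicit verification is a sound elaboration of the same argument (and, incidentally, correctly counts three deterministic protocols in the support rather than the paper's informal ``two'').
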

\begin{proof}
Observe that $\mathcal{M}_1$ is a probability distribution over two deterministic mechanisms, both of which are generalized ascending auctions. By \cref{lemma-partial}, they are therefore obviously strategy-proof. Thus, $\mathcal M_1$ is obviously strategy-proof in the universal sense.
\end{proof}

\begin{claim}
The mechanism $\mathcal{M}_1$ achieves a $\nicefrac 4 3$-approximation to the optimal social welfare.
\end{claim}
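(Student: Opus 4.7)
The plan is to verify the approximation ratio directly by case analysis on the demand profile $(d_1,d_2)\in\{1,2\}^2$, exploiting that $\mathcal{M}_1$ is a uniform mixture of two sub-mechanisms whose expected welfare can be computed separately and then averaged. By the symmetric treatment of the two bidders I may assume $x_1\ge x_2$ throughout. In the grand-bundle branch, the winning bid is the larger of the two values $v_i(\{1,2\})=x_i$ (using $d_i\le 2$), so the welfare of that branch is exactly $x_1$. In the random-assignment branch, a uniformly chosen bidder $a$ is handed one item and contributes $v_a(1)\in\{0,x_a\}$ (according to whether $d_a=2$ or $d_a=1$), after which the second item is awarded by an SPA whose bids are the marginal values of the remaining item given what each bidder already holds.

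The four cases are then routine bookkeeping. For $(d_1,d_2)=(1,1)$ the random branch always recovers $\opt=x_1+x_2$, since the chosen bidder pockets their desired item (value $x_a$) while the other bidder's marginal for the remaining item is still $x_j$; combined with the grand-bundle welfare $x_1$ I obtain expected welfare $\tfrac12 x_1+\tfrac12(x_1+x_2)\ge \tfrac34\opt$. For $(d_1,d_2)=(2,2)$, $\opt=x_1$, and only the bidder handed the first item has a positive marginal for the second, so the random branch contributes $(x_1+x_2)/2$ in expectation; together with the grand-bundle welfare $x_1$ this totals $(3x_1+x_2)/4 \ge \tfrac34\opt$. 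The two mixed cases $\{(1,2),(2,1)\}$ again have $\opt=x_1$ and admit the same bookkeeping, producing expected welfare at least $\tfrac34\opt$ in each.

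I do not anticipate a substantive obstacle beyond tracking the marginal values carefully in each branch: the ``SPA on the second item'' must be interpreted as a generalized ascending auction whose base bundle is one item for the chosen bidder and zero items for the other, which places the analysis squarely within the framework of \cref{lemma-partial}, so the welfare realized by each branch is well defined. The tight instance should be $(d_1,d_2)=(2,2)$ with $x_2=0$, where the expected welfare is exactly $\tfrac34\opt$; this confirms that $\tfrac43$ is the best ratio attainable by this particular mechanism, though not necessarily by an arbitrary universally OSP mechanism.
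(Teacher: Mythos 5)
Your proposal is correct and amounts to the same style of argument the paper uses: compute the expected welfare branch-by-branch, compare against $\opt$, and observe that the worst case ratio is $\nicefrac34$. The only organizational difference is that the paper splits into two cases based on whether the optimal allocation serves one agent or both (which dovetails with the observation that serving both requires $d_1=d_2=1$), and uses that split to actually \emph{derive} $p=\nicefrac12$ by balancing the two case bounds, whereas you enumerate all four demand profiles $(d_1,d_2)\in\{1,2\}^2$ and verify the ratio $\ge\nicefrac34$ directly with $p$ fixed. Both yield identical bookkeeping: your $(1,1)$ case is the paper's ``both served'' case, and your $(2,2),(1,2),(2,1)$ cases collapse into the paper's ``one served'' case. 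Your identification of the tight instance ($(d_1,d_2)=(2,2)$, $x_2=0$) is also correct and consistent with the paper's balancing choice of $p$.
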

\begin{proof}
    Let $v_i$ denote the private value of agent $i$ if she receives a bundle which satiates her demand. 
    
    First suppose that the optimal allocation obtains positive value from exactly one agent.  Without loss of generality, say that this is agent $1$.  We may immediately conclude that $v_1 \geq v_2$.  Consider that, by definition, agent $1$ is satiated by the grand bundle.  Thus, with probability $p$, the mechanism $\mathcal{M}_1$ satiates agent $1$ and receives a value $v_1$.  On the other hand, with probability $(1-p)/2$, the mechanism $\mathcal{M}_1$ allocates agent $1$ the first item and runs a second-price auction for the second item.  This necessarily yields total welfare at least $v_1$ since $v_1 \geq v_2$ and since agent $1$ will either have marginal value for the second item equal to $v_1$ if she is satiated only by both items or marginal value $0$ if she is satiated by a single item.  Thus, in the case where the optimal allocation serves exactly one agent, $\mathcal{M}_1$ achieves an approximation ratio of at least $(p + (1-p)/2)$.

    Now suppose that the optimal allocation obtains positive value from both agents.  Note that we may immediately conclude that each agent is satiated when she receives at least one item.  Without loss of generality, suppose that $v_1 \geq v_2$.  But then, with probability $p$ $\mathcal{M}_1$ allocates both items to bidder $1$ and obtains welfare $p\cdot v_1$.  On the other hand, in the case that $\mathcal{M}_1$ allocates a single item to a uniformly random agent and runs a second-price auction for the second item, we have that the agent who is randomly given the first item has $0$ marginal value for the second item whereas the other agent has positive marginal value for the item.  But then, in this case, which occurs with probability $(1-p)$, $\mathcal{M}_1$ obtains the optimal welfare.  As such, the expected welfare of $\mathcal{M}_1$ is $p \cdot v_1 + (1-p)\cdot (v_1 + v_2)$.  Since $v_1 \geq v_2$ we then obtain an approximation of $p/2 + (1-p)$.

    The approximation in the first case is monotonically increasing in $p$ whereas the approximation in the second case is monotonically decreasing.  Taking $1/2 + p/2 = p/2 + (1-p)$ gives $p = 1/2$ and thus setting $p=\nicefrac{1}{2}$ implies that $\mathcal M_1$ an approximation ratio of $\nicefrac 4 3$ as desired.
\end{proof}

\subsection{Combinatorial Auction with Subadditive Bidders}\label{subsec:22-ca-subadd}
Consider the mechanism $\mathcal{M}_2$ which uniformly at random selects a bidder $i$ and uniformly at random selects an item $j$ and allocates item $j$ to bidder $i$ and after this allocation occurs runs an ascending second-price auction for the remaining item.  Note that $\mathcal M_2$ is in fact a probability distribution over generalized ascending auctions, so by \cref{lem:auction-sampling} it is OSP in the universal sense.
We argue that it achieves a $\frac 4 3$-approximation for the broad class of \emph{monotone subadditive} valuations.  A function
$v:2^M\to \mathbb{R}^{\geq 0}$ is \emph{monotone} if for all bundles $T \subseteq T' \subseteq M$ we have that $v(T) \geq v(T')$.\footnote{Monotonicity is sometimes called ``free-disposal'' in the literature since a bidder is always weakly happier to receive more goods (because she can always ``dispose'' of goods she is unhappy to receive).}  A function $v:2^M\to \mathbb{R}^{\geq 0}$
is \emph{subadditive}  if for all bundles $T, T' \subseteq M$ we have that $v(T) + v(T') \geq v(T \cup T')$. 

Subadditive valuations are more general than submodular valuations and monotone subadditive valuations capture, as special cases, both additive and unit-demand valuations.  As such, $\mathcal{M}_2$ provides improved approximations for two bidders and two items in combinatorial auctions with both unit-demand and additive valuations.
\begin{claim}
The mechanism $\mathcal{M}_2$ achieves a $\nicefrac 4 3$-approximation to the optimal social welfare for monotone subadditive valuations.
\end{claim}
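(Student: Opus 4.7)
\medskip

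\noindent\textbf{Proof Plan.} The plan is to parameterize the outcome of $\mathcal{M}_2$ on each realization of its internal randomness, then do a case analysis on the structure of the optimal allocation, using monotonicity in the \emph{bundle} case and subadditivity in the \emph{split} case.

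\smallskip

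First I would unfold the mechanism. There are four equally likely branches, indexed by $(i,j) \in \{1,2\} \times \{1,2\}$: in branch $(i,j)$, bidder $i$ is first given item $j$ for free, and then the ascending second-price auction on the remaining item $j'$ allocates that item to whichever bidder has the larger marginal value for it. Bidder $i$'s marginal value for $j'$ is $v_i(\{1,2\})-v_i(\{j\})$, while bidder $i'\neq i$'s marginal value is $v_{i'}(\{j'\})$ (since $i'$ has nothing). Therefore the welfare on branch $(i,j)$ equals
\[
W_{i,j} \;=\; v_i(\{j\}) + \max\{\,v_i(\{1,2\})-v_i(\{j\}),\,v_{i'}(\{j'\})\,\} \;=\; \max\bigl\{\,v_i(\{1,2\}),\; v_i(\{j\}) + v_{i'}(\{j'\})\,\bigr\}.
\]
Summing and dividing by $4$ gives $\E[W(\mathcal{M}_2)] = \tfrac14 \sum_{i,j} W_{i,j}$. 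Note that $\opt$ is one of the four quantities $v_1(\{1,2\}),\, v_2(\{1,2\}),\, v_1(\{1\})+v_2(\{2\}),\, v_1(\{2\})+v_2(\{1\})$.

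\smallskip

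Next I would do the case analysis. In the \emph{split case}, e.g.\ $\opt = v_1(\{1\})+v_2(\{2\})$, observe that two of the four branches ($(i,j)=(1,1)$ and $(i,j)=(2,2)$) give welfare at least $v_1(\{1\})+v_2(\{2\})=\opt$ directly from the $\max$. For the other two branches I would use monotonicity: each $W_{i,j} \geq v_i(\{1,2\}) \geq v_i(\{j\})$, and summing these two lower bounds yields at least $v_1(\{1,2\})+v_2(\{1,2\}) \geq v_1(\{1\})+v_2(\{2\}) = \opt$. Hence $\E[W(\mathcal{M}_2)] \geq \tfrac14(2\opt + \opt) = \tfrac{3}{4}\opt$. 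In the \emph{bundle case}, say $\opt = v_1(\{1,2\})$, the branches with $i=1$ each give welfare at least $v_1(\{1,2\}) = \opt$. For the two branches with $i=2$, I would lower bound $W_{2,1} \geq v_2(\{1\}) + v_1(\{2\})$ and $W_{2,2} \geq v_2(\{2\}) + v_1(\{1\})$, whose sum is at least $v_1(\{1\})+v_1(\{2\}) \geq v_1(\{1,2\}) = \opt$ by subadditivity. Again $\E[W(\mathcal{M}_2)] \geq \tfrac14(2\opt + \opt) = \tfrac{3}{4}\opt$.

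\smallskip

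The main obstacle I anticipate is the bundle case: it is the only place where subadditivity is genuinely required, since the two ``off-diagonal'' branches individually need not recover $v_1(\{1,2\})$, and one must combine their contributions and invoke $v_1(\{1,2\}) \leq v_1(\{1\}) + v_1(\{2\})$ to cover $\opt$. Once this inequality is correctly identified, both cases yield a $\tfrac{3}{4}$-fraction of $\opt$ in expectation, establishing the $\tfrac{4}{3}$-approximation claimed.
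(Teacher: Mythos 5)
Your proof is correct and follows essentially the same approach as the paper: the same two-case analysis (optimum concentrated on one bidder vs.\ split), the same identification of two branches that achieve $\opt$ outright and two that jointly cover $\opt$, and the same uses of monotonicity (split case) and subadditivity (bundle case). Your rewriting of each branch's welfare as $W_{i,j} = \max\{v_i(\{1,2\}),\, v_i(\{j\}) + v_{i'}(\{j'\})\}$ is a tidy algebraic consolidation that makes the bookkeeping a bit more symmetric, but it buys no new generality — the inequalities you end up invoking are exactly those the paper uses.
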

\begin{proof}
    Let $\{1,2\}$ denote the set of two bidders and $\{a,b\}$ denote the set of two items.  
    We consider two cases depending on whether the optimal allocation is such that one bidder receives both items or  each bidder receives an item.

    We begin with the former case and without loss of generality assume that the optimal allocation awards both items to bidder $1$ (the case where bidder $2$ obtains both items is symmetric), i.e., $\opt = v_1(\{a,b\})$.   Since $\opt = v_1(\{a,b\})$, it must be that $v_1(a~|~b) \geq v_2(\{a\})$ and $v_1(b~|~a) \geq v_2(\{b\})$. 
    
    We now consider all possible outcomes of the randomized mechanism $\mathcal M_2$. If the mechanism $\mathcal M_2$ randomly allocates either item $a$ or $b$ to bidder $1$ our auction obtains welfare equal to $v_1(\{a,b\})$.  On the other hand, if we randomly allocate $a$ to bidder $2$ then our auction obtains welfare of $v_2(\{a\}) + \max\{v_2(b~|~a),v_1(\{b\})\} \geq v_1(\{b\})$ and, similarly, if we allocate $b$ to bidder $2$ then our auction obtains $v_2(\{b\}) + \max\{v_2(a~|~b),v_1(\{a\})\} \geq v_1(\{a\})$.  Combining these cases with the probabilities they occur gives that $\mathcal M_2$ achieves expected welfare at least $\frac{1}{2}\cdot v_1(\{a,b\}) + \frac{1}{4}\cdot v_1(\{a\}) + \frac{1}{4}\cdot v_1(\{b\})$.  But then, by subadditivity, we have that the auction obtains welfare at least $\frac{3}{4}\cdot v_1(\{a,b\})$.

    Now consider the latter case where the optimal allocation awards each bidder a single item and suppose that bidder $1$ receives $a$ and bidder $2$ receives $b$ (the opposite case is symmetric), i.e., $\opt = v_1(\{a\}) + v_2(\{b\})$.  We then have that $v_1(\{a\}) \geq v_2(\{a~|~b\})$ and $v_2(\{b\}) \geq v_1(\{b ~|~ a\})$.  This means that if $\mathcal M_2$  randomly allocates item $a$ to bidder $1$ or item $b$ to bidder $2$ our auction then obtains the optimal social welfare.  On the other hand, if we allocate $b$ to $1$ our auction obtains welfare $v_1(\{b\}) + \max\{v_1(a~|~b) ,v_2(\{a\})\} \geq v_1(\{a,b\}) \geq v_1(\{a\})$ by monotonicity and similarly if we allocate $a$ to $2$ our auction obtains welfare $v_2(\{a\} + \max\{v_2(b~|~a) ,v_1(\{b\})\} \geq v_2(\{a,b\}) \geq v_2(\{b\})$.  
    
    Combining these cases with the probabilities they occur implies that the expected welfare of $\mathcal M_2$ is at least $\frac{1}{2}\left(v_1(\{a\}) + v_2(\{b\})\right) + \frac{1}{4}v_1(\{a\}) + \frac{1}{4}v_2(\{b\}) = \frac{3}{4}\left(v_1(\{a\}) + v_2(\{b\})\right)$, as desired.
\end{proof}

\subsection{Combinatorial Auction with General Monotone Bidders}\label{subsec:22-ca-general}
Let $p$ be equal to $\frac{1}{3}$. 
Consider the mechanism $\mathcal M_3$ that with probability $p$ runs an ascending second-price auction on the grand bundle and with probability $(1-p)$ allocates a uniformly random agent a uniformly random item and runs an ascending second-price auction for the remaining item. 
\begin{claim}
    The mechanism $\mathcal{M}_3$ is universally OSP.  
\end{claim}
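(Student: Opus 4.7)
The plan is to decompose $\mathcal{M}_3$ into the deterministic mechanisms in its support, verify that each is a generalized ascending auction, and then appeal to \cref{lemma-partial} to conclude that the overall mechanism is universally OSP. The support of $\mathcal{M}_3$ consists of five deterministic mechanisms: the ascending auction on the grand bundle (chosen with probability $p=\nicefrac{1}{3}$), and four mechanisms indexed by a pair (bidder $i$, item $j$) with $i\in\{1,2\}$ and $j\in\{a,b\}$, each chosen with probability $\nicefrac{(1-p)}{4}=\nicefrac{1}{6}$, that pre-allocate item $j$ to bidder $i$ and then run an ascending second-price auction on the remaining item.

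For the grand-bundle mechanism, setting $X_i^{B}=\emptyset$ and $X_i^{P}=\{a,b\}$ for each bidder $i\in\{1,2\}$ immediately casts it as a generalized ascending auction: both bidders are initially active, a common clock price rises for the grand bundle, and the auction terminates as soon as a single bidder remains (who receives the grand bundle) while the dropout gets her base bundle $\emptyset$. For the pre-allocation mechanisms, fix the choice (bidder $i$, item $j$) and denote the other bidder $i'$ and the other item $j'$. I would set $X_i^{B}=\{j\}$ and $X_i^{P}=\{a,b\}$ for bidder $i$, and $X_{i'}^{B}=\emptyset$ and $X_{i'}^{P}=\{j'\}$ for bidder $i'$. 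A joint clock price for the marginal gain in going from the base to the potential bundle then implements exactly the ascending auction on $j'$: bidder $i$ effectively bids her marginal value $v_i(j'\mid j)$, bidder $i'$ bids $v_{i'}(\{j'\})$, and termination occurs once one of them drops out, at which point the feasibility condition in the definition of a generalized ascending auction is met (the remaining active bidder's potential bundle together with the dropout's base bundle form a valid partition of $\{a,b\}$).

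Applying \cref{lemma-partial} to each of the five deterministic mechanisms in the support shows that each is obviously strategy-proof. Since universal OSP is, by definition, the requirement that every deterministic mechanism in the support be OSP, this establishes that $\mathcal{M}_3$ is universally OSP. No real obstacle arises here; the only point that requires a modicum of care is verifying that the ``pre-allocate then auction'' mechanism fits the generalized ascending auction template with the correct choice of base and potential bundles, which is exactly the kind of construction highlighted in the example following the definition of generalized ascending auctions in \cref{sec-prelims}.
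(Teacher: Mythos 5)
Your proof is correct and takes the same approach as the paper: the paper's proof simply observes that $\mathcal{M}_3$ is a distribution over generalized ascending auctions and invokes \cref{lemma-partial} (by reference to the identical argument for $\mathcal{M}_1$). Your version adds a helpful, and accurate, verification that each of the five deterministic mechanisms in the support really does fit the generalized-ascending-auction template via the explicit choice of base and potential bundles, which the paper leaves implicit.
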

The proof is identical to the proof of Claim \ref{claim-mec-sm-mua-osp}. 
\begin{claim}
The mechanism $\mathcal{M}_3$ achieves a $\nicefrac 3 2$-approximation to the optimal social welfare.
\end{claim}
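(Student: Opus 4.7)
The plan is to analyze $\mathcal{M}_3$ by a case split on the structure of the optimal allocation, bounding the welfare contribution of each of the five possible deterministic executions (grand-bundle ascending, plus the four choices of initial (bidder, item) pair) separately. Write $\{1,2\}$ for the bidders and $\{a,b\}$ for the items, and abbreviate $A = v_1(a)$, $B = v_2(b)$, $C = v_1(b)$, $D = v_2(a)$, $E = v_1(\{a,b\})$, $F = v_2(\{a,b\})$. Monotonicity gives $E \geq \max(A,C)$ and $F \geq \max(B,D)$. Because every ascending auction on a single remaining item is won by the bidder with the higher marginal value for that item, each initial-assignment execution yields welfare equal to the maximum of two numbers: for instance, giving $a$ to bidder $1$ initially yields $A + \max(E - A, B) = \max(E, A+B)$. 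Universal OSP is immediate since $\mathcal{M}_3$ is a distribution over generalized ascending auctions and \cref{lemma-partial} applies, exactly as in the preceding claim.

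Case 1: the optimum assigns both items to one bidder, WLOG bidder $1$, so $\opt = E$ and $E \geq F$, $E \geq A+B$, $E \geq C+D$. Then grand-bundle ascending (probability $\tfrac13$) gives welfare $\max(E,F) = E = \opt$; bidder $1$ receiving $a$ initially (probability $\tfrac16$) yields $\max(E, A+B) \geq E = \opt$; and by the symmetric calculation bidder $1$ receiving $b$ initially (probability $\tfrac16$) yields $\max(E, C+D) \geq E = \opt$. Summing just these three contributions gives $\mathbb{E}[W] \geq \tfrac{2}{3}\opt$, and the remaining two executions (initial item to bidder $2$) contribute nonnegatively.

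Case 2: the optimum splits the items, WLOG $\opt = A + B$ with $A+B \geq \max(C+D, E, F)$. Bidder $1$ receiving $a$ initially yields $\max(E, A+B) = A+B = \opt$, and symmetrically bidder $2$ receiving $b$ yields $\opt$, contributing $\tfrac16\opt + \tfrac16\opt = \tfrac13\opt$ in expectation. For the other three executions I appeal only to monotonicity: grand-bundle ascending yields $\max(E,F) \geq \max(A,B) \geq \tfrac{A+B}{2}$; bidder $1$ receiving $b$ initially yields $\max(E, C+D) \geq E \geq A$; and bidder $2$ receiving $a$ initially yields $\max(F, C+D) \geq F \geq B$. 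Weighted by probabilities $\tfrac13, \tfrac16, \tfrac16$, these contribute at least $\tfrac{1}{6}(A+B) + \tfrac{1}{6}A + \tfrac{1}{6}B = \tfrac13\opt$, so altogether $\mathbb{E}[W] \geq \tfrac23\opt$ in this case as well.

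Combining both cases gives $\opt / \mathbb{E}[W] \leq \tfrac{3}{2}$. I do not anticipate a real obstacle here, only bookkeeping; the one delicacy is verifying that $p = \tfrac13$ is the right balance, since Case~1 benefits from larger $p$ (grand-bundle ascending is optimal there) while Case~2 benefits from smaller $p$ (initial-assignment cases are optimal there), and the inequality $\max(A,B) \geq (A+B)/2$ is precisely what makes the $\tfrac13$ split tight in Case~2.
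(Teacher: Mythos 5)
Your proof is correct and follows essentially the same approach as the paper's: both split on whether the optimum assigns both items to one bidder or one item to each, and both enumerate the five deterministic executions and bound their welfare via the same monotonicity/ascending-auction observations (e.g., the grand-bundle run yields $\max\{v_1(\{a,b\}),v_2(\{a,b\})\}\geq\opt/2$, and the two ``right'' initial assignments yield $\opt$ exactly). The only cosmetic difference is that you write the ascending-auction welfare directly as $\max(E,A+B)$ and bound the two ``wrong'' initial-assignment executions separately, whereas the paper bounds those two jointly via $v_1(\{a,b\})+v_2(\{a,b\})$; the resulting inequalities are identical.
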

\begin{proof}
    Let $\{1,2\}$ denote the set of two bidders and $\{a,b\}$ denote the set of the two items. 
Throughout the proof, we slightly abuse notation by writing the value of a valuation $v$ for item $a$ as  $v(a)$ instead of $v_1(\{a\})$. 

The proof goes by a case  analysis.  First, we analyze the approximation guarantee of the mechanism if the optimal allocation allocates both items to the same agent, and then we analyze the case where the optimal allocation assigns a different item to each bidder. 

We begin with the first case:  assume that the optimal allocation assigns both $a$ and $b$ to the same bidder, say bidder $1$ without loss of generality. 
Note that if we run ascending auction on the grand bundle, then by assumption it outputs the optimal allocation.  Now, consider the case where $\mathcal M_3$ randomly allocates bidder $1$ with item $a$, and then we runs an ascending auction on item $b$.  Note that:
$v_1(a,b) \ge v_1(a)+v_2(b)$, so $v_1(a ~|~ b)\ge v_2(b)$. Therefore, the mechanism obtains welfare of $v_1(a)+\max\{v_1(b~|~a)+v_2(b)\}\ge v_1(a,b)=\opt$, i.e., $\mathcal M_3$ obtains the optimal welfare. Due to the same reasons, $\mathcal M_3$ obtains the optimal welfare also for the case where $b$ is randomly allocated to bidder $1$. 
Therefore, the expected welfare is  at least $\big(p+\frac{1-p}{2}\big)\cdot \opt$. 

Consider the complementary case where the optimal allocation assigns one item to each bidder, without loss of generality 
item $a$ to  $1$ and  $b$ to $2$. 
The expected welfare of $\mathcal M_6$ is:
\begin{align*}
    &p\cdot \underbrace{\max\{v_1(a,b),v_2(a,b)\}}_{\ge OPT/2} \quad&\text{(ascending auction on $\{a,b\}$)} &\\
    +&\frac{1-p}{4}\cdot\Big(v_1(a)+\max\big\{v_2(b),v_1(a,b)-v_1(a)\big\}\Big) &(1\gets a) \\ 
    +&\frac{1-p}{4}\cdot\Big(v_1(b)+
    \max\big\{v_2(a),v_1(a,b)-v_1(b)\big\} \Big) &(1\gets b) \\
    +&\frac{1-p}{4}\cdot\Big(v_2(a)+
    \max\big\{v_1(b),v_2(a,b)-v_2(a)\big\} \Big) &(2\gets a) \\
    +&\frac{1-p}{4}\cdot\Big(v_2(b)+
    \max\big\{v_1(a),v_2(a,b)-v_2(b)\big\} \Big) &(2\gets b)
\end{align*}
where $(1\gets a)$ signifies the case where bidder $1$ is randomly allocated with $a$ and an ascending auction is run on $b$, and $(1\gets b)$, $(2\gets a)$ and $(2\gets b)$ are defined analogously. 
Now, observe that since $\opt=v_1(a)+v_2(b)$, we have that: 
 $v_2(b)\ge v_1(b~|~a))$ and also $v_1(a)\ge v_2(a~|~b))$. Therefore, we obtain the optimal welfare in cases $(1\gets a)$ and $(2\gets b)$.  

 The contribution of the remaining cases,
where the bidders are allocated the \textquote{wrong} items, i.e., $(1\gets b)$ and $(2\gets a)$, to the expected welfare is: 
 \begin{multline*}
     \frac{1-p}{4}\cdot\Big(v_1(b)+
    \max\big\{v_2(a),v_1(a~|~b)\big\}+ v_2(a)+
    \max\big\{v_1(b),v_2(b~|~a)\big\}\Big)   \\  
    \ge \frac{1-p}{4}\cdot\Big( v_1(a,b)+v_2(a,b)  \Big) \ge \frac{1-p}{4}\cdot\big(v_1(a)+v_2(b)\big)=\frac{1-p}{4}\cdot \opt
 \end{multline*}
Therefore, the expected welfare is $\frac{p\cdot \opt}{2}$ + $\frac{3(1-p)\cdot \opt}{4}$.  
Since $p=\frac{1}{3}$, in both cases the expected welfare is at least $\frac{2}{3}$ of the optimum, which completes the proof. 
\end{proof}


\section{The Connection of Obviously Strategy-Proof Auctions and Weakly Group Strategy-Proof Mechanisms}\label{app:wgsp}
In his seminal work, \cite{li} has observed that  obviously strategy-proof mechanisms satisfy the quality of being weakly group strategy-proof\footnote{Loosely speaking, a mechanism $A$ is weakly group strategy-proof if no coalition of bidders can all simultaneously increase their utility by deviating from the dominant strategy.}, but the converse is not true: the top trading cycles mechanism is weakly group strategy-proof, but is not obviously strategy-proof (see also \cite{troyan2019obviously}). This gives rise to the question of understanding the differences between the two notions.

On one hand, weak group strategy-proofness is already
quite restrictive. For illustration, in certain settings (such as multi-unit auction with decreasing marginal valuations)  enforcing weak group strategy-proofness precludes exact welfare approximation \cite{GMR17}. 
This raises the question of how much more restrictive a notion can become beyond weak group strategy-proofness, and in particular whether obvious strategy-proofness is more restrictive than weak group strategy-proofness.  We ask:
are there settings in which OSP mechanisms must necessarily obtain worse approximation guarantees than mechanisms which are ``only'' weakly group strategy-proof? 

We answer this question affirmatively by examining the setting with heterogeneous items and unit-demand bidders.  First, as \cite{Ron24} has shown, deterministic obviously strategy-proof mechanisms that satisfy individual rationality and no negative transfers cannot obtain approximation better than $\min\{m,n\}$. Furthermore, as we show in \cref{thm-lb-ud},  one cannot achieve a $(\nicefrac 8 7-\varepsilon)$-approximation to the social welfare with any randomized OSP mechanism in this setting (similarly, assuming individual rationality and no negative transfers).  On the other hand, we will now explain why  for this setting the VCG mechanism (which is clearly optimal and deterministic) \emph{is} weakly group strategy-proof.  

This holds due to the combination of the following two well known facts.
First, the outcome corresponding to the minimum-price Walrasian equilibrium is weakly group strategy-proof in unit-demand settings even beyond quasi-linear utilities (see, e.g., \cite{morimoto2015strategy}). The second known fact is that for unit-demand settings the minimum-price Walrasian equilibrium corresponds to the VCG outcome and prices (see, e.g., Theorem 5 of \cite{gul1999walrasian}).
Therefore, we obtain a separation between the approximation ratio achievable by OSP mechanisms (which is $\min\{m,n\}$ for deterministic mechanisms \cite{Ron24}, and bounded away from $1$ for randomized ones) and the ratio of weakly group strategy-proof mechanisms (which is $1$ exactly).

We leave open the question of understanding the power of weakly group strategy-proof mechanisms for additional auction settings. In particular, we have yet to understand the power of weakly group strategy-proof mechanisms for additive bidders.  
For this setting, we cannot apply the arguments previously made for unit-demand bidders, as the the Vickrey-Clarke-Groves (VCG) mechanism 
is not weakly group strategy-proof\footnote{Take an example with two bidders and two items where the first bidder has value $10$ for both items and the second has value $9$ for both items.  By colluding and reporting that they only value distinct items, both bidders receive an item at a price of $0$ (which is preferable to the VCG outcome for both bidders).}.


\section{Missing Proofs from Section \ref{sec-prelims}}\label{app:missing-proofs-prems}

\begin{proof}[Proof of Lemma \ref{lemma-partial}]
    We establish the proof by describing an obviously dominant strategy for every bidder.
We argue that for each bidder $i$ the strategy which has the bidder remain in the auction if and only if her current clock price $p_i$ is less than or equal to $v_i(X_i^{P}) - v_i(X_i^{B})$, i.e., the increase in value she has for receiving $X_i^{P}$ compared to $X_i^{B}$, is obviously dominant. 
We prove it by a straightforward case analysis. In particular, we demonstrate that the worst case given the truthful strategy is always more profitable than the best case given any other strategy. 

Consider first the case where $p_i \leq v_i(X_i^{P}) - v_i(X_i^{B})$.  The best-case utility $i$ can receive by deviating from the strategy and exiting the auction early is $v_i(X_i^{B})$ (by receiving $X_i^{B}$ for a price of $0$), but the worst-case utility she receives by following the strategy and staying in the auction until the next price increment is also $v_i(X_i^{B})$ (since she can always exit the auction if the price then becomes too high).  

Now consider the case where $p_i > v_i(X_i^{P}) - v_i(X_i^{B})$.  The worst-case utility bidder $i$ can receive by following the strategy and exiting the auction is $v_i(X_i^{B})$. 
On the other hand, if she deviates from it by remaining an active player and the auction were to terminate she would obtain utility 
less than  $v_i(X_i^{P}) - (v_i(X_i^{P}) - v_i(X_i^B)) \leq v_i(X_i^B)$.  Thus, in this case as well, her worst case utility from following the strategy is better than the best case when deviating, which completes the proof.
\end{proof}

\begin{proof}[Proof of \cref{lem:yaos}] 
Fix a randomized mechanism $\mathcal M$ and 
denote with $\mathcal{D}_{\mathcal{M}}$ its distribution over the deterministic mechanisms in its support.
Note that by assumption every deterministic mechanism $A$ in the support of $\mathcal M$ satisfies that:
$$
    \E_{(v_1,\ldots,v_n)\sim {\mathcal{D}}}\Big[\dfrac{A(v_1,\ldots,v_n)}{OPT(v_1,\ldots,v_n)}\Big]\le 
    \frac{1}{\alpha}
$$
Since this is true for every deterministic mechanism in the support, averaging over the randomness of the mechanism $\mathcal M$:
$$
   \E_{(v_1,\ldots,v_n)\sim \mathcal{D},A\sim \mathcal {D}_M}\Big[\dfrac{A(v_1,\ldots,v_n)}{OPT(v_1,\ldots,v_n)}\Big]\le \frac{1}{\alpha} 
$$
Denote with $(v_1^\ast,\ldots,v_n^\ast)$ the valuation profile in the support of $\mathcal D$ that minimizes $\E_{A\sim \mathcal \mathcal D_M}\Big[\frac{A(v_1^\ast,\ldots,v_n^\ast)}{OPT(v_1^\ast,\ldots,v_n^\ast)}\Big]$. Since the minimum is  smaller or equal than the average:
$$
   \E_{A\sim \mathcal \mathcal D_M}\Big[\dfrac{A(v_1^\ast,\ldots,v_n^\ast)}{OPT(v_1^\ast,\ldots,v_n^\ast)}\Big]\le  \frac{1}{\alpha}
$$
Thus, the randomized mechanism $\mathcal M$ does not give an  approximation better than $\alpha$ to the optimal welfare given the valuation profile $(v_1^\ast,\ldots,v_n^\ast)$, which completes the proof.
\end{proof}

\section{Missing Proofs from Section \ref{sec-mua}: Multi-Unit Auctions}\label{app-missing-mua}
\subsection{Proof of Theorem \ref{claim:ascending-bundles-approx}}
Fix a valuation profile $(v_1,d_1),\ldots,(v_n,d_n)$ where $d_i$ is the desired set and $v_i$ is the value for them and
consider a social welfare optimizing allocation $\vec O=(O_1,\ldots,O_n)$, whose  welfare we denote with $\opt$. In particular, Let $\vec O$ be the allocation that allocates the minimum number of items among the optimal ones.\footnote{Note that we slightly abuse notation here, since $v_i$ also stands for a valuation function.}   

Consider partitioning the bidders who receive at least one item in this allocation into $k+1$ disjoint subsets based on the number of items allocated to them 
such that bidders in group $i\in \{0,1, 2, \dots, k=\lceil \log{m} \rceil-1\}$ receive between $2^{i}+1$ and $2^{i+1}$ items. We denote the subset of bidders in group $i$ with $\mathcal B_i$. 

Denote with $\mathcal U[0,k]$ the uniform distribution over the integers $\{0,\ldots,k\}$. Observe that the fact there exist at most $\log m$ groups implies that: 
\begin{equation}\label{eq-opt-lb}
\mathbb{E}_{i \sim \mathcal{U}[0, k]} \big[\sum_{i \in \mathcal B_i} v_i\big] = 
\frac{1}{\log m} \cdot \sum_i v_i(O_i)= \frac{OPT}{\log m}
\end{equation}

For every $i\in \{0,\ldots,k\}$, denote with $\delta_i$ the total number of bidders whose demand $d_i$ is smaller than or equal to $2^{i+1}$ (including bidders who are not allocated in the optimal allocation).  Observe that since
each bidder in the group $\mathcal B_i$ receives between $2^{i}+1$ and $2^{i+1}$ items in the optimal allocation, the number of the bidders in each subset  $\mathcal B_i$ is at most 
$\min\{\Big\lfloor \frac{m}{2^{i}+1}\Big\rfloor ,\delta_i\}$.  

Now, fix an integer $i\in \{0,1,\ldots,k\}$ and
consider the allocation of \textsc{Random-Bundles} whenever it allocates bundles of size $2^{i+1}$. Denote the bidders who are satiated in this allocation with $\mathcal A_i$. We make the following two observations regarding this set of bidders.

First, note that by construction the bidders in $\mathcal A_i$ are the bidders with the highest values who are  satiated by bundles of size  $2^{i+1}$. Since the bidders in $\mathcal B_i$ are also satiated by bundles of size $2^{i+1}$, we get that every bidder who is in $\mathcal A_i$ has a higher value than any bidder who is in $\mathcal B_i\setminus \mathcal A_i$. Formally: 
\begin{equation}\label{eq-diff}
  \max_{i\in \mathcal B_i \setminus \mathcal A_i} v_i \le \min_{i\in \mathcal A_i} v_i  
\end{equation}
Second, since \textsc{Random-Bundles} allocates bundles of size $2^{i+1}$, it implies that $|\mathcal A_i|\ge \min\{\lfloor \frac{m}{2^{i+1}}\rfloor,\delta_i\}$ and therefore: 
\begin{equation}\label{eq-diff-count}
2|\mathcal A_i|\ge \min \{\Big\lfloor \frac{m}{2^{i}}\Big\rfloor,2\cdot \delta_i\} \ge |\mathcal B_i|
\end{equation}
where we remind that the rightmost inequality holds because $|\mathcal B_i|\le \min\{\Big\lfloor \frac{m}{2^{i}+1}\Big\rfloor ,\delta_i\}$.
Combining (\ref{eq-diff}) and (\ref{eq-diff-count}) gives that for every subset of bidders $\mathcal B_i$:
\begin{equation}\label{eq-comp}
    \sum_{i\in \mathcal B_i} v_i \le    \sum_{i\in \mathcal B_i\cap \mathcal A_i} v_i  + 
        \sum_{i\in \mathcal B_i\setminus \mathcal A_i} v_i \le \sum_{i\in \mathcal A_i} v_i + |\mathcal B_i \setminus \mathcal A_i|\cdot \min_{i\in \mathcal A_i} v_i \le 3\cdot \sum_{i\in \mathcal A_i} v_i
\end{equation}
where the rightmost inequality holds since clearly $|\mathcal B_i\setminus \mathcal A_i| \le |\mathcal B_i|$.

Now, denote with $ALG$ the expected welfare of 
\textsc{Random-Bundles}. We remind that \textsc{Random-Bundles} samples $i\in \{0,\ldots,k\}$ uniformly at random, and therefore:
\begin{align*}
     ALG &=  \mathbb{E}_{i \sim \mathcal{U}[0, k]} \Big[\sum_{i \in \mathcal A_i} v_i\Big] \\
& \ge \mathbb{E}_{i \sim \mathcal{U}[0, k]} \Big[\frac{\sum_{i \in \mathcal B_i} v_i}{3}\Big] &\text{(by (\ref{eq-comp}))} \\
 &= \frac{OPT}{3\log m} &\text{(by (\ref{eq-opt-lb}))} 
\end{align*}
which completes the proof.

\subsection{Proof of Lemma \ref{lem:auction-sampling}}\label{subsec::proof-auction-sampling}

To prove \cref{lem:auction-sampling}, we rely on the following useful and general lemma from \cite{bei2017worst}.  
\begin{lemma}[Lemma 2.1 from \cite{bei2017worst}]
Consider any subadditive function $f : A \rightarrow \mathbb{R}$.  For a given subset $S \subseteq A$ and a positive integer $k$ we assume that $f(S) \geq k \cdot f(\{i\})$ for any $i \in S$.  Further, suppose that $S$ is divided uniformly at random into two groups $T_1$ and $T_2$.  Then, with probability of at least $1/2$, we have $f(T_1) \geq \frac{k-1}{4k} \cdot  f(S)$ and $f(T_2) \geq \frac{k-1}{4k} \cdot f(S)$.
\end{lemma}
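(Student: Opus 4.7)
The plan is to combine subadditivity with a concentration argument over the random split. Subadditivity gives $f(T_1) + f(T_2) \geq f(S)$ deterministically, and since the random partition is invariant under swapping $T_1 \leftrightarrow T_2$, the two random variables $f(T_1)$ and $f(T_2)$ are identically distributed. Writing $\beta = (k-1)/(4k)$ and noting $\beta < 1/2$, the event that \emph{both} $f(T_j) < \beta f(S)$ is deterministically empty (it would force $f(T_1)+f(T_2) < 2\beta f(S) < f(S)$). Combined with symmetry, this yields the clean identity
\[
\Pr\bigl[f(T_1) \geq \beta f(S)\text{ and }f(T_2) \geq \beta f(S)\bigr] \;=\; 1 - 2\,\Pr\bigl[f(T_1) < \beta f(S)\bigr],
\]
so the whole proof reduces to establishing the one-sided tail bound $\Pr[f(T_1) < \beta f(S)] \leq 1/4$.

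To attack this tail bound, I would pass to the linear surrogate $Y_j := \sum_{i \in T_j} f(\{i\})$. By subadditivity, $f(T_j) \leq Y_j$, and the lower bound $f(T_1) \geq f(S) - f(T_2) \geq f(S) - Y_2 = Y_1 + f(S) - U$ also follows, where $U := \sum_{i \in S} f(\{i\}) = Y_1 + Y_2$ is deterministic. Consequently $\{f(T_1) < \beta f(S)\} \subseteq \{Y_1 < U - (1-\beta) f(S)\}$, and the task is reduced to a lower-tail bound for $Y_1$, which is a sum of independent scaled Bernoullis with $\mathbb{E}[Y_1] = U/2$ and individual terms bounded by $M := \max_{i} f(\{i\}) \leq f(S)/k$.

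The variance estimate $\mathrm{Var}(Y_1) = \tfrac{1}{4}\sum_i f(\{i\})^2 \leq \tfrac{MU}{4} \leq \tfrac{f(S)\,U}{4k}$ is the precise point where the hypothesis $f(\{i\}) \leq f(S)/k$ enters, and it is what drives the $1/k$-type improvement in the constant. Chebyshev's inequality handles the regime where $U$ is comfortably below $2(1-\beta) f(S)$; for the boundary regime, one can either sharpen via Hoeffding, or apply Azuma-Hoeffding to the martingale $D_m := f(T_1^{(m)}) - f(T_2^{(m)})$ obtained by processing the elements of $S$ one at a time with bounded differences $\leq M$, and then bound $\min(f(T_1),f(T_2)) \geq (f(S) - |D|)/2$. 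The main obstacle is matching the \emph{exact} constant $(k-1)/(4k)$: this is tight when $|S| = k$ and $f$ is essentially additive with $f(\{i\}) = f(S)/k$, a regime with only $k$ independent coin flips where generic concentration inequalities are loose. Closing the constant in this small-$|S|$ corner likely requires a direct distributional argument rather than an off-the-shelf tail bound.
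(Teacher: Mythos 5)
The paper does not actually prove this lemma; it cites it directly from \cite{bei2017worst}. Evaluating your attempt on its own merits: the opening reduction is clean and correct. Disjointness of the two bad events (since $2\beta<1$) plus exchange symmetry do give
$\Pr[\text{both } f(T_j)\geq \beta f(S)] = 1-2\Pr[f(T_1)<\beta f(S)]$, so the whole lemma collapses to showing $\Pr[f(T_1)<\beta f(S)]\le 1/4$. That part is right, and it is the natural first move.

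The gap is in the passage to the linear surrogate. The inclusion $\{f(T_1)<\beta f(S)\}\subseteq\{Y_1<U-(1-\beta)f(S)\}$ only uses subadditivity in the direction $f(T_2)\le Y_2$, and it is badly lossy whenever $f$ is genuinely subadditive rather than nearly additive. Concretely, take $k=2$, $|S|=4$, and $f(T)=\min(|T|,2)\cdot f(S)/2$ (budget-additive, and $f(\{i\})=f(S)/2$ saturates the hypothesis). Here $U=2f(S)>2(1-\beta)f(S)=\tfrac{7}{4}f(S)$, the threshold $U-(1-\beta)f(S)$ sits strictly above $\mathbb{E}[Y_1]=U/2$, and a direct count gives $\Pr[Y_1<U-(1-\beta)f(S)]=\Pr[|T_1|\le 2]=11/16$, while the quantity you actually need to bound, $\Pr[f(T_1)<\beta f(S)]=\Pr[|T_1|=0]=1/16$, is tiny. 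So no tail bound on $Y_1$ can possibly recover the claim in the regime $U>2(1-\beta)f(S)$, which you gesture at but do not resolve: the surrogate has thrown away the information that $f(T_1)$ saturates well below $Y_1$. You would need to bound $f(T_1)$ from below by something that exploits subadditivity of $f$ itself on $T_1$, not just the singleton values. Two further issues: (i) the Azuma suggestion does not apply as written, because $D_m=f(T_1^{(m)})-f(T_2^{(m)})$ is generally not a martingale for subadditive $f$ (the conditional increment is $\tfrac12[f(T_1^{(m-1)}\cup\{i\})-f(T_1^{(m-1)})]-\tfrac12[f(T_2^{(m-1)}\cup\{i\})-f(T_2^{(m-1)})]$, which need not vanish); McDiarmid would apply to $\min(f(T_1),f(T_2))$ but then you would still need a lower bound on its expectation, which is the hard part. (ii) As you yourself flag, even in the favorable regime Chebyshev cannot reach the sharp constant $(k-1)/(4k)$ when $|S|=k$, and you leave that as an open loose end rather than closing it. In its current form the argument is an honest and well-organized reduction, but the tail bound at its core is not established, and the large-$U$ subadditive case shows the chosen surrogate cannot establish it.
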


Applying this lemma to our auction setting, we can prove our desired statement.

\begin{proof}[Proof of Lemma \ref{lem:auction-sampling}]
To show the desired statement we first argue that the optimal welfare $f$ achievable in any  auction setting (either with homogeneous or heterogeneous items) is a \emph{subadditive} function \emph{over the bidders}. To this end,  consider a set of bidders $S$ and the optimal solution over these bidders.  Observe that each item is allocated to any bidder at most once.  Thus, for any two sets $T_1$ and $T_2$ such that $T_1 \cup T_2 = S$ we have that any feasible solution $x$ in $S$ comprises bidders which appear either in $T_1$ or $T_2$ (or possibly both).  Since we can then construct feasible solutions in $T_1$ and $T_2$ which capture the allocation $x$ it must be that the value of the optimal solution in $T_1$ plus the value of the optimal solution in $T_2$ is greater than or equal to the value of the optimal solution in $S$.

But then, since we have that the optimal welfare on an instance of combinatorial auctions is subadditive over the ground set of bidders and we assume that no bidder is critical, we have that the $k$ in the statement of Lemma 2.1 of \cite{bei2017worst} is $100$.  As such, we have that when we randomly partition the bidders into unsampled and sampled sets, both sets have an optimal welfare that is sets are within a factor $99/400 > 1/5$ of the optimal welfare with probability at least $1/2$.
\end{proof}

\subsection{Proof of Lemma \ref{lemma-small-pay}}\label{subsec::proof-lemma-small-pay}
The proof is a direct consequence of the approximation guarantees of the mechanisms considered, as well 
as  the fact that they are
obviously strategy-proof (and thus also dominant-strategy incentive compatible) and satisfy individual rationality and no negative transfers. 
The proof is very similar to a proof in \cite{Ron24} and we write it for completeness.

\begin{proof}[Proof of Lemma \ref{lemma-small-pay}]
For part \ref{item-1}, note that because of individual rationality:
\begin{equation}\label{eq-1}
 v_1^{one}(f(v_1^{one},v_2^{one}))-P_1(v_1^{one},v_2^{one})\ge 0   
\end{equation}
We remind that by Claim \ref{claim-mua-sm-instances} part \ref{condi-1}, the allocation rule $f$ allocates to bidder $1$ at least one item given $(v_1^{one},v_2^{one})$, so:
\begin{equation}\label{eq-2}
 v_1^{one}(f(v_1^{one},v_2^{one}))=1   
\end{equation}
Combining (\ref{eq-1}) and (\ref{eq-2}) gives part \ref{item-1}.

For part \ref{item-2}, note that given $(v_1^{ONE},v_2^{ALL})$, Claim \ref{claim-mua-sm-instances} part \ref{condi-2} implies that
player $1$ gets the empty bundle. Because of individual rationality: 
$$v_1^{ONE}(f(v_1^{ONE},v_2^{all}))-P_1(v_1^{ONE},v_2^{all})\ge 0 $$
Since $v_1^{ONE}(f(v_1^{ONE},v_2^{all}))=0$, we get that $0\ge P_1(v_1^{ONE},v_2^{all})$, and because of the no negative transfers property, we get $P_1(v_1^{ONE},v_2^{all})=0$, as needed.

For part \ref{item-3}, note that by Claim \ref{claim-mua-sm-instances} part \ref{condi-3}, player $1$ wins all of the items given $(v_1^{all},v_2^{one})$, so $v_1^{all}(f(v_1^{all},v_2^{one}))=k^2$. Combining this fact with individual rationality implies that $P_1(v_1^{all},v_2^{one})\le k^2$. Therefore:
\begin{equation*}
    v_1^{ALL}(f(v_1^{all},v_2^{one}))-P_1(v_1^{all},v_2^{one})\ge k^4 - k^2
\end{equation*}
Since the mechanism is obviously strategy-proof, it is therefore also dominant-strategy incentive compatible, we get that: $    v_1^{ALL}(f(v_1^{ALL},v_2^{one}))-P_1(v_1^{ALL},v_2^{one})\ge k^4 - k^2
$. The only valuable bundle for player $1$ given $v_1^{ALL}$ is the grand bundle, so we get that 
$f(v_1^{ALL},v_2^{one})$ allocates all items to player $1$. In addition, it implies that the payment of player $1$ given $(v_1^{ALL},v_2^{one})$ is necessarily at most $k^2$, which completes the proof. \qedhere

\end{proof}

\subsection{Proof of Theorem \ref{thm:decreasing-marginals}}
\label{subsec::proof--dec-mua}
\begin{proof}
First of all, the mechanism is obviously strategy-proof, and the proof of it is identical to the proof of \cref{claim-mua-sm-mechanism-osp}. 

We now proceed to prove the approximation guarantee of the mechanism, following an approach that closely resembles the proof of \cref{lem:single-minded-approx}. We remind that a bidder $i$ is \emph{critical} if allocating to her the grand bundle
gives a $1/100$-approximation to the optimal welfare. 
Thus, if there exists a critical bidder, we obtain a $1/200$-approximation by running an ascending auction for the grand bundle with probability at least $1/2$.

    In the case that there does not exist a critical bidder, we may again appeal to our sampling lemma, i.e., Lemma \ref{lem:auction-sampling}, to show that when we sample bidders, we obtain an ``accurate enough'' estimates with probability at least $1/2$. Formally, with probability $1/2$ we have that the social welfare $\text{OPT}(S)$ contained in the sampled set $S$ is between $\text{OPT}/5$ and $\text{OPT}$  and the same holds for $\text{OPT}(U)$, the welfare among the unsampled bidders. 
    Therefore, with probability $\frac 1 2 $ we set a per-item price $p$ which is in the range $[\text{OPT}/(50m), \text{OPT}/(10m)]$. We will now further refine the case analysis by examining the number of items sold.
    
    The \textquote{easy} case is if we sell at least $\nicefrac m 2$ items. 
      Since an unsampled bidder buying $t$ goods spends at least $\frac{t\text{OPT}}{50m}$, their value for the purchased bundle is at least $\frac{t\text{OPT}}{50m}$. Therefore, the total value of all bidders who purchase goods is at least $\frac{m}{2}\cdot\frac{\text{OPT}}{50m} = \frac{\text{OPT}}{100}$. Therefore,  we  obtain welfare of at least $\text{OPT}/100$. Altogether, since we run uniform sampling with probability $\nicefrac{1}{2}$ and the estimation is
\textquote{good} with probability $\nicefrac{1}{2}$, we obtain  a $400$-approximation to the welfare.

    Suppose, by contrast, that we sell fewer than $\nicefrac m 2$ items to the unsampled bidders.  
    To analyze this case, let $\vec{q}=(q_1,\dots,q_n)$ be the optimal allocation if the items are divided only among the bidders in $U$ (clearly, every bidder not in $U$ is allocated zero items, and the welfare of $\vec q$ is equal to $\text{OPT}(U)$). 
    Observe that the welfare of $\vec q$
is partitioned to ``low'' marginal values, i.e., marginal values less than or equal to $p$ and ``high'' marginal values, which are greater than $p$.

Observe that since Mechanism \ref{alg:single-minded} allocates less than $\frac m 2$ items, every bidder in  $U$  could have bought additional items at a price of $p$. Therefore, all bidders who were allocated in $\vec q$ were also allocated by Mechanism \ref{alg:single-minded} all the items for which they had \textquote{high} marginal values. Therefore, it remains to bound the loss coming from \textquote{low} margins. Observe that the marginal value each agent in $U$ has for receiving an additional good is no more than $\text{OPT}/(10m)$ and since there are $m$ items in total allocated in $\vec q$, the total welfare of $\vec q$ coming from \textquote{low} marginals is at most $\text{OPT}/10$. However, by assumption the total welfare of $\vec q$ is at least $\text{OPT}/5$,  so at least 
$\text{OPT}/10$ of welfare comes from \textquote{high} marginals which also contribute to the welfare of the allocation of Mechanism \ref{alg:single-minded}. As we said before, this 
depends on finding a good partition of $U,S$ and running a uniform price auction which occurs in probability $\nicefrac{1}{4}$, 
so overall the expected welfare of the mechanism in this case is at least $\frac{OPT}{40}$.

Combining all cases, we conclude that the expected welfare of Mechanism \ref{alg:single-minded} is at least $\frac{OPT}{400}$,  thereby completing the proof.
\end{proof}


\subsection{ Proof of  Theorem \ref{thm-lb-mua-dec}: Impossibility for 2 Items and 2 Bidders} \label{subsec-lb-proof-mua-dec}
First of all, We assume that the domain $V_i$ of each bidder consists of valuations with values in  $\{0,1,\ldots,k^4\}$ that satisfy decreasing marginal utilities, where $k$ is an arbitrarily large number. 

Assume towards a contradiction that 
there exists an obviously strategy-proof, individually rational, no negative transfers mechanism $A$ together with strategy profile $\mathcal S=(\mathcal S_1,\mathcal S_2)$
that implement an allocation rule and payment schemes  $(f,P_1,P_2):V_1\times V_2\to \allocs\times \mathbb R^2$, where $f$ 
gives an approximation strictly better than $2$ to the optimal social welfare.  
For every player $i$, we define three valuations that will be of particular interest:  
$$
v_i^{ALL}(x)=\begin{cases}
k^2 &\quad x=1, \\
2k^2 &\quad x=2.
\end{cases} \\\quad 
v_i^{one}(x)= \begin{cases}
1 \quad x=1, \\
1 \quad x=2.
\end{cases}
 \quad
v_i^{ONE}(x)=\begin{cases}
4k \quad x= 1, \\
4k  \quad x=2.
\end{cases} 
$$
where $k$ is arbitrarily large.

For the analysis of the mechanism, we define the following subsets of valuations:
$
\mathcal{V}_1=\{v_1^{one},v_1^{ONE},\allowbreak v_1^{ALL}\}\allowbreak \subseteq V_1$ and  $\mathcal{V}_2=\{v_2^{one},v_2^{ONE},v_2^{ALL}\}\subseteq V_2$.  
Observe that:
\begin{claim}\label{claim-mua-dec-instances}
Every deterministic mechanism that has approximation strictly better than $2$ necessarily satisfies the following  conditions simultaneously:
\begin{enumerate}
    \item Given the valuation profile $(v_1^{one},v_2^{one})$, the mechanism
    allocates one item to bidder $1$ and one item to bidder $2$. \label{condi-1-dec}
    \item Given the valuation profile $(v_1^{ONE}, v_2^{ALL})$, bidder $2$ wins all items.
    \label{condi-3-dec}
\end{enumerate}
\end{claim}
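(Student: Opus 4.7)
The plan is to establish each of the two conditions by a direct case analysis, computing the optimal welfare $\opt$ for each instance and then showing that every alternative allocation yields an approximation ratio of at least $2$, or a ratio tending to $2$ as $k\to\infty$. Since the claim implicitly fixes a mechanism whose worst-case approximation is some constant $\alpha < 2$, and $k$ is taken arbitrarily large, any alternative whose ratio converges to $2$ eventually violates the $\alpha$-bound and can be ruled out. Note that individual rationality and no negative transfers play no role here, since the claim concerns only the allocation rule.

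For condition~1, under $(v_1^{\text{one}},v_2^{\text{one}})$ we have $\opt = 2$, realized by giving one item to each bidder. Because $v_i^{\text{one}}(1)=v_i^{\text{one}}(2)=1$, every allocation that either concentrates both items on one bidder or leaves one of the items unallocated achieves welfare at most $1$, hence approximation ratio at least $2$. A strict sub-$2$ approximation therefore forces the ``one item each'' outcome.

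For condition~2, under $(v_1^{\text{ONE}},v_2^{\text{ALL}})$, we have $\opt = 2k^2$, realized by allocating both items to bidder $2$. I enumerate the remaining allocations and their welfares: giving both items to bidder $1$ yields welfare $4k$; giving bidder $2$ a single item yields $k^2$; giving bidder $1$ a single item yields $4k$; and splitting one item per bidder yields $4k+k^2$. The ratios are, respectively, $k/2$, $2$, $k^2/(2k)$, and $\tfrac{2k^2}{k^2+4k} = 2 - \tfrac{8}{k+4}$. All but the last are at least $2$ for $k\ge 4$; the last is below $2$ for every finite $k$ but tends to $2$ from below, so choosing $k$ large enough (relative to the fixed $\alpha<2$) makes it exceed $\alpha$. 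Thus only the allocation that gives both items to bidder $2$ survives.

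The only real subtlety is the ``split one item each'' case in condition~2: its ratio does not exceed $2$ outright but only approaches it. Handling this cleanly requires interpreting ``approximation strictly better than $2$'' as holding with some fixed constant $\alpha<2$ uniformly over the entire domain $V_1\times V_2$, so that taking $k$ sufficiently large against this fixed $\alpha$ suffices. The remainder of the proof is routine arithmetic on the small set of feasible allocations.
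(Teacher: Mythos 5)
Your proof is correct and takes the same approach as the paper's, which simply states the claim is "straightforward" because $k$ is arbitrarily large. You spell out the arithmetic the paper omits and correctly flag the one genuine subtlety — that the split allocation's ratio $\tfrac{2k^2}{k^2+4k}$ only approaches $2$ from below, so one must first fix the mechanism's approximation constant $\alpha<2$ and then choose $k$ large relative to $\alpha$ — which is exactly what "arbitrarily large $k$" is doing in the paper.
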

The proof of \cref{claim-mua-dec-instances} is straightforward: if a deterministic mechanism does not satisfy one of conditions, then due the fact that $k$ is arbitrarily large implies that the approximation guarantee of the mechanism is at most $2$ in the worst case.

We now focus on 
$
\mathcal{V}_1\times \mathcal{V}_2$. 
Observe that there necessarily exists a vertex $u$, and valuations $v_1,v_1' \in \mathcal{V}_1$, and  $v_2,v_2' \in \mathcal{V}_2$ such that $(\mathcal{S}_1(v_1), \mathcal{S}_2(v_2))$ diverge at vertex $u$. This follows from \cref{claim-mua-dec-instances}, which implies that the mechanism $A$ must output different allocations for different valuation profiles in $\mathcal{V}_1 \times \mathcal{V}_2$. Consequently, not all valuation profiles end up in the same leaf, meaning that divergence must occur at some point. 


Let $u$ be the first vertex in the protocol such that 
 $(\mathcal{S}_1(v_1),\mathcal{S}_2(v_2))$ and $(\mathcal{S}_1(v_1'),\mathcal{S}_2(v_2'))$ diverge, i.e., dictate different messages. 
Note that by definition this implies that $u\in Path(\mathcal{S}_1(v_1),\mathcal{S}_2(v_2))\cap Path(\mathcal{S}_1(v_1'),\mathcal{S}_2(v_2'))$ and that either bidder $1$ or bidder $2$ sends different messages for the valuations in $\mathcal{V}_1$ or $\mathcal V_2$, respectively. 
Without loss of generality, we assume that bidder $1$ sends different messages, meaning that there exist $v_1,v_1'\in \mathcal{V}_1$ such that $\mathcal S_1(v_1)$ and $\mathcal S_1(v_1')$ dictate different messages at vertex $u$.  
We remind that $\mathcal{V}_1=\{v_1^{one},v_1^{ONE},v_1^{all}\}$, so the  following claims jointly imply a contradiction, completing the proof: 

\begin{claim}\label{claim-oneone-same-dec}
    The strategy $\mathcal S_1$ dictates the same message at vertex $u$ for the valuations $v_1^{one}$ and $v_1^{ONE}$. 
\end{claim}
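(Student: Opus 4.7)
The plan is to invoke Lemma \ref{lemma-bad-leaf-good-leaf} with the two valuation profiles $(v_1^{one}, v_2^{one})$ and $(v_1^{ONE}, v_2^{ALL})$, both of which by assumption pass through the divergence vertex $u$. The crux is to show that if bidder $1$'s true valuation is $v_1^{ONE}$, then the outcome reached at the first profile is strictly better for her than the outcome reached at the second. This will force the strategy $\mathcal S_1$ to send the same message at $u$ on inputs $v_1^{one}$ and $v_1^{ONE}$, exactly as in the proof of Claim \ref{claim-oneone-same}.

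First I would read off allocation and payment bounds at each of the two leaves, mimicking Lemma \ref{lemma-small-pay}. At the leaf reached by $(v_1^{one}, v_2^{one})$, condition \ref{condi-1-dec} of Claim \ref{claim-mua-dec-instances} guarantees that bidder $1$ receives exactly one item, so $v_1^{one}(f(v_1^{one}, v_2^{one})) = 1$, and individual rationality gives $P_1(v_1^{one}, v_2^{one}) \leq 1$. At the leaf reached by $(v_1^{ONE}, v_2^{ALL})$, condition \ref{condi-3-dec} says bidder $2$ wins all items, so bidder $1$ receives the empty bundle; combining ex-post individual rationality with no negative transfers then forces $P_1(v_1^{ONE}, v_2^{ALL}) = 0$.

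Next I would bound the utility of bidder $1$ under the true valuation $v_1^{ONE}$ at each leaf. At the first leaf she receives at least one item, giving utility at least $v_1^{ONE}(1) - P_1(v_1^{one}, v_2^{one}) \geq 4k - 1$. At the second leaf she receives nothing and pays nothing, so her utility is exactly $0$. Since $k$ is arbitrarily large, $4k - 1 > 0$, so
\[
v_1^{ONE}(f(v_1^{ONE}, v_2^{ALL})) - P_1(v_1^{ONE}, v_2^{ALL}) < v_1^{ONE}(f(v_1^{one}, v_2^{one})) - P_1(v_1^{one}, v_2^{one}).
\]
Combined with the fact that $u \in Path(\mathcal S_1(v_1^{one}), \mathcal S_2(v_2^{one})) \cap Path(\mathcal S_1(v_1^{ONE}), \mathcal S_2(v_2^{ALL}))$, Lemma \ref{lemma-bad-leaf-good-leaf} then implies that $\mathcal S_1$ must prescribe the same message at $u$ on inputs $v_1^{one}$ and $v_1^{ONE}$.

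I do not expect any serious obstacle here: the argument is the same template as Claim \ref{claim-oneone-same}, and the only thing that might deserve care is verifying the decreasing-marginals valuations still make Claim \ref{claim-mua-dec-instances} applicable when deriving the IR-based payment bounds. Once that is checked, the rest is bookkeeping.
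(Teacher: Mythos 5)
Your proposal is correct and follows essentially the same route as the paper: both use the two profiles $(v_1^{one},v_2^{one})$ and $(v_1^{ONE},v_2^{ALL})$, derive the allocation/payment bounds from individual rationality and no negative transfers, and then invoke Lemma~\ref{lemma-bad-leaf-good-leaf} at the shared vertex $u$. The only cosmetic difference is that you re-derive the payment bounds inline from Claim~\ref{claim-mua-dec-instances}, whereas the paper packages the same derivation into Lemma~\ref{lemma-small-pay-dec} (parts~\ref{item-1-dec} and~\ref{item-2-dec}) and cites it.
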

\begin{claim}\label{claim-one-all-same-dec}
        The strategy $\mathcal S_1$ dictates the same message at vertex $u$ for the valuations $v_1^{ONE}$ and $v_1^{ALL}$.
\end{claim}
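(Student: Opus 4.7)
My plan follows the same template as the proof of \cref{claim-ONE-ALL-same} in the single-minded setting: apply \cref{lemma-bad-leaf-good-leaf} with $v_i = v_1^{ONE}$, $v_i' = v_1^{ALL}$, $v_{-i} = v_2^{ALL}$, and $v_{-i}' = v_2^{one}$. The choice of $u$ as the earliest common divergence vertex across paths generated by profiles in $\mathcal{V}_1 \times \mathcal{V}_2$ guarantees that $u$ lies on both $Path(\mathcal{S}_1(v_1^{ONE}), \mathcal{S}_2(v_2^{ALL}))$ and $Path(\mathcal{S}_1(v_1^{ALL}), \mathcal{S}_2(v_2^{one}))$.

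The substantive step is to verify the strict utility inequality
\begin{equation*}
v_1^{ONE}(f(v_1^{ONE}, v_2^{ALL})) - P_1(v_1^{ONE}, v_2^{ALL}) < v_1^{ONE}(f(v_1^{ALL}, v_2^{one})) - P_1(v_1^{ALL}, v_2^{one}).
\end{equation*}
For the left-hand side, condition \ref{condi-3-dec} of \cref{claim-mua-dec-instances} assigns both items to bidder $2$ at $(v_1^{ONE}, v_2^{ALL})$, whence individual rationality and no negative transfers pin $P_1(v_1^{ONE}, v_2^{ALL}) = 0$ and $v_1^{ONE}(f(v_1^{ONE}, v_2^{ALL})) = 0$. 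Hence the LHS equals $0$.

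For the right-hand side, I would first argue that for $k$ sufficiently large the approximation ratio strictly below $2$ forces bidder $1$ to win both items at $(v_1^{ALL}, v_2^{one})$: the optimum here is $2k^2$, and the only allocation other than ``both items to bidder 1'' with welfare exceeding $k^2$ is ``one item to each bidder'' (welfare $k^2 + 1$), whose approximation ratio $2k^2 / (k^2 + 1)$ converges to $2$ as $k \to \infty$ and thus violates any fixed approximation strictly better than $2$ once $k$ is large. Consequently $v_1^{ONE}(f(v_1^{ALL}, v_2^{one})) = 4k$.

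The main obstacle is to show $P_1(v_1^{ALL}, v_2^{one}) < 4k$; individual rationality alone only yields the loose bound $P_1 \leq v_1^{ALL}(2) = 2k^2$. My plan is to derive the sharp payment bound by combining OSP (which is stronger than dominant-strategy incentive compatibility) with the payment bound at $(v_1^{one}, v_2^{one})$ coming from condition \ref{condi-1-dec} of \cref{claim-mua-dec-instances}. The DSIC inequality for $v_1^{ONE}$ not misreporting as $v_1^{one}$ at $(v_1^{ONE}, v_2^{one})$, coupled with $P_1(v_1^{one}, v_2^{one}) \leq 1$, gives $P_1(v_1^{ONE}, v_2^{one}) \leq 1$; then the OSP worst-case-versus-best-case comparison at $u$ applied to $\mathcal{S}_1(v_1^{ALL})$ diverging from $\mathcal{S}_1(v_1^{ONE})$, together with a case analysis on the size of bidder $1$'s allocation at $(v_1^{ONE}, v_2^{one})$, transfers this bound to yield $P_1(v_1^{ALL}, v_2^{one}) \leq 1 < 4k$.

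Once the utility strict inequality is established, \cref{lemma-bad-leaf-good-leaf} immediately implies that $\mathcal{S}_1$ must dictate the same message for $v_1^{ONE}$ and $v_1^{ALL}$ at $u$, completing the proof.
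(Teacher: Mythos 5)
Your treatment of the left-hand side matches the paper exactly, and you correctly identify that the crux is bounding the payment $P_1(v_1^{ALL}, v_2^{one})$ from above. But the payment-transfer argument you sketch for the right-hand side has a genuine gap. To see it, consider the case (which cannot be ruled out by the approximation guarantee) where $f(v_1^{ONE}, v_2^{one})$ allocates exactly one item to bidder $1$: the profile $(v_1^{ONE}, v_2^{one})$ has optimal welfare $4k+1$ attained by splitting, and giving one item to each bidder is a legitimate choice for a mechanism with approximation better than $2$. In that case the OSP inequality for $v_1^{ALL}$ at $u$ against the deviation $\mathcal{S}_1(v_1^{ONE})$ with the same opponent behavior $\mathcal{S}_2(v_2^{one})$ reads
\begin{equation*}
2k^2 - P_1(v_1^{ALL}, v_2^{one}) \;\ge\; v_1^{ALL}\bigl(f(v_1^{ONE}, v_2^{one})\bigr) - P_1(v_1^{ONE}, v_2^{one}) \;\ge\; k^2 - 1,
\end{equation*}
which only yields $P_1(v_1^{ALL}, v_2^{one}) \le k^2 + 1$. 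Since $k$ is arbitrarily large, $k^2 + 1 \gg 4k$, so this does not establish $P_1(v_1^{ALL}, v_2^{one}) < 4k$ and the strict utility inequality fails to be proved in this case. Your claimed target $P_1(v_1^{ALL}, v_2^{one}) \le 1$ is therefore not reachable by this route (nor is it needed).

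The paper closes this gap with a different device (Lemma~\ref{lemma-small-pay-dec}, part~\ref{item-3-dec}): it introduces an auxiliary valuation $\hat v_1$ with $\hat v_1(1) = k$, $\hat v_1(2) = 2k$, argues that the approximation guarantee below $2$ forces the mechanism to allocate both items to bidder $1$ at $(\hat v_1, v_2^{one})$ (since the split allocation has welfare only $k+1$ versus $2k$), so individual rationality gives $P_1(\hat v_1, v_2^{one}) \le 2k$, and then uses the standard DSIC consequence that identical allocations imply identical payments to conclude $P_1(v_1^{ALL}, v_2^{one}) = P_1(\hat v_1, v_2^{one}) \le 2k$. That bound is weaker than the one you aimed for but is exactly what is needed, since $4k - 2k = 2k > 0$. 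Your plan as written would need to be replaced by (or supplemented with) an argument of this kind to handle the problematic case.
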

In the proofs of Claims \ref{claim-oneone-same-dec} and Claim \ref{claim-one-all-same-dec}, we use the following lemma:
\begin{lemma}\label{lemma-small-pay-dec}
    The allocation rule $f$ and the payment scheme $P_1$ of bidder $1$ satisfy that:
    \begin{enumerate}
        \item Given $(v_1^{one},v_{2}^{one})$ bidder $1$ wins  one item and pays at most $1$.  \label{item-1-dec}
        \item  Given $(v_1^{ONE},v_2^{ALL})$, bidder $1$ gets the empty bundle and pays zero.   \label{item-2-dec}
        \item Given $(v_1^{ALL},v_2^{one})$, bidder wins all items and pays at most $2k$. \label{item-3-dec}  
    \end{enumerate}
\end{lemma}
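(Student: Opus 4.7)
The plan is to mirror the structure of the proof of \cref{lemma-small-pay}, treating parts 1 and 2 as essentially immediate consequences of \cref{claim-mua-dec-instances} together with individual rationality and no negative transfers, and handling part 3 with an additional argument that exploits dominant-strategy incentive compatibility (which follows from obvious strategy-proofness).

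For part 1, I would invoke \cref{condi-1-dec} of \cref{claim-mua-dec-instances} to conclude that $f(v_1^{one},v_2^{one})$ allocates exactly one item to bidder $1$, so $v_1^{one}(f(v_1^{one},v_2^{one}))=1$, and then individual rationality gives $P_1(v_1^{one},v_2^{one})\le 1$. For part 2, I would invoke \cref{condi-3-dec} of the same claim, which forces bidder $1$ to receive the empty bundle under $(v_1^{ONE},v_2^{ALL})$; individual rationality then gives $P_1\le 0$ and no negative transfers gives $P_1\ge 0$, so $P_1(v_1^{ONE},v_2^{ALL})=0$.

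Part 3 is the part that requires more work, because individual rationality by itself would only give $P_1(v_1^{ALL},v_2^{one})\le 2k^2$, whereas the lemma claims a bound of $2k$. My plan has two steps. First, I would extend the argument of \cref{claim-mua-dec-instances} to the profile $(v_1^{ALL},v_2^{one})$: the optimal welfare is $2k^2$ (both items to bidder $1$), while any allocation that does not give both items to bidder $1$ yields welfare at most $k^2+1$, and the ratio $2k^2/(k^2+1)\to 2$ as $k\to\infty$. Since $k$ is arbitrarily large and the approximation is strictly better than $2$, bidder $1$ must receive both items under $(v_1^{ALL},v_2^{one})$. Second, to sharpen the payment bound I would introduce an auxiliary valuation $\hat v_1\in V_1$ defined by $\hat v_1(1)=k$ and $\hat v_1(2)=2k$; this has non-increasing marginals (both equal to $k$) and therefore lies in the domain. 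The same approximation argument applied to $(\hat v_1,v_2^{one})$, where splitting gives welfare $k+1$ versus an optimum of $2k$, forces $f(\hat v_1,v_2^{one})$ to give both items to bidder $1$, and then individual rationality yields $P_1(\hat v_1,v_2^{one})\le 2k$.

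To transfer this bound to $P_1(v_1^{ALL},v_2^{one})$, I would use dominant-strategy incentive compatibility, which is implied by obvious strategy-proofness. Since both $(v_1^{ALL},v_2^{one})$ and $(\hat v_1,v_2^{one})$ award bidder $1$ the grand bundle, comparing the utility of bidder $1$ truthfully reporting $v_1^{ALL}$ versus deviating to $\hat v_1$ gives $P_1(v_1^{ALL},v_2^{one})\le P_1(\hat v_1,v_2^{one})$, and the symmetric deviation gives the reverse inequality, so the two payments coincide and are at most $2k$. The main obstacle is exactly this sharper bound in part 3: the choice of $\hat v_1$ with just enough value to force both items to bidder $1$ while keeping the IR ceiling at $2k$ is the crux, and once it is in hand the rest of the lemma is a routine unpacking of \cref{claim-mua-dec-instances} and the IR/NNT conditions.
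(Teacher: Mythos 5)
Your proposal is correct and follows essentially the same route as the paper: parts 1 and 2 are unpacked directly from \cref{claim-mua-dec-instances} together with individual rationality and no negative transfers, and part 3 uses exactly the auxiliary valuation $\hat v_1(1)=k$, $\hat v_1(2)=2k$ to get the IR bound $P_1(\hat v_1,v_2^{one})\le 2k$ and then transfers it to $P_1(v_1^{ALL},v_2^{one})$ via the equal-allocation / equal-payment consequence of DSIC.
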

The lemma is a direct consequence of the properties of the mechanism. 
We use 
it to prove \cref{claim-oneone-same-dec,claim-one-all-same-dec}
and defer the proof to  \cref{sec-small-pay-proof-alltogether-dec}.  The proofs are identical to those in \cite{Ron24}, and we include them here for completeness.
\begin{proof}[Proof of Claim \ref{claim-oneone-same-dec}]
    Note that by Lemma \ref{lemma-small-pay-dec} part \ref{item-1-dec}, $f(v_1^{one},v_2^{one})$ allocates at least one item to player $1$ and $P_1(v_1^{one},v_2^{one})\le 1$. Therefore:
\begin{equation}\label{eq-good-leaf1-dec}
 v_1^{ONE}(f(v_1^{one},v_2^{one}))-P_1(v_1^{one},v_2^{one})\ge 4k-1 
\end{equation}
 In contrast, by part \ref{item-2-dec} of Lemma \ref{lemma-small-pay-dec},   $f(v_1^{ONE},v_2^{ALL})$ allocates no items to player $1$ and $P_1(v_1^{ONE},v_2^{ALL})=0$, so:
 \begin{equation}\label{eq-bad-leaf1-dec}
 v_1^{ONE}(f(v_1^{ONE},v_2^{ALL}))-P_1(v_1^{ONE},v_2^{ALL})= 0   
\end{equation}
Combining inequalities (\ref{eq-good-leaf1-dec}) and (\ref{eq-bad-leaf1-dec}) gives:
\begin{equation*}
  v_1^{ONE}(f(v_1^{ONE},v_2^{ALL}))-P_1(v_1^{ONE},v_2^{ALL})< 
  v_1^{ONE}(f(v_1^{one},v_2^{one}))-P_1(v_1^{one},v_2^{one})  
\end{equation*}
We remind that vertex $u$ belongs in $Path(\mathcal S_1(v_1^{one}),\mathcal S_2(v_2^{one}))$ and also in
$Path(\mathcal{S}_1(v_1^{ONE}),\allowbreak\mathcal{S}_2(v_2^{ALL}))$. Therefore, Lemma \ref{lemma-bad-leaf-good-leaf} gives that the strategy $\mathcal S_1$ dictates the same message for  $v_1^{one}$ and $v_1^{ONE}$ at vertex $u$.
\end{proof}

\begin{proof}[Proof of \cref{claim-one-all-same-dec}]
    Following the same approach as in the proof of Claim \ref{claim-oneone-same-dec}, note that by \cref{lemma-small-pay-dec}  \cref{item-3-dec}: 
\begin{equation}\label{break-align-dec}
v_1^{ONE}(f(v_1^{ALL},v_2^{one}))-P_1(v_1^{ALL},v_2^{one}) \ge 4k-2k=2k     
\end{equation}
Also, by \cref{lemma-small-pay-dec}  \cref{item-2}:
\begin{equation}\label{break-align2-dec}
  v_1^{ONE}(f(v_1^{ONE},v_2^{ALL}))  
-P_1(v_1^{ONE},v_2^{ALL})=0   
\end{equation}
Combining \cref{break-align-dec} and \cref{break-align2-dec}:
\begin{equation*}
    v_1^{ONE}(f(v_1^{ONE},v_2^{ALL}))  
-P_1(v_1^{ONE},v_2^{ALL})< v_1^{ONE}(f(v_1^{ALL},v_2^{one}))-P_1(v_1^{ALL},v_2^{one})
\end{equation*}
Combining the above inequality with the fact that 
vertex $u$ belongs in $Path(\mathcal S_1(v_1^{ALL}),\mathcal S_2(v_2^{one}))$ and in
$Path(\mathcal{S}_1(v_1^{ONE}),\mathcal{S}_2(v_2^{ALL}))$ implies that by Lemma \ref{lemma-bad-leaf-good-leaf} 
the obviously dominant strategy $\mathcal S_1$ dictates the same message for the valuations $v_1^{ONE}$ and $v_1^{ALL}$ at vertex $u$. 
\end{proof}

\subsection{Proof of Lemma \ref{lemma-small-pay-dec}: Observations About The Mechanism}\label{sec-small-pay-proof-alltogether-dec}
The proof is a direct consequence of the approximation guarantee of the mechanism and the fact that it is 
obviously strategy-proof (and thus also dominant-strategy incentive compatible) and satisfies individual rationality and no negative transfers. 



For part \ref{item-1}, note that because of individual rationality:
\begin{equation}\label{eq-1-dec}
 v_1^{one}(f(v_1^{one},v_2^{one}))-P_1(v_1^{one},v_2^{one})\ge 0   
\end{equation}
We remind that by \cref{claim-mua-dec-instances} \cref{condi-1-dec}, the allocation rule $f$ allocates to bidder $1$ one item given $(v_1^{one},v_2^{one})$, so:
\begin{equation}\label{eq-2-dec}
 v_1^{one}(f(v_1^{one},v_2^{one}))=1   
\end{equation}
Combining (\ref{eq-1-dec}) and (\ref{eq-2-dec}) gives part \ref{item-1-dec}.

For part \ref{item-2-dec}, note that 
by \cref{claim-mua-dec-instances} \cref{condi-3-dec}, 
given $(v_1^{ONE},v_2^{ALL})$ player $1$ gets the empty bundle. Because of individual rationality: 
$$v_1^{ONE}(f(v_1^{ONE},v_2^{ALL}))-P_1(v_1^{ONE},v_2^{ALL})\ge 0 $$
Since $v_1^{ONE}(f(v_1^{ONE},v_2^{ALL}))=0$, we get that $0\ge P_1(v_1^{ONE},v_2^{ALL})$, and because of the no negative transfers property, $P_1(v_1^{ONE},v_2^{ALL})=0$, as needed.  

To prove part \ref{item-3-dec}, we define another valuation:
$$
\hat{v}_1(x)=\begin{cases}
    k \quad &x=1, \\
    2k \quad &x=2.
\end{cases}
$$
Given $(\hat{v}_1,v_2^{one})$, player $1$ wins $2$ items because  
$f$ gives an approximation strictly better than $2$. Thus, the inequality $\hat{v}_1(f(\hat{v}_1,v_2^{one}))-P_1(\hat{v}_1,v_2^{one})\ge 0$ holds because of individual rationality, and therefore $P_1(\hat{v}_1,v_2^{one})\le 2k$. 

Note that $f$ clearly also allocates all items to player $1$ given $({v_1}^{ALL},v_2^{one})$ because of its approximation guarantee. The fact that the mechanism is dominant-strategy incentive compatible and $f$ outputs the same allocation for both $(\hat{v}_1,v_2^{one})$ and  $({v_1}^{ALL},v_2^{one})$ implies that $P_1({v_1}^{ALL},v_2^{one})=P_1(\hat{v}_1,v_2^{one})$, so $P_1({v_1}^{ALL},v_2^{one})$ is also smaller than $2k$, which completes the proof. 

\subsection{Proof of Lemma \ref{lemma:mono-mua-dec}: A  Deterministic Mechanism For 3 Items and 2 Bidders}\label{sec-impos-mua-dec}
    Consider the following mechanism: allocate one item to each bidder, and run an ascending auction on the remaining item. Assume tie breaking is in favor of player $1$, i.e., if both players have the same value for the item, then player $1$ wins it.
    This mechanism is a generalized ascending auction, so by \cref{lemma-partial} it is obviously strategy-proof. 
    
    We now analyze its approximation guarantee. 
Intuitively, the mechanism guarantees a $1.5$ approximation because the worst  case scenario is that some bidder  should  have gotten all three items in the optimal solution, but gets instead only two items. The decreasing marginal property ensures that the loss is bounded by at most $\frac{1}{3}$ of the optimal social welfare.



 
    
    Formally, 
denote with $(q_1,q_2)$ the allocation of the mechanism, and let $(o_1,o_2)$ be a welfare-maximizing allocation. We use the following notations: $$ALG=v_1(q_1)+v_2(q_2),\quad OPT=v_1(o_1)+v_2(o_2)$$

    Note that the algorithm necessarily outputs an allocation where one bidder gets $2$ items and the other bidder gets $1$ item. Assume without loss of generality that the ascending auction gives bidder $1$ two items and bidder $2$ one item, meaning that  $q_1=2$ and $q_2=1$. Note that by the definition of the auction:
    \begin{equation}\label{eq-opt-ge-alg}
        v_1(2)-v_1(1)\ge v_2(2)-v_2(1) 
    \end{equation}
To show that $ALG \ge \frac{2}{3}\cdot OPT$,  we proceed with the following case analysis:
\paragraph{Case I: $o_1=3$ and $o_2=0$.}
We will first show that $v_1(3)-v_1(2)\le \frac{OPT}{3}$, and then show why it implies that $ALG \ge \frac{2}{3}\cdot OPT$.
Observe that:
\begin{equation*}\label{eq-case-30}
    OPT=v_1(3)=v_1(3)-v_1(2)+v_1(2)-v_1(1)+v_1(1) \ge 3\cdot \big(v_1(3)-v_1(2)\big)
\end{equation*}
where the inequality holds because $v_1$ has decreasing marginal values. 
Therefore:
$$
OPT=v_1(3)=v_1(2)+v_1(3)-v_1(2)\le ALG +v_1(3)-v_1(2)\le ALG + \frac{OPT}{3}
$$
So $ALG \ge \frac{2\cdot OPT}{3}$, as needed.


\paragraph{Case II: $o_1=2$ and $o_2=1$.} In this case, the optimal allocation is identical to the allocation of the mechanism, so $ALG=OPT$ holds trivially .

\paragraph{Case III: $o_1=1$ and $o_2=2$.} Observe that $ALG=v_1(2)+v_2(1)\ge v_1(1)+v_2(2)=OPT$, where the inequality is by \cref{eq-opt-ge-alg}.


\paragraph{Case IV: $o_1=0$ and $o_2=3$.} Due to the same explanation as in case III, we have that: 
\begin{equation}\label{case-03}
    ALG=v_1(2)+v_2(1)\ge v_1(1)+v_2(2)
\end{equation}
And also that:
\begin{align*}
v_1(1) &\ge v_1(2)-v_1(1) &\text{(due to decreasing margins)} \\
&\ge v_2(2)-v_2(1) &\text{(by \cref{eq-opt-ge-alg})} \\
&\ge v_2(3)-v_2(2) &\text{(due to decreasing margins)} \\ 
\end{align*}
So $v_1(1)+v_2(2)\ge v_2(3)=OPT$. Combining this with \cref{case-03} gives that $ALG \ge OPT$, as needed.  

\section{Missing Proofs from Section \ref{sec-combi}: Combinatorial Auctions} \label{app-missing-combi}
\subsection{Proof of Lemma \ref{lemma-add-osp}
} \label{subsec::proof-add-osp}
Fix any realization of the coin flips of the mechanism.  Observe that ``sampled'' bidders receive $0$ utility regardless of their report so
reporting their valuations trtuhfully is obviously dominant for them. 
Further the ``unsampled'' bidders purchase all available items for which they have  positive utility and purchase no items for which they have negative utility when being truthful, so for them as well truthfulness is an obviously dominant strategy. 

\subsection{Proof of Theorem \ref{thm-lb-ud}: An Impossibility for Unit-Demand Bidders} \label{lb-ud-proof-place}

The proof follows the same structure as in \cref{thm-mua-sm-lb}: roughly speaking, we describe a distribution $\mathcal{D}$, show that it is “hard” for deterministic mechanisms, and then use Yao's Lemma to deduce hardness for randomized mechanisms.

The proof outline is as follows. In \cref{subsubsec-description-ud}, 
we describe the  hard distribution $\mathcal D$. 
In \cref{subsubsec-performance-ud} we analyze the allocation and payments of a deterministic \textquote{good} mechanism 
and explain why these properties  imply that  a \textquote{good} mechanism does not exist in  \cref{subsubsec-contradiction-ud}.
We defer technical proofs to \cref{app-claims-proofs-ud,subsubsec-prop-alloc-ud}. 

We prove for the case of two bidders and two items, but the proof extends to any number of bidders and any number of items by adding bidders with the all-zero valuation and assuming that the bidders in our construction have zero values for the additional items.

\subsubsection[Construction of a ``Hard'' Distribution D]{Construction of a ``Hard'' Distribution $\mathcal{D}$}\label{subsubsec-description-ud}
    To define the probability distribution over the valuation profiles, we name the two items $a$ and $b$, and let $k$ be an arbitrarily large number. Note that since these valuations are unit-demand, we can fully describe them by specifying their value per item.
    Consider the following valuations:  
\[
v_1^{{a,one}}(x) = 
\begin{cases}
1 & x=a,\\
0 & \text{otherwise.}
\end{cases} \quad 
v_2^{{b,one}}(x) = 
\begin{cases}
1 & x=b,\\
0 & \text{otherwise.}
\end{cases}
\]

\[
v_1^{{a,mid}}(x) = 
\begin{cases}
k^2 & x=a,\\
0 & \text{otherwise.}
\end{cases} \quad 
v_2^{{b,mid}}(x) = 
\begin{cases}
k^2 & x=b,\\
0 & \text{otherwise.}
\end{cases}
\]

\[
v_1^{{a,large}}(x) = 
\begin{cases}
k^4 & x=a,\\
0 & \text{otherwise.}
\end{cases} \quad 
v_2^{{b,large}}(x) = 
\begin{cases}
k^4 & x=b,\\
0 & \text{otherwise.}
\end{cases}
\]

\[
v_1^{{b,small}}(x) = 
\begin{cases}
k & x=b,\\
0 & \text{otherwise.}
\end{cases} \quad 
v_2^{{a,small}}(x) = 
\begin{cases}
k & x=a,\\
0 & \text{otherwise.}
\end{cases}
\]

\[
v_1^{{b,mid}}(x) = 
\begin{cases}
k^2 & x=b,\\
0 & \text{otherwise.}
\end{cases} \quad 
v_2^{{a,mid}}(x) = 
\begin{cases}
k^2 & x=a,\\
0 & \text{otherwise.}
\end{cases}
\]

\[
v_1^{{b,large}}(x) = 
\begin{cases}
k^4 & x=b,\\
0 & \text{otherwise.}
\end{cases} \quad 
v_2^{{a,large}}(x) = 
\begin{cases}
k^4 & x=a,\\
0 & \text{otherwise.}
\end{cases}
\]

\[
v_1^{{both}}(x) = 
\begin{cases}
2k+3 & x=a,\\
2k+1 & x=b,\\
0 & \text{otherwise.}
\end{cases} \quad 
v_2^{{both}}(x) = 
\begin{cases}
2k+3 & x=b,\\
2k+1 & x=a,\\
0 & \text{otherwise.}
\end{cases}
\]

Consider the following valuation profiles: 
\begin{align*}
 I_1 = (v_1^{{a,one}}&, v_2^{{b,one}}),\quad  I_2 = (v_1^{a,mid}, v_2^{{a,large}}),  \quad I_3 = (v_1^{b,large}, v_2^{b,mid}), \quad 
 I_4 = (v_1^{b,mid}, v_2^{b,large}) \\ &I_5 = (v_1^{a,large}, v_2^{a,mid}) \quad I_6 = (v_1^{b,small}, v_2^{b,one}) \quad I_7 =(v_1^{a,one}, v_2^{a,small}) 
\end{align*}
Let $\mathcal D$ be the distribution over valuation profiles where the probability of the valuation profile $I_1$ is $\frac{1}{4}$, and the probability of all the valuation profiles $I_2,I_3,I_4,I_5,I_6$ and $I_7$ is $\frac{1}{8}$ each. 

\subsubsection[The Performance of the Deterministic Mechanism A on the "Hard" Distribution D]{The Performance of the Deterministic Mechanism $A$ on the ``Hard'' Distribution $\mathcal{D}$}\label{subsubsec-performance-ud}
We remind that our goal is to show that no deterministic mechanism that satisfies all of the desired properties extracts more than $\frac{7}{8}$ of the optimal welfare. For that, we begin by observing that:
\begin{lemma}\label{lemma-ud-instances}
Every deterministic mechanism that has approximation better than $\frac{7}{8}$ necessarily satisfies all of the following conditions:
\begin{enumerate}
    \item Given the valuation profile $I_1 = (v_1^{\text{a,one}}, v_2^{\text{b,one}})$, the mechanism
    allocates item $a$ to player $1$ and item $b$ to player $2$. \label{condi-1-ud}
\item Given the valuation profile $I_2 = (v_1^{a,mid}, v_2^{a,large})$, the mechanism allocates item $a$ to bidder $2$. \label{condi-2-ud}

    \item Given the valuation profile $I_3 = (v_1^{b,large}, v_2^{b,mid})$, the mechanism allocates item $b$  to bidder $1$.
    \label{condi-3-ud}
\item Given the valuation profile $I_4 = (v_1^{b,mid}, v_2^{b,large})$, the mechanism allocates item $b$ to bidder $2$. \label{condi-4-ud}
\item Given the valuation profile $I_5 = (v_1^{a,large}, v_2^{a,mid})$, the mechanism allocates item $a$ to bidder $1$. \label{condi-5-ud}
\item Given the valuation profile $I_6 = (v_1^{b,small}, v_2^{b,one})$, the mechanism allocates item $b$ to bidder $1$. \label{condi-6-ud}
\item Given the valuation profile $I_7 = (v_1^{a,one}, v_2^{a,small})$, the mechanism allocates item $a$ to bidder $2$. 
\end{enumerate}
\end{lemma}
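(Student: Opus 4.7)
The plan is a straightforward case analysis via the contrapositive. Fix any deterministic mechanism $A$ and suppose it violates one of the seven conditions on the corresponding instance $I_j$. I would show that in this case $\mathbb{E}_{I \sim \mathcal{D}}\bigl[A(I)/OPT(I)\bigr] \le \tfrac{7}{8} + o_k(1)$, so that (taking $k$ arbitrarily large, as allowed by the construction) no mechanism achieving expected ratio strictly better than $\tfrac{7}{8}$ can violate any condition.

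First I would record $OPT$ on each profile. Each $I_j$ is a two-bidder, two-item, unit-demand instance in which at most one item is ``contested,'' so $OPT(I_1) = 2$ (from $1 \mapsto a, 2 \mapsto b$), $OPT(I_j) = k^4$ for $j \in \{2,3,4,5\}$ (giving the coveted item to its high-value bidder), and $OPT(I_6) = OPT(I_7) = k$. Next I would quantify the welfare loss from violating each condition. For condition~1 on $I_1$, any allocation other than the optimal one yields welfare at most $1$, so $A(I_1)/OPT(I_1) \le \tfrac{1}{2}$. For conditions~2--5, the only other bidder who values the contested item has value $k^2$, so violating the condition drops the ratio on that instance to at most $k^{-2}$. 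For conditions~6--7, the alternative bidder values the contested item at only $1$, so violation gives ratio at most $k^{-1}$.

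Plugging these bounds into $\mathbb{E}[A/OPT]$ with the weights $\Pr[I_1] = \tfrac{1}{4}$ and $\Pr[I_j] = \tfrac{1}{8}$ for $j \ge 2$ is then mechanical: violating condition~1 alone yields exactly $\tfrac{1}{4} \cdot \tfrac{1}{2} + \tfrac{6}{8} \cdot 1 = \tfrac{7}{8}$, while violating any of conditions~2--7 yields $\tfrac{7}{8} + O(k^{-1})$. Since simultaneous violations only decrease the ratio further and $k$ can be taken arbitrarily large, the bound $\tfrac{7}{8}$ holds in the limit for any violation, completing the contrapositive. The main subtlety---rather than a genuine obstacle---is interpreting ``approximation better than $\tfrac{7}{8}$'' as a limit statement so that the vanishing $O(k^{-1})$ slack for conditions~2--7 is absorbed cleanly when the lemma is later combined with \cref{lem:yaos} to derive \cref{thm-lb-ud}.
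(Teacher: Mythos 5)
Your proposal is correct and takes the same approach as the paper, which dispatches the lemma with a one-line observation ("if a deterministic mechanism does not satisfy one of the conditions, then due to the fact that $k$ is arbitrarily large, its approximation guarantee is at most $\frac{7}{8}$ with respect to the distribution $\mathcal{D}$"). You fill in the explicit case-by-case computation the paper omits, and you correctly identify and handle the $O(k^{-1})$ slack for conditions 2--7, which the paper's informal phrasing ("arbitrarily large $k$") silently absorbs.
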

The proof of \cref{lemma-ud-instances} is straightforward: if a deterministic mechanism does not satisfy one of the conditions, then due to the fact that $k$ is arbitrarily large, its approximation guarantee is at most $\frac{7}{8}$ with respect to the distribution $\mathcal{D}$.

\subsubsection[Reaching a Contradiction: No Deterministic and OSP Mechanism Succeeds on D]{Reaching a Contradiction: No Deterministic and OSP Mechanism Succeeds on $\mathcal D$}
\label{subsubsec-contradiction-ud}
We now employ \cref{lemma-ud-instances} to prove  \cref{thm-lb-ud}. Let the domain $V_i$ of each bidder consist 
of unit-demand valuations with values in  $\{0,1,\ldots,k^4\}$, where $k$ is an arbitrarily large integer.

Fix a deterministic mechanism $A$ and strategies
$(\mathcal S_1,\mathcal S_2)$ that are individually rational, satisfy no negative transfers for all the valuations  $V_1\times V_2$ and give approximation better than $\frac{7}{8}$
in expectation over the valuation profiles in the distribution $\mathcal D$. 
Denote with  $(f,P_1,P_2)$ be the allocation rule and the payment scheme that $A$ and $(\mathcal S_1,\mathcal S_2)$ realize, and assume towards a contradiction that $A$ and $(\mathcal S_1,\mathcal S_2)$ are obviously strategy-proof.

For the analysis of the deterministic mechanism $A$, we focus on the following subsets of the domains of the valuations:
$
\mathcal{V}_1=\{v_1^{a,one},v_1^{a,mid},v_1^{a,large},v_1^{b,small},v_1^{b,mid},v_1^{b,large},v_1^{both}\}$ and  $\mathcal V_2=\{v_2^{b,one},v_2^{b,mid},v_2^{b,large},\allowbreak v_2^{a,small},v_2^{a,mid}, \allowbreak v_2^{a,large},v_2^{both}\}$.
Observe that there necessarily exists a vertex $u$, and valuations $v_1,v_1' \in \mathcal{V}_1$, and  $v_2,v_2' \in \mathcal{V}_2$ such that $(\mathcal{S}_1(v_1), \mathcal{S}_2(v_2))$ diverge at vertex $u$. This follows from \cref{lemma-ud-instances}, which implies that the mechanism $A$ must output different allocations for different valuation profiles in $\mathcal{V}_1 \times \mathcal{V}_2$. Consequently, not all valuation profiles end up in the same leaf, meaning that divergence must occur at some point.

Let $u$ be the first vertex in the protocol such that 
 $(\mathcal{S}_1(v_1),\mathcal{S}_2(v_2))$ and $(\mathcal{S}_1(v_1'),\mathcal{S}_2(v_2'))$ diverge, i.e., dictate different messages. 
Note that by definition this implies that $u\in Path(\mathcal{S}_1(v_1),\mathcal{S}_2(v_2))\cap Path(\mathcal{S}_1(v_1'),\mathcal{S}_2(v_2'))$ and that either bidder $1$ or bidder $2$ sends different messages for the valuations in $\mathcal{V}_1$ or $\mathcal V_2$, respectively. 
Without loss of generality, we assume that bidder $1$ sends different messages, meaning that there exist $v_1,v_1'\in \mathcal{V}_1$ such that $\mathcal S_1(v_1)$ and $\mathcal S_1(v_1')$ dictate different messages at vertex $u$.
However, the following collection of claims show that since the strategy $\mathcal S_1$ is obviously dominant, it dictates the same message for all the valuations in $\mathcal{V}_1$. Thus, we get a contradiction, which completes the proof of \cref{thm-lb-ud}:
\begin{claim}\label{claim-a-one-mid}
    The strategy $\mathcal S_1$ dictates the same message at vertex $u$ for the valuations $v_1^{a,one}$ and $v_1^{a,mid}$. 
\end{claim}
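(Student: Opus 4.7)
The plan is to invoke \cref{lemma-bad-leaf-good-leaf} with the two valuation profiles $I_2=(v_1^{a,mid},v_2^{a,large})$ and $I_1=(v_1^{a,one},v_2^{b,one})$, playing the roles of $(v_i,v_{-i})$ and $(v_i',v_{-i}')$ respectively. Both profiles lie in $\mathcal V_1\times \mathcal V_2$, and since $u$ was chosen as the \emph{first} vertex at which any two paths generated by profiles in $\mathcal V_1\times \mathcal V_2$ diverge, every such profile in particular reaches $u$. This discharges the first hypothesis of the lemma.

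For the second hypothesis, I would compute bidder $1$'s utility, \emph{evaluated under the valuation $v_1^{a,mid}$}, at each of the two leaves. At the outcome of $I_2$, condition~\ref{condi-2-ud} of \cref{lemma-ud-instances} forces item $a$ to be allocated to bidder $2$; since $v_1^{a,mid}$ values only $a$, we get $v_1^{a,mid}(f(I_2))=0$, which combined with individual rationality and no negative transfers pins $P_1(I_2)=0$, giving utility $0$. At the outcome of $I_1$, condition~\ref{condi-1-ud} of \cref{lemma-ud-instances} forces item $a$ to bidder $1$, so $v_1^{a,mid}(f(I_1))\ge k^2$; applying individual rationality with bidder $1$'s \emph{true} valuation there, $v_1^{a,one}$, bounds $P_1(I_1)\le 1$, so the utility under $v_1^{a,mid}$ is at least $k^2-1$. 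For $k\ge 2$ this gives
\[
v_1^{a,mid}(f(I_2))-P_1(I_2)\;=\;0\;<\;k^2-1\;\le\;v_1^{a,mid}(f(I_1))-P_1(I_1),
\]
verifying the second hypothesis of the lemma.

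Applying \cref{lemma-bad-leaf-good-leaf} then yields that $\mathcal S_1$ dictates the same message at $u$ for $v_1^{a,mid}$ and $v_1^{a,one}$, which is exactly the claim. The only subtle point is the \emph{direction} in which the lemma is applied: one must let the "first" valuation be $v_1^{a,mid}$ (the valuation whose utility we compare), and pick the "second" profile $I_1$ as the one whose allocation a bidder who cares only about item $a$ strictly prefers. Beyond that, the argument reduces to the routine bookkeeping of invoking \cref{lemma-ud-instances}, individual rationality, and no negative transfers to produce the two utility numbers compared above, mirroring the structure of \cref{claim-oneone-same,claim-ONE-ALL-same} from the multi-unit proof.
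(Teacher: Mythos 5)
Your proposal is correct and takes essentially the same route as the paper: the paper also applies \cref{lemma-bad-leaf-good-leaf} to the pair $(v_1^{a,mid},v_2^{a,large})$ and $(v_1^{a,one},v_2^{b,one})$, measures utility under $v_1^{a,mid}$, and gets $0 < k^2-1$. The only cosmetic difference is that the paper packages the two utility bounds (bidder $1$ gets $a$ at price $\le 1$ under $I_1$; gets nothing and pays $0$ under $I_2$) into a separate Lemma~\ref{lemma-small-pay-ud}, whereas you re-derive them inline from \cref{lemma-ud-instances}, individual rationality, and no negative transfers — the derivations themselves are identical.
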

\begin{claim}\label{claim-a-mid-large}
    The strategy $\mathcal S_1$ dictates the same message at vertex $u$ for the valuations $v_1^{a,mid}$ and $v_1^{a,large}$. 
\end{claim}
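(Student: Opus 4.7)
The plan is to apply \cref{lemma-bad-leaf-good-leaf} with player $i=1$, vertex $u$, the ``low-utility'' profile $(v_i, v_{-i}) = (v_1^{a,mid}, v_2^{a,large}) = I_2$, and the ``high-utility'' profile $(v_i', v_{-i}') = (v_1^{a,large}, v_2^{b,one})$, measuring utilities against the reference valuation $v_i = v_1^{a,mid}$. Since both valuation profiles lie in $\mathcal V_1 \times \mathcal V_2$ and $u$ is chosen as the earliest divergence vertex over all pairs of profiles in $\mathcal V_1 \times \mathcal V_2$, both of their paths in the protocol contain $u$, verifying condition~1 of the lemma.

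For the left-hand side of condition~2, \cref{condi-2-ud} of \cref{lemma-ud-instances} states that $f(I_2)$ awards item $a$ to bidder~$2$, so bidder~$1$ receives the empty bundle. Combining individual rationality with no negative transfers gives $P_1(v_1^{a,mid}, v_2^{a,large}) = 0$, and the left-hand side equals $v_1^{a,mid}(\emptyset) - 0 = 0$.

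For the right-hand side, the profile $(v_1^{a,large}, v_2^{b,one})$ is \emph{not} in the support of $\mathcal D$, so the approximation guarantee does not directly control its outcome. The key step is to pin it down using dominant-strategy incentive compatibility (which follows from OSP) together with the reference profile $I_1 = (v_1^{a,one}, v_2^{b,one})$. By \cref{condi-1-ud} of \cref{lemma-ud-instances}, $f(I_1)$ awards $a$ to bidder~$1$, and individual rationality yields $P_1(I_1) \leq v_1^{a,one}(\{a\}) = 1$. Considering bidder~$1$ with true valuation $v_1^{a,large}$ against $v_2^{b,one}$, DSIC forces her utility under her own strategy to be at least her utility from deviating to follow $\mathcal S_1(v_1^{a,one})$, giving $v_1^{a,large}(f(v_1^{a,large}, v_2^{b,one})) - P_1(v_1^{a,large}, v_2^{b,one}) \geq k^4 - 1$. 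Since $v_1^{a,large}$ assigns positive value only to item $a$, this inequality forces bidder~$1$ to win item $a$ at $(v_1^{a,large}, v_2^{b,one})$ and pins the payment to $P_1(v_1^{a,large}, v_2^{b,one}) \leq 1$. Consequently, the right-hand side equals $v_1^{a,mid}(\{a\}) - P_1(v_1^{a,large}, v_2^{b,one}) \geq k^2 - 1$, which strictly exceeds $0$ for any $k > 1$. Thus condition~2 holds strictly, and \cref{lemma-bad-leaf-good-leaf} delivers the claim.

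The step I expect to be the main obstacle is the choice of the ``good'' profile on the right-hand side. The most natural candidate, $I_5 = (v_1^{a,large}, v_2^{a,mid})$, lies in $\mathcal D$ but only admits the weak payment upper bound $P_1(I_5) \leq k^4$ from individual rationality alone, which is too crude to produce a strictly positive utility gap when utilities are measured under $v_1^{a,mid}$. Pivoting to $(v_1^{a,large}, v_2^{b,one})$ sidesteps this difficulty: the DSIC link to $I_1$ simultaneously determines the allocation and yields a sharp payment bound of $1$, which is exactly the small quantity needed to activate \cref{lemma-bad-leaf-good-leaf}.
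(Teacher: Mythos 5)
Your proof is correct and follows essentially the same route as the paper: the paper's proof invokes Lemma~\ref{lemma-small-pay-ud} parts~\ref{item-1-ud} and \ref{item-2-ud} to supply precisely the allocation and payment facts you derive inline via the same DSIC/IR/NNT chain anchored at $I_1$ and $I_2$, and then applies Lemma~\ref{lemma-bad-leaf-good-leaf} with the same pair of profiles $(v_1^{a,mid},v_2^{a,large})$ and $(v_1^{a,large},v_2^{b,one})$. One cosmetic note: at $I_2$, Condition~\ref{condi-2-ud} only forces that bidder~1 does not receive item $a$ (she could still get item $b$), not that she receives the empty bundle, but since $v_1^{a,mid}(\{b\})=0$ this does not affect your utility computation.
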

\begin{claim}\label{claim-b-small-large}
    The strategy $\mathcal S_1$ dictates the same message at vertex $u$ for the valuations $v_1^{b,small}$ and $v_1^{b,mid}$. 
\end{claim}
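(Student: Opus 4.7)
The plan is to mirror the structure of the proofs of \cref{claim-oneone-same,claim-a-one-mid} by applying \cref{lemma-bad-leaf-good-leaf} to a carefully chosen pair of valuation profiles involving $v_1^{b,small}$ and $v_1^{b,mid}$. The natural choice is to pair $v_1^{b,small}$ with $v_2^{b,one}$ (yielding profile $I_6$, in which bidder $1$ wins item $b$ by condition \ref{condi-6-ud} of \cref{lemma-ud-instances}) and $v_1^{b,mid}$ with $v_2^{b,large}$ (yielding profile $I_4$, in which bidder $2$ wins item $b$ by condition \ref{condi-4-ud}). To apply \cref{lemma-bad-leaf-good-leaf} we then need to verify two ingredients: (i) both profiles pass through the chosen vertex $u$, and (ii) bidder $1$'s utility, evaluated under the \emph{same} true valuation, is strictly larger on one profile than on the other.

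For (i), I would invoke the definition of $u$ as the \emph{first} vertex at which any two strategy paths in $\mathcal{V}_1\times\mathcal{V}_2$ diverge: every profile in $\mathcal{V}_1\times\mathcal{V}_2$ prescribes identical messages at every ancestor of $u$, so in particular both $(\mathcal{S}_1(v_1^{b,small}),\mathcal{S}_2(v_2^{b,one}))$ and $(\mathcal{S}_1(v_1^{b,mid}),\mathcal{S}_2(v_2^{b,large}))$ reach $u$.

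For (ii), I would establish the appropriate analog of \cref{lemma-small-pay} for the unit-demand setting, namely that (a) given $I_6$, bidder $1$ wins item $b$ and pays at most $k$, and (b) given $I_4$, bidder $1$ receives nothing and pays $0$. Bound (a) is immediate from condition \ref{condi-6-ud} together with individual rationality applied to $v_1^{b,small}(\{b\})=k$; bound (b) follows from condition \ref{condi-4-ud} (which forces bidder $1$'s allocation to be disjoint from $\{b\}$ and hence to have value $0$ under $v_1^{b,mid}$), combined with individual rationality and no negative transfers. With these bounds, evaluating utilities under the valuation $v_1^{b,mid}$ gives
\[
v_1^{b,mid}\bigl(f(v_1^{b,mid},v_2^{b,large})\bigr)-P_1(v_1^{b,mid},v_2^{b,large})=0 < k^2-k \le v_1^{b,mid}\bigl(f(v_1^{b,small},v_2^{b,one})\bigr)-P_1(v_1^{b,small},v_2^{b,one}),
\]
where the last inequality uses $v_1^{b,mid}(\{b\})=k^2$ and the payment bound from (a).

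Combining (i), (ii), and \cref{lemma-bad-leaf-good-leaf} applied to $i=1$, $v_i=v_1^{b,mid}$, $v_i'=v_1^{b,small}$, $v_{-i}=v_2^{b,large}$, $v_{-i}'=v_2^{b,one}$ delivers the claim. The only nontrivial step is the payment analog of \cref{lemma-small-pay}, but I expect this to be entirely routine since it follows the identical template as in the multi-unit-auction lower bound: it is a direct consequence of the allocation constraints pinned down by \cref{lemma-ud-instances} together with the standard IR and no-negative-transfers conditions. Hence there is no real obstacle; the claim essentially re-uses the machinery already developed for \cref{claim-a-one-mid,claim-a-mid-large} with a different choice of paired profiles.
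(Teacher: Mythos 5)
Your proposal is correct and takes essentially the same approach as the paper: you pair $I_6=(v_1^{b,small},v_2^{b,one})$ with $I_4=(v_1^{b,mid},v_2^{b,large})$, invoke the same payment/allocation facts (which the paper isolates as parts \ref{item-4-ud} and \ref{item-4.5-ud} of \cref{lemma-small-pay-ud}), compare utilities under $v_1^{b,mid}$ to obtain $0 < k^2-k$, and conclude via \cref{lemma-bad-leaf-good-leaf}. The only cosmetic difference is that you derive the needed payment bounds inline rather than citing the pre-packaged lemma.
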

\begin{claim}\label{claim-b-mid-large}
    The strategy $\mathcal S_1$ dictates the same message at vertex $u$ for the valuations $v_1^{b,mid}$ and $v_1^{b,large}$. 
\end{claim}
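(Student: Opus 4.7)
The plan is to apply \cref{lemma-bad-leaf-good-leaf} to bidder $1$ at vertex $u$ with the pair of valuation profiles $I_4=(v_1^{b,mid},v_2^{b,large})$ and $J=(v_1^{b,large},v_2^{b,one})$. Both profiles lie in $\mathcal V_1\times \mathcal V_2$, and since $u$ was defined as the earliest divergence vertex among profiles in $\mathcal V_1\times \mathcal V_2$, both paths pass through $u$; this discharges condition~1 of the lemma. I will compute utilities with respect to $v_1^{b,mid}$ and show that the utility at $I_4$ is strictly below the utility at $J$, so that the lemma forces $\mathcal S_1$ to send the same message for $v_1^{b,mid}$ and $v_1^{b,large}$ at $u$.

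First I would handle $I_4$: by condition~4 of \cref{lemma-ud-instances} bidder~$2$ receives item $b$, so bidder~$1$'s bundle contains at most $\{a\}$, on which $v_1^{b,mid}$ has value $0$. Combining individual rationality with no negative transfers then pins $P_1(I_4)=0$, so $v_1^{b,mid}(f(I_4))-P_1(I_4)=0$.

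The main technical step is to establish $P_1(J)\le k$, since $J$ itself is not in the distribution $\mathcal D$. I will route through the distributional profile $I_6=(v_1^{b,small},v_2^{b,one})$, at which condition~6 of \cref{lemma-ud-instances} guarantees that bidder $1$'s bundle contains $b$, and individual rationality yields $P_1(I_6)\le v_1^{b,small}(\{b\})=k$. By \cref{wmon-lemma}, since any OSP mechanism is DSIC, the underlying allocation rule is weakly monotone: applying weak monotonicity for bidder $1$ with $v_2=v_2^{b,one}$ held fixed and comparing $v_1^{b,small}$ against $v_1^{b,large}$ forces bidder $1$'s bundle at $J$ to also contain $b$ (otherwise one obtains $k\ge k^4$, a contradiction for large $k$). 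Once both bundles contain $b$, the valuations $v_1^{b,small}$ and $v_1^{b,large}$ assign the same value to each of them, and two applications of DSIC (swapping the reported valuation in each direction) equate the payments, yielding $P_1(J)=P_1(I_6)\le k$.

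Plugging this in, $v_1^{b,mid}(f(J))-P_1(J)\ge k^2-k>0$ for $k$ arbitrarily large, strictly exceeding the utility $0$ from $I_4$. \cref{lemma-bad-leaf-good-leaf}, applied with $v_i=v_1^{b,mid}$, $v_{-i}=v_2^{b,large}$, $v_i'=v_1^{b,large}$, and $v_{-i}'=v_2^{b,one}$, then concludes that $\mathcal S_1$ dictates the same message at $u$ for $v_1^{b,mid}$ and $v_1^{b,large}$. The step I expect to be the main obstacle is the payment bound at the out-of-$\mathcal D$ profile $J$, since the approximation guarantee does not directly pin down its allocation; the workaround is to lean on \cref{wmon-lemma} to transfer both the allocation and the payment from the forced-allocation profile $I_6$ to $J$, in the spirit of the payment-transfer trick in the proof of \cref{lemma-small-pay}.
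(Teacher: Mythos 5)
Your proof is correct and follows essentially the same approach as the paper's: you apply \cref{lemma-bad-leaf-good-leaf} at $u$ to the same pair of profiles $(v_1^{b,mid},v_2^{b,large})$ and $(v_1^{b,large},v_2^{b,one})$, using the forced zero-utility outcome at $I_4$ (paper's \cref{lemma-small-pay-ud}, part 4) against a utility of at least $k^2-k$ at the latter profile (paper's \cref{lemma-small-pay-ud}, part 3). The only cosmetic difference is that the paper obtains the bound $P_1(v_1^{b,large},v_2^{b,one})\le k$ inside \cref{lemma-small-pay-ud} via a single DSIC transfer from $I_6$, whereas you re-derive it inline via weak monotonicity plus a two-sided DSIC argument; these are equivalent, since weak monotonicity is itself a consequence of the same DSIC inequalities.
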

\begin{claim}\label{claim-a-both}
    The strategy $\mathcal S_1$ dictates the same message at vertex $u$ for the valuations $v_1^{both}$ and $v_1^{a,mid}$. 
\end{claim}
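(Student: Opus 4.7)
The plan is to apply Lemma \ref{lemma-bad-leaf-good-leaf} with the player valuation $v_1 = v_1^{a,mid}$, the ``bad'' profile $(v_1^{a,mid}, v_2^{a,large}) = I_2$, and the ``good'' profile $(v_1^{both}, v_2^{b,one})$. Since $u$ is the first vertex in the protocol at which any two paths induced by profiles in $\mathcal V_1 \times \mathcal V_2$ diverge, every such path shares the common prefix up to $u$; in particular, $u$ lies on both $Path(\mathcal S_1(v_1^{a,mid}), \mathcal S_2(v_2^{a,large}))$ and $Path(\mathcal S_1(v_1^{both}), \mathcal S_2(v_2^{b,one}))$. All that is left is to verify that the utility of $v_1^{a,mid}$ at $I_2$ is strictly smaller than the utility that $v_1^{a,mid}$ would derive from the outcome of $(v_1^{both}, v_2^{b,one})$.

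First, I would handle $I_2$: by Lemma \ref{lemma-ud-instances}, condition \ref{condi-2-ud}, item $a$ is allocated to bidder $2$, and since $v_1^{a,mid}$ assigns value $0$ to every bundle not containing $a$, bidder $1$'s value is $0$; individual rationality plus no negative transfers then force $P_1(I_2) = 0$, so $v_1^{a,mid}$'s utility at $I_2$ is exactly $0$. Second, I would pin down enough about the outcome of $(v_1^{both}, v_2^{b,one})$, a profile outside the support of $\mathcal D$. For this I would invoke dominant-strategy incentive compatibility (implied by OSP): with $v_2^{b,one}$ fixed on bidder $2$, the deviation of bidder $1$ from $v_1^{both}$ to $v_1^{a,one}$ lands at $I_1$, where by Lemma \ref{lemma-ud-instances}, condition \ref{condi-1-ud}, and IR, bidder $1$ receives item $a$ and pays at most $1$. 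Hence bidder $1$ with true valuation $v_1^{both}$ obtains utility at least $(2k+3) - 1 = 2k+2$ from that deviation, and by DSIC at least that much from truthful play. Because under $v_1^{both}$ only bundles containing $a$ have value at least $2k+2$, the bundle $S$ that bidder $1$ receives at $(v_1^{both}, v_2^{b,one})$ must contain $a$, and rearranging the utility inequality gives $P_1(v_1^{both}, v_2^{b,one}) \leq v_1^{both}(S) - (2k+2) \leq 1$.

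Plugging in, the utility of $v_1^{a,mid}$ at the outcome of $(v_1^{both}, v_2^{b,one})$ is $v_1^{a,mid}(S) - P_1(v_1^{both}, v_2^{b,one}) \geq k^2 - 1$, which strictly exceeds $0$ for large $k$. Lemma \ref{lemma-bad-leaf-good-leaf} then forces $\mathcal S_1$ to dictate the same message at $u$ for $v_1^{a,mid}$ and $v_1^{both}$, which is exactly the claim.

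The main obstacle I anticipate is the second step: controlling the outcome of $(v_1^{both}, v_2^{b,one})$, a profile that carries no direct approximation guarantee because it lies outside the support of $\mathcal D$. The key trick is to transport the approximation-based information that Lemma \ref{lemma-ud-instances} supplies about $I_1$ into this off-support profile via a DSIC deviation, turning it into a lower bound on $v_1^{both}$'s utility and, in turn, into structural information about the bundle and payment assigned to bidder $1$. I expect the allocation and payment facts deferred to Section \ref{subsubsec-prop-alloc-ud} to follow essentially the same template.
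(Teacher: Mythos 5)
Your proof is correct and follows essentially the same route as the paper: invoke Lemma~\ref{lemma-bad-leaf-good-leaf} at $u$ with the ``bad'' leaf $(v_1^{a,mid},v_2^{a,large})$ and the ``good'' leaf $(v_1^{both},v_2^{b,one})$, then verify the requisite strict utility gap for $v_1^{a,mid}$. The only cosmetic difference is that the paper black-boxes the needed allocation/payment facts as Lemma~\ref{lemma-small-pay-ud} parts~\ref{item-2-ud} and~\ref{item-6-ud}, whereas you re-derive them inline — your DSIC-deviation argument from $v_1^{both}$ to $v_1^{a,one}$ to bound $P_1(v_1^{both},v_2^{b,one})\le 1$, and the IR/NNT argument giving zero utility at $I_2$, are exactly the paper's proofs of those lemma parts.
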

\begin{claim}\label{claim-b-both}
    The strategy $\mathcal S_1$ dictates the same message at vertex $u$ for the valuations $v_1^{both}$ and $v_1^{b,mid}$. 
\end{claim}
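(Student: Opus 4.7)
The plan is to apply Lemma \ref{lemma-bad-leaf-good-leaf} following the template of Claims \ref{claim-oneone-same} and \ref{claim-ONE-ALL-same}: identify two valuation profiles whose paths under $(\mathcal S_1, \mathcal S_2)$ both traverse vertex $u$, exhibit a strict utility gap for a suitable evaluating valuation, and conclude that $\mathcal S_1$ must dictate the same message at $u$ for $v_1^{both}$ and $v_1^{b,mid}$. I intend to take $v_i = v_1^{b,mid}$ as the evaluating valuation, contrasting a leaf reached via $\mathcal S_1(v_1^{b,mid})$ against a leaf reached via $\mathcal S_1(v_1^{both})$.

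The first profile is the hard instance $I_4 = (v_1^{b,mid}, v_2^{b,large})$. By item \ref{condi-4-ud} of Lemma \ref{lemma-ud-instances}, bidder $2$ wins item $b$; hence bidder $1$ receives either $\emptyset$ or $\{a\}$, both of which $v_1^{b,mid}$ values at $0$. Combined with individual rationality and no negative transfers, this pins down $P_1(v_1^{b,mid}, v_2^{b,large}) = 0$, so the utility of $v_1^{b,mid}$ at this leaf equals $0$. The second profile will be of the form $(v_1^{both}, v_2)$ for a carefully chosen $v_2$ at which I would show that bidder $1$ receives item $b$ at a payment strictly less than $k^2$; then $v_1^{b,mid}$ evaluated at this leaf yields utility of order $k^2$, providing the strict inequality required by Lemma \ref{lemma-bad-leaf-good-leaf}. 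To prove the payment bound at this second profile I would establish an analog of Lemma \ref{lemma-small-pay} for the unit-demand setting, using individual rationality, no negative transfers, and the dominant-strategy incentive compatibility implied by OSP.

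The main obstacle, and what makes this claim more delicate than the ``single-side'' claims \ref{claim-a-one-mid}--\ref{claim-b-mid-large}, is that $v_1^{both}$ does not appear as bidder $1$'s valuation in any of the hard instances $I_1, \dots, I_7$, so the approximation guarantees recorded in Lemma \ref{lemma-ud-instances} give no direct handle on outcomes of the form $(v_1^{both}, v_2)$. The plan for overcoming this is to propagate the information we do have about $v_1^{b,small}$, $v_1^{b,mid}$, and $v_1^{b,large}$ outcomes (where $I_3$, $I_4$, and $I_6$ provide anchors) to the $v_1^{both}$ side via DSIC applied in both directions between $(v_1^{both}, v_2)$ and $(v_1^{b,\star}, v_2)$ for a convenient $v_2$, together with weak monotonicity (Lemma \ref{wmon-lemma}) when needed. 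Executing this propagation, and verifying that vertex $u$ lies on the path of the resulting second profile (which follows from the same argument used implicitly throughout this chain of claims, namely that bidder $1$'s messages above $u$ are independent of her valuation in $\mathcal V_1$), is the technical heart of the proof.
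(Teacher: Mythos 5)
Your plan has a genuine gap: you intend to pick a single evaluating valuation ($v_1^{b,mid}$) and a single $v_2 \in \mathcal V_2$ such that at $(v_1^{both}, v_2)$ bidder $1$ is guaranteed to win item $b$ at a bounded payment. This cannot be established. The only anchor instance touching bidder $2$'s ``$a$-side'' valuations is $I_2 = (v_1^{a,mid}, v_2^{a,large})$, and propagating from it via weak monotonicity (the route the paper actually takes in Lemma~\ref{lemma-small-pay-ud}, part~\ref{item-complicated-ud}) only yields that at $(v_1^{both}, v_2^{a,large})$ bidder $1$ does \emph{not} win item $a$. Whether bidder $1$ then wins item $b$ or nothing at all is genuinely indeterminate from the approximation guarantee and DSIC --- no anchor instance involves $v_1^{both}$, and none of the other $v_2 \in \mathcal V_2$ helps (e.g., against $v_2^{b,one}$ we only know bidder $1$ wins a bundle containing $a$, which is useless under the evaluator $v_1^{b,mid}$; against $v_2^{b,large}$ the approximation guarantee forces bidder $2$ to win $b$). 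In the ``wins nothing'' sub-case your comparison degenerates: both sides of the Lemma~\ref{lemma-bad-leaf-good-leaf} inequality equal $0$ under $v_1^{b,mid}$, so the strict inequality fails.

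The paper's proof resolves this with a two-case split at the profile $(v_1^{both}, v_2^{a,large})$, and --- crucially --- \emph{switches the evaluating valuation} between cases. If bidder $1$ wins item $b$, it evaluates at $v_1^{b,mid}$ exactly as you plan, contrasting against the zero-utility leaf $(v_1^{b,mid}, v_2^{b,large})$. If bidder $1$ wins nothing (and hence, by IR and no negative transfers, pays zero), it evaluates at $v_1^{both}$ instead: the leaf $(v_1^{both}, v_2^{a,large})$ now gives $v_1^{both}$-utility $0$, while the leaf $(v_1^{b,mid}, v_2^{b,one})$ gives $v_1^{both}$-utility at least $2k+1 - k = k+1 > 0$ by Lemma~\ref{lemma-small-pay-ud}, part~\ref{item-4-ud}. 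Either way Lemma~\ref{lemma-bad-leaf-good-leaf} applies. Your proposal correctly identifies that $v_1^{both}$'s absence from the anchor instances is the obstacle, and your toolbox (DSIC both directions, weak monotonicity, IR, NNT) is the right one, but a single fixed evaluator cannot carry the argument; the case split with two different evaluating valuations is the missing idea.
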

To prove these claims, we use the following collection of observations about the allocation and the payment scheme of player $1$:    
\begin{lemma}\label{lemma-small-pay-ud}
    The allocation rule $f$ and the payment scheme $P_1$ of bidder $1$ satisfy that:
    \begin{enumerate}
        \item Given the valuation profiles $(v_1^{a,one},v_{2}^{b,one})$ 
        and $(v_1^{a,large},v_{2}^{b,one})$, bidder $1$ wins item $a$ and pays at most $1$.  \label{item-1-ud}
        \item  Given the valuation profile $(v_1^{a,mid},v_2^{a,large})$, bidder $1$ wins a bundle that does not contain item $a$ and pays zero.   \label{item-2-ud}
        \item Given the valuation profiles $(v_1^{b,small},v_2^{b,one})$,
        $(v_1^{b,mid},v_2^{b,one})$ and $(v_1^{b,large},v_2^{b,one})$,  
        bidder $1$ wins item $b$ and pays at most $k$. 
        \label{item-4-ud}
           \item  Given the valuation profile $(v_1^{b,mid},v_2^{b,large})$, bidder $1$ wins a bundle that does not contain item $b$ and pays zero.   \label{item-4.5-ud}
\item Given the valuation profile $(v_1^{both},v_2^{b,one})$, 
bidder $1$ gets a bundle that contains item $a$ and pays at most $1$. \label{item-6-ud}
\item Given the valuation profile $(v_1^{both},v_2^{a,large})$, bidder $1$ does not win item $a$.  If bidder $1$ wins item $b$, then he pays at most $2k+1$. If he wins a bundle that contains neither item $a$ or item $b$, then he pays zero. \label{item-complicated-ud}
    \end{enumerate}
\end{lemma}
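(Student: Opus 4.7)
The plan is to prove each part by combining three ingredients: the forced allocations supplied by Lemma \ref{lemma-ud-instances}, the fact that individual rationality together with no negative transfers pins payments down tightly once the winning bundle is known, and the fact that obvious strategy-proofness implies dominant-strategy incentive compatibility (DSIC), so I may compare utilities of player $1$ across different reports while fixing $v_2$.

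The direct parts are items 2 and 4 together with the $(v_1^{a,one},v_2^{b,one})$-subcase of item 1 and the $(v_1^{b,small},v_2^{b,one})$-subcase of item 3. For item 2, condition 2 of Lemma \ref{lemma-ud-instances} gives that bidder $1$ receives a bundle not containing $a$; since $v_1^{a,mid}$ only values $a$, her realized value is $0$, so individual rationality yields $P_1 \le 0$ and no negative transfers forces $P_1 = 0$. Item 4 is identical via condition 4, and the two remaining direct subcases follow by applying conditions 1 and 6 of Lemma \ref{lemma-ud-instances} to pin down the winner and then invoking individual rationality to bound the payment by the winner's value ($1$ and $k$, respectively).

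For the remaining ``bootstrap'' subcases of items 1, 3, and 5, the plan is to treat an already-established direct subcase as a candidate misreport and invoke DSIC. Concretely, for the $(v_1^{a,large},v_2^{b,one})$-subcase of item 1, a bidder with true type $v_1^{a,large}$ could deviate to $v_1^{a,one}$ and, by the direct half just proved, guarantee utility at least $k^4 - 1$; DSIC forces her truthful utility to weakly dominate this, and since her possible values lie in $\{0, k^4\}$ and no negative transfers keeps payments non-negative, she must win $a$ and pay at most $1$. The $(v_1^{b,mid},v_2^{b,one})$ and $(v_1^{b,large},v_2^{b,one})$-subcases of item 3 use $v_1^{b,small}$ as the baseline misreport, and item 5 uses $v_1^{a,one}$ analogously: a bidder with true type $v_1^{both}$ facing $v_2^{b,one}$ can secure utility at least $v_1^{both}(\{a\}) - 1 = 2k+2$ via this deviation, which (since $v_1^{both}(\{b\}) = 2k+1 < 2k+2$) forces her winning bundle to contain $a$ and her payment to be at most $1$.

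The main obstacle is item 6. The plan is to first rule out bidder $1$ winning $a$ under $(v_1^{both}, v_2^{a,large})$ by a DSIC contradiction: if she did, individual rationality would cap her payment at $v_1^{both}(\{a\}) = 2k+3$, but then a bidder with true type $v_1^{a,mid}$ could deviate to $v_1^{both}$ and secure utility at least $k^2 - (2k+3)$, which is strictly positive for large $k$, contradicting item 2 which pins her truthful utility at $0$. Once $a$ is ruled out, the two conditional statements follow directly from individual rationality and no negative transfers: if bidder $1$ wins a bundle containing $b$, her $v_1^{both}$-value is $2k+1$ so the payment is at most $2k+1$, and if she wins neither $a$ nor $b$ her realized value is $0$, so the payment must vanish.
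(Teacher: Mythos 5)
Your proposal is correct and follows essentially the same structure as the paper's proof: the direct subcases are pinned down by Lemma~\ref{lemma-ud-instances}, IR, and no negative transfers, and the remaining subcases are then bootstrapped by treating an already-established subcase as a candidate misreport and invoking DSIC plus no negative transfers to force both the allocation and the payment bound. The only (small) departure is in part~\ref{item-complicated-ud}: the paper rules out bidder~$1$ winning item~$a$ under $(v_1^{both},v_2^{a,large})$ by appealing to weak monotonicity (\cref{wmon-lemma}), whereas you derive the same conclusion by a direct DSIC contradiction---a bidder with true type $v_1^{a,mid}$ would strictly gain by misreporting $v_1^{both}$, which contradicts part~\ref{item-2-ud}. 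These are two equivalent ways of packaging the same argument, since the weak-monotonicity violation the paper exhibits is exactly witnessed by the profitable deviation you construct; your version is marginally more self-contained as it avoids the auxiliary lemma.
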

\subsubsection[All Valuations in V\_1 Send The Same Message: Proofs of Claims \ref{claim-a-one-mid} to \ref{claim-b-both}]{All Valuations in $\mathcal V_1$ Send The Same Message: Proofs of Claims \ref{claim-a-one-mid} to \ref{claim-b-both}}\label{app-claims-proofs-ud}
We now prove Claims \ref{claim-a-one-mid} to \ref{claim-b-both}, which jointly imply a contradiction. All proofs make extensive use of \cref{lemma-ud-instances}, which analyzes the allocation and the payments of any deterministic mechanisms with the desired properties. All proofs are quite similar to each other, and we write them for completeness. The only claim that requires a more involved case analysis is \cref{claim-b-both}.

\begin{proof}[Proof of \cref{claim-a-one-mid}]
      Note that by Lemma \ref{lemma-small-pay-ud} part \ref{item-1-ud}, $f(v_1^{a,one},v_2^{b,one})$ allocates item $a$ to bidder $1$ and $P_1(v_1^{a,one},v_2^{b,one})\le 1$. Therefore:
\begin{equation}\label{eq-good-leaf1-ud}
 v_1^{a,mid}(f(v_1^{a,one},v_2^{b,one}))-P_1(v_1^{a,one},v_2^{b,one})\ge k^2-1   
\end{equation}
 In contrast, by part \ref{item-2-ud} of Lemma \ref{lemma-small-pay-ud}, the allocation rule  $f(v_1^{a,mid},v_2^{a,large})$ allocates no items to player $1$ and $P_1(v_1^{a,mid},v_2^{a,large})=0$, so:
 \begin{equation}\label{eq-bad-leaf1-ud}
 v_1^{a,mid}(f(v_1^{a,mid},v_2^{a,large}))-P_1(v_1^{a,mid},v_2^{a,large})= 0   
\end{equation}
We remind that $k$ is arbitrarily large,
so combining inequalities (\ref{eq-good-leaf1-ud}) and (\ref{eq-bad-leaf1-ud}) gives:
\begin{equation*}
 v_1^{a,mid}(f(v_1^{a,mid},v_2^{a,large}))-P_1(v_1^{a,mid},v_2^{a,large})< 
v_1^{a,mid}(f(v_1^{a,one},v_2^{b,one}))-P_1(v_1^{a,one},v_2^{b,one})  
\end{equation*}
We remind that vertex $u$ belongs in $Path(\mathcal S_1(v_1^{a,one}),\mathcal S_2(v_2^{b,one}))$ and also in
$Path(\mathcal{S}_1(v_1^{a,mid}),\allowbreak\mathcal{S}_2(v_2^{a,large}))$. Therefore, Lemma \ref{lemma-bad-leaf-good-leaf} gives that the strategy $\mathcal S_1$ dictates the same message for  $v_1^{a,one}$ and $v_1^{a,mid}$ at vertex $u$.
\end{proof}

\begin{proof}[Proof of \cref{claim-a-mid-large}]
     By Lemma \ref{lemma-small-pay-ud} part \ref{item-1-ud}, $f(v_1^{a,large},v_2^{b,one})$ allocates item $a$ to bidder $1$ and $P_1(v_1^{a,large},\allowbreak v_2^{b,one})\le 1$. Therefore:
\begin{equation}\label{eq-good-leaf2-ud}
 v_1^{a,mid}(f(v_1^{a,large},v_2^{b,one}))-P_1(v_1^{a,large},v_2^{b,one})\ge k^2-1   
\end{equation}
Whereas by part \ref{item-2-ud} of Lemma \ref{lemma-small-pay-ud}: 
 \begin{equation}\label{eq-bad-leaf2-ud}
 v_1^{a,mid}(f(v_1^{a,mid},v_2^{a,large}))-P_1(v_1^{a,mid},v_2^{a,large})=0   
\end{equation}
Combining the inequalities (\ref{eq-good-leaf2-ud}) and (\ref{eq-bad-leaf2-ud}) gives:
\begin{equation*}
 v_1^{a,mid}(f(v_1^{a,mid},v_2^{a,large}))-P_1(v_1^{a,mid},v_2^{a,large})< 
v_1^{a,mid}(f(v_1^{a,large},v_2^{b,one}))-P_1(v_1^{a,large},v_2^{b,one})
\end{equation*}
Note that vertex $u$ belongs in $Path(\mathcal S_1(v_1^{a,large}),\mathcal S_2(v_2^{b,one}))$ and also in
$Path(\mathcal{S}_1(v_1^{a,mid}),\allowbreak\mathcal{S}_2(v_2^{a,large}))$. Therefore, Lemma \ref{lemma-bad-leaf-good-leaf} gives that the strategy $\mathcal S_1$ dictates the same message for  $v_1^{a,mid}$ and $v_1^{a,large}$ at vertex $u$.
\end{proof}

\begin{proof}[Proof of \cref{claim-b-small-large}]
        By Lemma \ref{lemma-small-pay-ud} part \ref{item-4-ud}, $f(v_1^{b,small},v_2^{b,one})$ allocates item $b$ to bidder $1$ and $P_1(v_1^{b,small},\allowbreak v_2^{b,one})\le k$. Therefore:
\begin{equation}\label{eq-good-leaf3-ud}
 v_1^{b,mid}(f(v_1^{b,small},v_2^{b,one}))-P_1(v_1^{b,small},v_2^{b,one})\ge k^2-k   
\end{equation}
Whereas by part \ref{item-4.5-ud} of Lemma \ref{lemma-small-pay-ud}: 
 \begin{equation}\label{eq-bad-leaf3-ud}
 v_1^{b,mid}(f(v_1^{b,mid},v_2^{b,large}))-P_1(v_1^{b,mid},v_2^{b,large})= 0   
\end{equation}
Combining inequalities (\ref{eq-good-leaf3-ud}) and (\ref{eq-bad-leaf3-ud}) gives:
\begin{equation*}
 v_1^{b,mid}(f(v_1^{b,mid},v_2^{b,large}))-P_1(v_1^{b,mid},v_2^{b,large})< 
v_1^{b,mid}(f(v_1^{b,small},v_2^{b,one}))-P_1(v_1^{b,small},v_2^{b,one})
\end{equation*}
Note that vertex $u$ belongs in $Path(\mathcal S_1(v_1^{b,small}),\mathcal S_2(v_2^{b,one}))$ and also in
$Path(\mathcal{S}_1(v_1^{b,mid}),\allowbreak\mathcal{S}_2(v_2^{b,large}))$. Therefore, Lemma \ref{lemma-bad-leaf-good-leaf} gives that the strategy $\mathcal S_1$ dictates the same message for  $v_1^{b,mid}$ and $v_1^{b,small}$ at vertex $u$.
\end{proof}
\begin{proof}[Proof of \cref{claim-b-mid-large}]
          By Lemma \ref{lemma-small-pay-ud} part \ref{item-4-ud}, $f(v_1^{b,large},v_2^{b,one})$ allocates item $b$ to bidder $1$ and $P_1(v_1^{b,large},\allowbreak v_2^{b,one})\le k$. Therefore:
\begin{equation}\label{eq-good-leaf4-ud}
 v_1^{b,mid}(f(v_1^{b,large},v_2^{b,one}))-P_1(v_1^{b,large},v_2^{b,one})\ge k^2-k   
\end{equation}
Whereas by part \ref{item-4.5-ud} of Lemma \ref{lemma-small-pay-ud}: 
 \begin{equation}\label{eq-bad-leaf4-ud}
 v_1^{b,mid}(f(v_1^{b,mid},v_2^{b,large}))-P_1(v_1^{b,mid},v_2^{b,large})= 0   
\end{equation}
Combining inequalities (\ref{eq-good-leaf4-ud}) and (\ref{eq-bad-leaf4-ud}) gives:
\begin{equation*}
 v_1^{b,mid}(f(v_1^{b,mid},v_2^{b,large}))-P_1(v_1^{b,mid},v_2^{b,large})< 
v_1^{b,mid}(f(v_1^{b,large},v_2^{b,one}))-P_1(v_1^{b,large},v_2^{b,one})
\end{equation*}
Note that vertex $u$ belongs in $Path(\mathcal S_1(v_1^{b,large}),\mathcal S_2(v_2^{b,one}))$ and also in
$Path(\mathcal{S}_1(v_1^{b,mid}),\allowbreak\mathcal{S}_2(v_2^{b,large}))$. Therefore, Lemma \ref{lemma-bad-leaf-good-leaf} gives that the strategy $\mathcal S_1$ dictates the same message for  $v_1^{b,mid}$ and $v_1^{b,large}$ at vertex $u$.
\end{proof}
\begin{proof}[Proof of \cref{claim-a-both}]
            By Lemma \ref{lemma-small-pay-ud} part \ref{item-6-ud}, $f(v_1^{both},v_2^{b,one})$ allocates item $a$ to bidder $1$ and $P_1(v_1^{both},\allowbreak v_2^{b,one})\le 1$. Therefore:
\begin{equation}\label{eq-good-leaf5-ud}
 v_1^{a,mid}(f(v_1^{both},v_2^{b,one}))-P_1(v_1^{both},v_2^{b,one})\ge k^2-1   
\end{equation}
Whereas by part \ref{item-2-ud} of Lemma \ref{lemma-small-pay-ud}: 
 \begin{equation}\label{eq-bad-leaf5-ud}
 v_1^{a,mid}(f(v_1^{a,mid},v_2^{a,large}))-P_1(v_1^{a,mid},v_2^{a,large})=0   
\end{equation}
Combining the inequalities (\ref{eq-good-leaf5-ud}) and (\ref{eq-bad-leaf5-ud}) gives:
\begin{equation*}
 v_1^{a,mid}(f(v_1^{a,mid},v_2^{a,large}))-P_1(v_1^{a,mid},v_2^{a,large})< 
 v_1^{a,mid}(f(v_1^{both},v_2^{b,one}))-P_1(v_1^{both},v_2^{b,one})
\end{equation*}
Note that vertex $u$ belongs in $Path(\mathcal S_1(v_1^{both}),\mathcal S_2(v_2^{b,one}))$ and also in
$Path(\mathcal{S}_1(v_1^{a,mid}),\allowbreak\mathcal{S}_2(v_2^{a,large}))$. Therefore, Lemma \ref{lemma-bad-leaf-good-leaf} gives that the strategy $\mathcal S_1$ dictates the same message for  $v_1^{a,mid}$ and $v_1^{both}$ at vertex $u$.
\end{proof}
\begin{proof}[Proof of \cref{claim-b-both}]
    Note that by \cref{lemma-small-pay-ud} part \ref{item-complicated-ud}, given the valuation profile $(v_1^{both},v_2^{a,large})$, bidder $1$ cannot win item $a$ so we consider the following two cases: the case where he wins item $b$ and the case where he wins neither of these items.

    Assume that bidder $1$ wins item $b$ given the valuation profile $(v_1^{both},v_2^{a,large})$. Note that in this case, \cref{lemma-small-pay-ud} part \ref{item-complicated-ud} also implies that:
    \begin{equation}\label{eq-good-leaf6-ud}
 v_1^{b,mid}(f(v_1^{both},v_2^{a,large}))-P_1(v_1^{both},v_2^{a,large})\ge k^2 -2k-1   
\end{equation}
Whereas by part \ref{item-4.5-ud} of Lemma \ref{lemma-small-pay-ud}: 
 \begin{equation}\label{eq-bad-leaf6-ud}
 v_1^{b,mid}(f(v_1^{b,mid},v_2^{b,large}))-P_1(v_1^{b,mid},v_2^{b,large})= 0   
\end{equation}
Combining the inequalities (\ref{eq-good-leaf6-ud}) and (\ref{eq-bad-leaf6-ud}) gives:
\begin{equation*}
 v_1^{b,mid}(f(v_1^{b,mid},v_2^{b,large}))-P_1(v_1^{b,mid},v_2^{b,large})< 
 v_1^{b,mid}(f(v_1^{both},v_2^{a,large}))-P_1(v_1^{both},v_2^{a,large})
\end{equation*}
Note that vertex $u$ belongs in $Path(\mathcal S_1(v_1^{both}),\mathcal S_2(v_2^{a,large}))$ and also in
$Path(\mathcal{S}_1(v_1^{a,mid}),\allowbreak\mathcal{S}_2(v_2^{a,large}))$. Therefore, Lemma \ref{lemma-bad-leaf-good-leaf} gives that the strategy $\mathcal S_1$ dictates the same message for  $v_1^{both}$ and $v_1^{b,mid}$ at vertex $u$, which concludes this case.

For the latter case, where bidder $1$ gets a bundle that contains neither item $a$ nor item $b$ given the valuation profile $(v_1^{both},v_2^{a,large})$, by \cref{lemma-small-pay-ud} part \ref{item-complicated-ud}: 
\begin{equation}\label{eq-bad-leaf7-ud}
 v_1^{both}(f(v_1^{both},v_2^{a,large}))-P_1(v_1^{both},v_2^{a,large})=0   
\end{equation}
Whereas by part \ref{item-4-ud} of Lemma \ref{lemma-small-pay-ud}: 
 \begin{equation}\label{eq-good-leaf7-ud}
 v_1^{both}(f(v_1^{b,mid},v_2^{b,one}))-P_1(v_1^{b,mid},v_2^{b,one}) \ge 2k+1-k=k+1   
\end{equation}
Combining (\ref{eq-bad-leaf7-ud}) and  (\ref{eq-good-leaf7-ud})  gives:
\begin{equation*}
 v_1^{both}(f(v_1^{both},v_2^{a,large}))-P_1(v_1^{both},v_2^{a,large})< 
 v_1^{both}(f(v_1^{b,mid},v_2^{b,one}))-P_1(v_1^{b,mid},v_2^{b,one})
\end{equation*}
Note that vertex $u$ belongs in $Path(\mathcal S_1(v_1^{both}),\mathcal S_2(v_2^{a,large}))$ and also in
$Path(\mathcal{S}_1(v_1^{b,mid}),\allowbreak\mathcal{S}_2(v_2^{b,one}))$. Therefore, Lemma \ref{lemma-bad-leaf-good-leaf} gives that the strategy $\mathcal S_1$ dictates the same message for  $v_1^{both}$ and $v_1^{b,mid}$ at vertex $u$, which resolves the second case and completes the proof.
\end{proof}

\subsubsection{Proof of Lemma \ref{lemma-small-pay-ud}:
Observations About The Mechanism} \label{subsubsec-prop-alloc-ud}
The proof is a direct consequence of the approximation guarantee of the mechanism and the fact that it is 
obviously strategy-proof (and thus also dominant-strategy incentive compatible) and satisfies individual rationality and no negative transfers. 
We write the proof  for the sake of completeness.

Throughout the proof, we say that an item $x$ is \emph{valuable} for a valuation $v$ if $v(\{x\})>0$.

\begin{proof}[Proof of \cref{lemma-small-pay-ud}]
    
For part \ref{item-1-ud}, note that because of individual rationality:
\begin{equation}\label{eq-1-ud}
 v_1^{a,one}(f(v_1^{a,one},v_2^{b,one}))-P_1(v_1^{a,one},v_2^{b,one})\ge 0   
\end{equation}
We remind that by \cref{lemma-ud-instances} part \ref{condi-1-ud}, 
the allocation rule $f$ allocates item $a$ to bidder $1$ given $(v_1^{one},v_2^{one})$, so:
\begin{equation}\label{eq-2-ud}
 v_1^{a,one}(f(v_1^{a,one},v_2^{b,one}))=1   
\end{equation}
Combining (\ref{eq-1-ud}) and (\ref{eq-2-ud}) proves the part \ref{item-1-ud} for the valuation profile $(v_1^{a,one},v_2^{b,one})$.
Observe that it also implies that: 
\begin{equation*}\label{mid-one-eq-ud}
v_1^{a,large}(f(v_1^{a,one},v_2^{b,one}))-P_1(v_1^{a,one},v_2^{b,one})\ge k^4-1
\end{equation*}
Combining this inequality with the fact that the allocation rule $f$ and the payment scheme $P_1$ are realized by a dominant-strategy mechanism gives that:
\begin{equation}\label{eq-part1-unit}
\begin{aligned}
    v_1^{a,large}(f(v_1^{a,large},v_2^{b,one}))-P_1(v_1^{a,large},v_2^{b,one})&\ge v_1^{a,large}(f(v_1^{a,one},v_2^{b,one}))-P_1(v_1^{a,one},v_2^{b,one}) \\
    &\ge k^4-1 
\end{aligned}  
\end{equation}
Note that the property of no negative transfers implies that $P_1(v_1^{a,large},v_2^{b,one})\ge 0$, so
$v_1^{a,large}(f(v_1^{a,large},\allowbreak v_2^{b,one}))\ge k^4-1$. Since only item $a$ is valuable for  $v_1^{a,large}$, we can deduce that player $1$ wins it given the valuation profile $(v_1^{a,large}\allowbreak,v_2^{b,one})$, which further implies that in fact:
\begin{equation}\label{eq-part1-another}
v_1^{a,large}(f(v_1^{a,large}\allowbreak,v_2^{b,one}))= k^4    
\end{equation}
Now, combining (\ref{eq-part1-unit}) and (\ref{eq-part1-another}) gives that $P_1(v_1^{a,large},v_2^{b,one})\le 1$, which completes the proof for $(v_1^{a,large},v_1^{b,one})$. 

For part \ref{item-2-ud}, observe that by  \cref{lemma-ud-instances} part \ref{condi-2-ud}, given $(v_1^{a,mid},v_2^{a,large})$, bidder $2$ gets item $a$, so clearly bidder $1$ does not get it. Since only item $a$ is valuable for $v_1^{a,mid}$, we get that $v_1^{a,mid}(f(v_1^{a,mid},v_2^{a,large}))\allowbreak=0$, so by individual rationality $P_1(v_1^{a,mid},v_2^{a,large})\le 0$. Due to no negative transfers, we get that $P_1(v_1^{a,mid},v_2^{a,large})=0$, which completes the proof of this part. 

We now prove part \ref{item-4-ud}. We begin by proving it for the valuation profile $(v_1^{b,small},v_2^{b,one})$. Note that by \cref{lemma-ud-instances} part \ref{condi-6-ud}, given this valuation profile, bidder $1$ wins item $b$. Combining this fact with the fact that the mechanism satisfies individual rationality implies that $P_1(v_1^{b,small},v_2^{b,one})\le k$, as needed. Showing that the same goes for $(v_1^{b,mid},v_2^{b,one})$ and $(v_1^{b,large},v_2^{b,one})$ is analogous to the proof of part \ref{item-1-ud} for the valuation profile $(v_1^{a,mid},v_2^{b,one})$ above.

To prove \ref{item-4.5-ud}, note that by \cref{lemma-ud-instances} part \ref{condi-4-ud}, bidder $1$ does not win item $b$ given $(v_1^{b,mid},v_2^{b,large})$. Thus, he has to pay at most zero due to individual rationality, and the property of no negative transfers implies that $P_1(v_1^{b,mid},v_2^{b,large})=0$, as needed.  

For part \ref{item-6-ud}, we first show that bidder $1$ wins a bundle that contains item $a$ given $(v_1^{both},v_2^{b,one})$. Note that since the allocation rule $f$ and the payment scheme $P_1$ are realized by a dominant-strategy mechanism, we have that:
\begin{equation}\label{eq-part6-unit}
\begin{aligned}
    v_1^{both}(f(v_1^{both},v_2^{b,one}))-P_1(v_1^{both},v_2^{b,one})&\ge v_1^{both}(f(v_1^{a,one},v_2^{b,one}))-P_1(v_1^{a,one},v_2^{b,one}) \\
    &\ge 2k+2 &\text{(by part \ref{item-1-ud})}
\end{aligned}    
\end{equation}
Combining (\ref{eq-part6-unit}) with the property of no negative transfers implies that $v_1^{both}(f(v_1^{both},v_2^{b,one}))\ge 2k+2$. Thus, given $(v_1^{both},v_2^{b,one})$, bidder $1$ necessarily gets a bundle that contains item $a$.

For the upper bound on the payment, note that in fact $v_1^{both}(f(v_1^{both},v_2^{b,one}))=2k+3$.  
Combining it with inequality (\ref{eq-part6-unit}) gives that $P_1(v_1^{both},v_2^{b,one})\le 1$. By that, we  complete the proof of part \ref{item-6-ud}. 

We are now finally ready to wrap up by proving part \ref{item-complicated-ud}. We begin by showing that given $(v_1^{both},v_2^{a,large})$, bidder $1$ does not win item $a$. This is due to weak monotonicity. Formally, we remind that by part \ref{item-2-ud}, given $(v_1^{a,mid},v_2^{a,large})$, bidder $1$ wins a bundle $S^{mid}$
that does not contain $a$. Thus, if $f(v_1^{both},v_2^{a,large})$ allocates to bidder $1$ 
a bundle $S^{both}$ that contains item $a$, then $v_1^{a,mid}(S^{both})-v_1^{a,mid}(S^{mid})=k^2$ whereas $v_1^{both}(S^{both})-v_1^{both}(S^{mid})=2k+3$, so $f$ is not weakly monotone. Since $f$ and $P_1$ realize a dominant-strategy mechanism, \cref{wmon-lemma} gives that $f$ has to be weakly-monotone, so we get a contradiction. Thus, $f(v_1^{both},v_2^{a,large})$ has to allocate bidder $1$ a bundle that does not contain item $a$. 

The bounds on the payments are a straightforward implication of individual rationality and no negative transfers. If $f(v_1^{both},v_2^{a,large})$ allocates item $b$ to bidder $1$ then the payment is at most $2k+1$, and if it allocates to bidder $1$ no valuable items, then the payment has to be at most zero. Due to no negative transfers, it is zero exactly.   
\end{proof}

\subsection{Proof of Theorem \ref{thm-lb-add}: An Impossibility for Additive Bidders} \label{lb-add-proof-place}
The proof follows the same structure as in \cref{thm-mua-sm-lb} and \cref{thm-lb-ud}: roughly speaking, we describe a distribution $\mathcal{D}$, show that it is “hard” for deterministic mechanisms, and then use Yao's Lemma to deduce hardness for randomized mechanisms. In particular, this proof is very similar to the proof of \cref{thm-lb-ud} in \cref{lb-ud-proof-place}, and we write both for the sake of completeness. 

The proof outline is as follows. In \cref{subsubsec-description-add}, 
we describe the  hard distribution $\mathcal D$. 
In \cref{subsubsec-performance-add} we analyze the allocation and payments of a deterministic \textquote{good} mechanism 
and explain why these properties  imply that  a \textquote{good} mechanism does not exist in  \cref{subsubsec-contradiction-add}.
We defer technical proofs to \cref{app-claims-proofs-add,subsubsec-prop-alloc-add}. 

We prove for the case of two bidders and two items, but the proof extends to any number of bidders and any number of items by adding bidders with the all-zero valuation and assuming that the bidders in our construction have zero values for the additional items.

\subsubsection[Construction of a "Hard" Distribution D]{Construction of a \textquote{Hard} Distribution $\mathcal D$} \label{subsubsec-description-add}
    To define the probability distribution over the valuation profiles, we name the two items $a$ and $b$, and let $k$ be an arbitrarily large number. Note that since these valuations are additive, we can fully describe them by specifying their value for each item.
    Consider the following valuations:  
\[
v_1^{{a,one}}(x) = 
\begin{cases}
1 & x=a,\\
0 & \text{otherwise.}
\end{cases} \quad 
v_2^{{b,one}}(x) = 
\begin{cases}
1 & x=b,\\
0 & \text{otherwise.}
\end{cases}
\]

\[
v_1^{{a,mid}}(x) = 
\begin{cases}
k^2 & x=a,\\
0 & \text{otherwise.}
\end{cases} \quad 
v_2^{{b,mid}}(x) = 
\begin{cases}
k^2 & x=b,\\
0 & \text{otherwise.}
\end{cases}
\]

\[
v_1^{{a,large}}(x) = 
\begin{cases}
k^4 & x=a,\\
0 & \text{otherwise.}
\end{cases} \quad 
v_2^{{b,large}}(x) = 
\begin{cases}
k^4 & x=b,\\
0 & \text{otherwise.}
\end{cases}
\]

\[
v_1^{{b,small}}(x) = 
\begin{cases}
k & x=b,\\
0 & \text{otherwise.}
\end{cases} \quad 
v_2^{{a,small}}(x) = 
\begin{cases}
k & x=a,\\
0 & \text{otherwise.}
\end{cases}
\]

\[
v_1^{{b,mid}}(x) = 
\begin{cases}
k^2 & x=b,\\
0 & \text{otherwise.}
\end{cases} \quad 
v_2^{{a,mid}}(x) = 
\begin{cases}
k^2 & x=a,\\
0 & \text{otherwise.}
\end{cases}
\]

\[
v_1^{{b,large}}(x) = 
\begin{cases}
k^4 & x=b,\\
0 & \text{otherwise.}
\end{cases} \quad 
v_2^{{a,large}}(x) = 
\begin{cases}
k^4 & x=a,\\
0 & \text{otherwise.}
\end{cases}
\]

\[
v_1^{{both}}(x) = 
\begin{cases}
2k+3 & x=a,\\
2k+1 & x=b,\\
0 & \text{otherwise.}
\end{cases} \quad 
v_2^{{both}}(x) = 
\begin{cases}
2k+3 & x=b,\\
2k+1 & x=a,\\
0 & \text{otherwise.}
\end{cases}
\]
Note that all the valuations except for $v_1^{both}$ and $v_2^{both}$ are unit-demand and additive simultaneously. 
Consider the following valuation profiles: 
\begin{align*}
 I_1 = (v_1^{{a,one}},& v_2^{{b,one}}), \quad  I_2 = (v_1^{a,mid}, v_2^{{a,large}}),  \quad I_3 = (v_1^{b,large}, v_2^{b,mid}), \quad
 I_4 = (v_1^{b,mid}, v_2^{b,large}) \\ & I_5 = (v_1^{a,large}, v_2^{a,mid}) \quad I_6 = (v_1^{b,small}, v_2^{b,one}) \quad I_7 =(v_1^{a,one}, v_2^{a,small}) 
\end{align*}
Let $\mathcal D$ be the distribution over valuation profiles where the probability of the valuation profile $I_1$ is $\frac{1}{4}$, and the probability of the valuation profiles $I_2,I_3,I_4,I_5,I_6$ and $I_7$ is $\frac{1}{8}$ each. 

\subsubsection[The Performance of the Deterministic Mechanism A on the "Hard" Distribution D]{The Performance of the Deterministic Mechanism $A$ on the \textquote{Hard} Distribution $\mathcal D$}\label{subsubsec-performance-add}
We remind that our goal is to show that no deterministic mechanism that satisfies all of the desired properties extracts more than $\frac{7}{8}$ of the optimal welfare. For that, we begin by observing that:
\begin{lemma}\label{lemma-add-instances}
Every deterministic mechanism that has approximation better than $\frac{8}{7}$ necessarily satisfies all of the following conditions:
\begin{enumerate}
    \item Given the valuation profile $I_1 = (v_1^{\text{a,one}}, v_2^{\text{b,one}})$, the mechanism
    allocates item $a$ to player $1$ and item $b$ to player $2$. \label{condi-1-add}
\item Given the valuation profile $I_2 = (v_1^{a,mid}, v_2^{a,large})$, the mechanism allocates item $a$ to bidder $2$. \label{condi-2-add}

    \item Given the valuation profile $I_3 = (v_1^{b,large}, v_2^{b,mid})$, the mechanism allocates item $b$  to bidder $1$.
    \label{condi-3-add}
\item Given the valuation profile $I_4 = (v_1^{b,mid}, v_2^{b,large})$, the mechanism allocates item $b$ to bidder $2$. \label{condi-4-add}
\item Given the valuation profile $I_5 = (v_1^{a,large}, v_2^{a,mid})$, the mechanism allocates item $a$ to bidder $1$. \label{condi-5-add}
\item Given the valuation profile $I_6 = (v_1^{b,small}, v_2^{b,one})$, the mechanism allocates item $b$ to bidder $1$. \label{condi-6-add}
\item Given the valuation profile $I_7 = (v_1^{a,one}, v_2^{a,small})$, the mechanism allocates item $a$ to bidder $2$. 
\end{enumerate}
\end{lemma}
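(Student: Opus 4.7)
My plan is to prove the lemma by contraposition: for each condition $i \in \{1,\ldots,7\}$, I will show that if a deterministic mechanism violates it on instance $I_i$, then the mechanism's welfare on $I_i$ is much smaller than $\tfrac{7}{8}\cdot OPT(I_i)$, which contradicts the hypothesis that it has approximation strictly better than $\tfrac{8}{7}$ (equivalently, welfare strictly exceeding $\tfrac{7}{8}\cdot OPT(I)$ on every instance $I$). This follows the same template as the proofs of \cref{claim-mua-sm-instances} and \cref{lemma-ud-instances}, so the argument reduces to a routine per-instance computation of $OPT$ together with a short case analysis over ``violating'' allocations.

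First, I would compute $OPT(I_i)$ directly from the additive valuations: $OPT(I_1)=2$ (item $a$ to bidder~$1$, item $b$ to bidder~$2$); $OPT(I_i)=k^4$ for $i\in\{2,3,4,5\}$, since in each of those profiles exactly one bidder values some item at $k^4$ and no bidder values the other item at all; and $OPT(I_6)=OPT(I_7)=k$. Second, for each condition I would bound the largest welfare achievable while violating it. For condition~1 on $I_1$, any partition that does not give $a$ to bidder~$1$ and $b$ to bidder~$2$ wastes at least one of the two unit-valued items, so welfare is at most $1$ and the ratio at most $\tfrac{1}{2}$. For conditions~2--5, failing to hand the $k^4$-valued item to its owner leaves it either unallocated or with the opponent who values it at $k^2$, so welfare is at most $k^2$ and the ratio at most $\tfrac{1}{k^2}$. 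For conditions~6 and~7, a violation leaves the $k$-valued item either unassigned or with the unit-valued bidder, giving welfare at most $1$ and ratio at most $\tfrac{1}{k}$.

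Since $k$ is arbitrarily large, each of the bounds $\tfrac{1}{2}$, $\tfrac{1}{k^2}$, $\tfrac{1}{k}$ is strictly less than $\tfrac{7}{8}$, yielding the desired contradiction. Equivalently, one can phrase the contradiction in expectation over $\mathcal{D}$, matching the style in the paper: violating condition~1 drops $\mathbb{E}_{I\sim\mathcal{D}}[M(I)/OPT(I)]$ to at most $\tfrac{1}{4}\cdot\tfrac{1}{2}+6\cdot\tfrac{1}{8}\cdot 1=\tfrac{7}{8}$, while violating any of conditions~2--7 drops it to $\tfrac{7}{8}+o(1)$ as $k\to\infty$. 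The only step requiring any care is the combinatorial enumeration of alternative allocations when bounding welfare; this is the main (and only minor) obstacle, and it is short because each relevant valuation is nonzero on at most one item, so the only choice is whether that item goes to its ``intended'' bidder, to the other bidder, or to no one.
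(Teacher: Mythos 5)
Your computations are correct and your argument fills in exactly the details the paper leaves implicit (the paper labels this proof ``straightforward'' and merely notes that a violated condition forces expected ratio at most $\frac{7}{8}$ over $\mathcal{D}$ as $k \to \infty$). One small imprecision: your opening parenthetical re-states the hypothesis as per-instance welfare exceeding $\frac{7}{8}\cdot OPT(I)$, but in this lemma the hypothesis is about the expected ratio over the distribution $\mathcal{D}$ (as your final paragraph correctly works out), and the two are not equivalent; since the per-instance case analysis you perform is precisely what the expectation computation needs, this does not affect the validity of the argument.
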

The proof of \cref{lemma-add-instances} is straightforward and identical to the proof of \cref{lemma-ud-instances}: if a deterministic mechanism does not satisfy one of conditions, then the fact that $k$ is arbitrarily large implies that its approximation guarantee  is at most $\frac{8}{7}$ with respect to the distribution $\mathcal D$. 
\subsubsection[Reaching a Contradiction: No Deterministic and OSP Mechanism Succeeds on D]{Reaching a Contradiction: No Deterministic and OSP Mechanism Succeeds on $\mathcal D$}
\label{subsubsec-contradiction-add}
We now employ \cref{lemma-add-instances} to prove  \cref{thm-lb-add}. Let the domain $V_i$ of each bidder consist 
of additive valuations with values in  $\{0,1,\ldots,k^4\}$, where $k$ is an arbitrarily large integer.

Fix a deterministic mechanism $A$ and strategies
$(\mathcal S_1,\mathcal S_2)$ that are individually rational, satisfy no negative transfers for all the valuations  $V_1\times V_2$ and give approximation better than $\frac{7}{8}$
in expectation over the valuation profiles in the distribution $\mathcal D$. 
Denote with  $(f,P_1,P_2)$ be the allocation rule and the payment scheme that $A$ and $(\mathcal S_1,\mathcal S_2)$ realize, and assume towards a contradiction that $A$ and $(\mathcal S_1,\mathcal S_2)$ are obviously strategy-proof.

For the analysis of the deterministic mechanism $A$, we focus on the following subsets of the domains of the valuations:
$
\mathcal{V}_1=\{v_1^{a,one},v_1^{a,mid},v_1^{a,large},v_1^{b,small},v_1^{b,mid},v_1^{b,large},v_1^{both}\}$ and  $\mathcal V_2=\{v_2^{b,one},v_2^{b,mid},v_2^{b,large},\allowbreak v_2^{a,small},v_2^{a,mid}, \allowbreak v_2^{a,large},v_2^{both}\}$.
Observe that there necessarily exists a vertex $u$, and valuations $v_1,v_1' \in \mathcal{V}_1$, and  $v_2,v_2' \in \mathcal{V}_2$ such that $(\mathcal{S}_1(v_1), \mathcal{S}_2(v_2))$ diverge at vertex $u$. This follows from \cref{lemma-add-instances}, which implies that the mechanism $A$ must output different allocations for different valuation profiles in $\mathcal{V}_1 \times \mathcal{V}_2$. Consequently, not all valuation profiles end up in the same leaf, meaning that divergence must occur at some point.

Let $u$ be the first vertex in the protocol such that 
 $(\mathcal{S}_1(v_1),\mathcal{S}_2(v_2))$ and $(\mathcal{S}_1(v_1'),\mathcal{S}_2(v_2'))$ diverge, i.e., dictate different messages. 
Note that by definition this implies that $u\in Path(\mathcal{S}_1(v_1),\mathcal{S}_2(v_2))\cap Path(\mathcal{S}_1(v_1'),\mathcal{S}_2(v_2'))$ and that either bidder $1$ or bidder $2$ sends different messages for the valuations in $\mathcal{V}_1$ or $\mathcal V_2$, respectively. 
Without loss of generality, we assume that bidder $1$ sends different messages, meaning that there exist $v_1,v_1'\in \mathcal{V}_1$ such that $\mathcal S_1(v_1)$ and $\mathcal S_1(v_1')$ dictate different messages at vertex $u$.
However, the following collection of claims show that since the strategy $\mathcal S_1$ is obviously dominant, it dictates the same message for all the valuations in $\mathcal{V}_1$. Thus, we get a contradiction, which completes the proof of \cref{thm-lb-add}:
\begin{claim}\label{claim-a-one-mid-add}
    The strategy $\mathcal S_1$ dictates the same message at vertex $u$ for the valuations $v_1^{a,one}$ and $v_1^{a,mid}$. 
\end{claim}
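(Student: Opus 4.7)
The plan is to mirror the argument of Claim \ref{claim-a-one-mid} from the unit-demand lower bound, which also proves the analogous statement. The main tool is an additive counterpart of Lemma \ref{lemma-small-pay-ud}, call it Lemma \ref{lemma-small-pay-add}, which I would state and prove earlier in this subsection in exactly the same style: it follows from Lemma \ref{lemma-add-instances}, individual rationality, no negative transfers, and the fact that any OSP mechanism is in particular dominant-strategy incentive compatible (hence its allocation rule is weakly monotone by Lemma \ref{wmon-lemma}). The two items I would extract from that lemma are: (i) under $(v_1^{a,one},v_2^{b,one})$, player $1$'s allocated bundle contains item $a$ and her payment is at most $1$; and (ii) under $(v_1^{a,mid},v_2^{a,large})$, player $1$'s allocated bundle does not contain item $a$ and her payment is $0$.

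With these two facts in hand, the argument is immediate. Since $v_1^{a,mid}$ is additive with $v_1^{a,mid}(a)=k^2$ and $v_1^{a,mid}(b)=0$, fact (i) yields
\[
v_1^{a,mid}(f(v_1^{a,one},v_2^{b,one})) - P_1(v_1^{a,one},v_2^{b,one}) \ \ge\ k^2 - 1,
\]
while fact (ii) yields
\[
v_1^{a,mid}(f(v_1^{a,mid},v_2^{a,large})) - P_1(v_1^{a,mid},v_2^{a,large}) \ =\ 0.
\]
Because $k$ is arbitrarily large, combining these gives
\[
v_1^{a,mid}(f(v_1^{a,mid},v_2^{a,large})) - P_1(v_1^{a,mid},v_2^{a,large}) \ <\ v_1^{a,mid}(f(v_1^{a,one},v_2^{b,one})) - P_1(v_1^{a,one},v_2^{b,one}).
\]

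To conclude, recall that by our choice of $u$ we have $u\in Path(\mathcal S_1(v_1^{a,one}),\mathcal S_2(v_2^{b,one}))\cap Path(\mathcal S_1(v_1^{a,mid}),\mathcal S_2(v_2^{a,large}))$. Applying Lemma \ref{lemma-bad-leaf-good-leaf} with player $i=1$, valuation $v_i = v_1^{a,mid}$, and the two profiles above, the strict utility comparison forces $\mathcal S_1$ to dictate the same message at $u$ for $v_1^{a,mid}$ as for $v_1^{a,one}$, which is precisely the claim. I do not expect any real obstacle: the only place where the additive (vs.\ unit-demand) setting could have mattered is in the auxiliary lemma on allocations and payments, but since both $v_1^{a,one}$ and $v_1^{a,mid}$ place positive value only on the single item $a$, the additive and unit-demand interpretations coincide on these profiles and the proof of Lemma \ref{lemma-small-pay-add} proceeds verbatim as in the unit-demand case.
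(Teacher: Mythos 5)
Your proof is correct and matches the paper's approach exactly: the paper proves Claim~\ref{claim-a-one-mid-add} by pointing to the verbatim argument for Claim~\ref{claim-a-one-mid}, which is precisely the two-step argument you give (extract the allocation/payment facts from the analogue of Lemma~\ref{lemma-small-pay-ud}, compare utilities under $v_1^{a,mid}$, invoke Lemma~\ref{lemma-bad-leaf-good-leaf}). Your closing observation that $v_1^{a,one}$ and $v_1^{a,mid}$ place value only on item $a$, so the additive and unit-demand cases coincide here, correctly explains why the transfer is routine.
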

\begin{claim}\label{claim-a-mid-large-add}
    The strategy $\mathcal S_1$ dictates the same message at vertex $u$ for the valuations $v_1^{a,mid}$ and $v_1^{a,large}$. 
\end{claim}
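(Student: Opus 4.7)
The plan is to mirror the proof of Claim~\ref{claim-a-mid-large} almost verbatim, since all of the valuations that appear in the argument, namely $v_1^{a,one}, v_1^{a,mid}, v_1^{a,large}, v_2^{b,one}, v_2^{a,large}$, assign a positive value to only a single item and are therefore both unit-demand and additive. Consequently the relevant entries of the additive analog of Lemma~\ref{lemma-small-pay-ud} (which is stated and proved later in~\cref{subsubsec-prop-alloc-add}) coincide with their unit-demand counterparts: from part~\ref{item-1-ud}, under $(v_1^{a,large},v_2^{b,one})$ bidder~$1$ wins item~$a$ and pays at most~$1$; from part~\ref{item-2-ud}, under $(v_1^{a,mid},v_2^{a,large})$ bidder~$1$ receives no valuable item and pays zero.

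Given these two facts, the comparison under the valuation $v_1^{a,mid}$ is immediate. Evaluating bidder~$1$'s utility at the leaf reached by $(\mathcal{S}_1(v_1^{a,large}),\mathcal{S}_2(v_2^{b,one}))$ under $v_1^{a,mid}$ gives at least $k^2-1$, while the leaf reached by $(\mathcal{S}_1(v_1^{a,mid}),\mathcal{S}_2(v_2^{a,large}))$ gives utility $0$ under the same valuation. Hence
\[
v_1^{a,mid}(f(v_1^{a,mid},v_2^{a,large}))-P_1(v_1^{a,mid},v_2^{a,large})
<
v_1^{a,mid}(f(v_1^{a,large},v_2^{b,one}))-P_1(v_1^{a,large},v_2^{b,one}).
\]

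To invoke Lemma~\ref{lemma-bad-leaf-good-leaf} with $i=1$, $v_i=v_1^{a,mid}$, $v_i'=v_1^{a,large}$, $v_{-i}=v_2^{a,large}$, $v_{-i}'=v_2^{b,one}$, it remains to verify condition~(1), namely that vertex $u$ lies on both paths $Path(\mathcal{S}_1(v_1^{a,mid}),\mathcal{S}_2(v_2^{a,large}))$ and $Path(\mathcal{S}_1(v_1^{a,large}),\mathcal{S}_2(v_2^{b,one}))$. This is precisely the observation exploited throughout~\cref{subsubsec-contradiction-add}: by our choice of $u$ as the \emph{first} vertex at which any two profiles from $\mathcal{V}_1\times\mathcal{V}_2$ diverge, all such profiles necessarily agree on the messages sent at every ancestor of $u$, and hence every profile in $\mathcal{V}_1\times\mathcal{V}_2$ passes through $u$. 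Applying Lemma~\ref{lemma-bad-leaf-good-leaf} then yields that $\mathcal{S}_1$ dictates the same message at $u$ for $v_1^{a,mid}$ and $v_1^{a,large}$, as claimed.

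The only non-routine part is confirming the additive version of Lemma~\ref{lemma-small-pay-ud}, but since both valuation profiles used above involve bidder~$1$ caring about a single item, the argument is a direct transcription of the unit-demand proof of parts~\ref{item-1-ud} and~\ref{item-2-ud}, combining the approximation conditions in Lemma~\ref{lemma-add-instances} with individual rationality, no negative transfers, and dominant-strategy monotonicity of payments. So I do not expect any genuine obstacle; the step that would require the most care, if the lemma were not already available, is justifying the $P_1(v_1^{a,large},v_2^{b,one})\le 1$ bound via a weak-monotonicity / DSIC argument comparing the allocations under $v_1^{a,one}$ and $v_1^{a,large}$, exactly as done in the unit-demand case.
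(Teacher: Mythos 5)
Your proposal is correct and matches the paper's approach: the paper handles \cref{claim-a-mid-large-add} by pointing to the proof of \cref{claim-a-mid-large} and noting the two are identical, which is exactly the mirroring you carry out, using parts~\ref{item-1-add} and~\ref{item-2-add} of Lemma~\ref{lemma-small-pay-add} (the additive analogs of parts~\ref{item-1-ud} and~\ref{item-2-ud}), deriving the strict utility comparison under $v_1^{a,mid}$, and invoking Lemma~\ref{lemma-bad-leaf-good-leaf} with the observation that $u$, being the first divergence vertex for profiles in $\mathcal{V}_1\times\mathcal{V}_2$, lies on both relevant paths.
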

\begin{claim}\label{claim-b-small-large-add}
    The strategy $\mathcal S_1$ dictates the same message at vertex $u$ for the valuations $v_1^{b,small}$ and $v_1^{b,mid}$. 
\end{claim}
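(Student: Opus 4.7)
The plan is to mirror the proof of \cref{claim-b-small-large} from the unit-demand case almost verbatim, since the valuations $v_1^{b,small}$ and $v_1^{b,mid}$ are identical in their support structure under the two valuation classes (they assign positive value only to item $b$), so the utility computations go through unchanged. First, I would invoke the additive analogue of \cref{lemma-small-pay-ud} (stated elsewhere in \cref{lb-add-proof-place}) to extract two facts: given $(v_1^{b,small}, v_2^{b,one})$, bidder $1$ wins a bundle containing item $b$ and pays at most $k$; and given $(v_1^{b,mid}, v_2^{b,large})$, bidder $1$ wins a bundle that does not contain item $b$ and pays zero.

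From these two facts, and using that $v_1^{b,mid}$ is additive with value $k^2$ only on item $b$, I would derive
\begin{equation*}
v_1^{b,mid}(f(v_1^{b,small}, v_2^{b,one})) - P_1(v_1^{b,small}, v_2^{b,one}) \;\geq\; k^2 - k,
\end{equation*}
whereas
\begin{equation*}
v_1^{b,mid}(f(v_1^{b,mid}, v_2^{b,large})) - P_1(v_1^{b,mid}, v_2^{b,large}) \;=\; 0.
\end{equation*}
For $k$ arbitrarily large, the first utility strictly exceeds the second.

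Finally, by construction of $u$, we have $u \in Path(\mathcal{S}_1(v_1^{b,small}), \mathcal{S}_2(v_2^{b,one})) \cap Path(\mathcal{S}_1(v_1^{b,mid}), \mathcal{S}_2(v_2^{b,large}))$. Together with the strict utility inequality above, \cref{lemma-bad-leaf-good-leaf} applied to bidder $1$ (with the role of $v_i$ being $v_1^{b,mid}$, and the two valuation profiles being $(v_1^{b,small}, v_2^{b,one})$ and $(v_1^{b,mid}, v_2^{b,large})$) forces $\mathcal{S}_1$ to prescribe the same message for $v_1^{b,small}$ and $v_1^{b,mid}$ at vertex $u$, as required.

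I do not expect any real obstacle here: the proof is a purely mechanical adaptation of \cref{claim-b-small-large}. The only thing to double-check is that item $a$ plays no role in the utility calculation for $v_1^{b,mid}$ (which has zero value for $a$ in both the unit-demand and additive interpretations), so whether bidder $1$ additionally wins item $a$ in $f(v_1^{b,small}, v_2^{b,one})$ or $f(v_1^{b,mid}, v_2^{b,large})$ is irrelevant; only membership of $b$ in the bundle and the payment matter, and these are precisely what the additive analogue of \cref{lemma-small-pay-ud} controls.
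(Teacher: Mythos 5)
Your proposal is correct and follows exactly the same route as the paper: invoke parts \ref{item-4-add} and \ref{item-4.5-add} of \cref{lemma-small-pay-add} to establish the two utility comparisons, then apply \cref{lemma-bad-leaf-good-leaf}. Indeed, the paper itself notes that the proof of \cref{claim-b-small-large-add} is identical to that of \cref{claim-b-small-large} in the unit-demand case, which is precisely the adaptation you carried out; your observation that item $a$ is irrelevant because $v_1^{b,mid}$ assigns it zero value in both interpretations is the right sanity check.
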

\begin{claim}\label{claim-b-mid-large-add}
    The strategy $\mathcal S_1$ dictates the same message at vertex $u$ for the valuations $v_1^{b,mid}$ and $v_1^{b,large}$. 
\end{claim}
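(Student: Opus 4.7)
The plan is to mirror the proof of Claim \ref{claim-b-mid-large} in the unit-demand case, using Lemma \ref{lemma-bad-leaf-good-leaf} with the valuation profiles $(v_1^{b,large},v_2^{b,one})$ and $(v_1^{b,mid},v_2^{b,large})$. The first profile already appears (together with $(v_1^{b,small},v_2^{b,one})$ and $(v_1^{b,mid},v_2^{b,one})$) in the analog of \cref{lemma-small-pay-ud} for the additive setting: by \cref{lemma-add-instances}(\ref{condi-6-add}) bidder $1$ wins item $b$ given $(v_1^{b,small},v_2^{b,one})$, so individual rationality gives $P_1(v_1^{b,small},v_2^{b,one})\le k$. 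By dominant-strategy incentive compatibility, the three profiles $(v_1^{b,small},v_2^{b,one})$, $(v_1^{b,mid},v_2^{b,one})$, and $(v_1^{b,large},v_2^{b,one})$ must induce the same allocation to bidder $1$ (his allocation contains item $b$) and the same payment, both at most $k$; this is the additive analog used here.

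The second profile is handled by \cref{lemma-add-instances}(\ref{condi-4-add}): bidder $1$ does not receive item $b$ given $(v_1^{b,mid},v_2^{b,large})$. Since $v_1^{b,mid}$ has value zero on item $a$, bidder $1$'s value for his allocation is $0$, so individual rationality and no negative transfers jointly force $P_1(v_1^{b,mid},v_2^{b,large})=0$. Consequently,
\begin{equation*}
v_1^{b,mid}(f(v_1^{b,mid},v_2^{b,large})) - P_1(v_1^{b,mid},v_2^{b,large}) = 0,
\end{equation*}
while on the other profile
\begin{equation*}
v_1^{b,mid}(f(v_1^{b,large},v_2^{b,one})) - P_1(v_1^{b,large},v_2^{b,one}) \ge k^2 - k,
\end{equation*}
since the allocation contains item $b$ and any additionally allocated item can only add non-negative value for $v_1^{b,mid}$.

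Combining the two inequalities yields
\begin{equation*}
v_1^{b,mid}(f(v_1^{b,mid},v_2^{b,large})) - P_1(v_1^{b,mid},v_2^{b,large}) < v_1^{b,mid}(f(v_1^{b,large},v_2^{b,one})) - P_1(v_1^{b,large},v_2^{b,one}).
\end{equation*}
By the choice of $u$ as the first vertex of divergence on paths through $\mathcal V_1\times\mathcal V_2$, we have $u\in Path(\mathcal S_1(v_1^{b,large}),\mathcal S_2(v_2^{b,one})) \cap Path(\mathcal S_1(v_1^{b,mid}),\mathcal S_2(v_2^{b,large}))$, so \cref{lemma-bad-leaf-good-leaf} implies that $\mathcal S_1$ dictates the same message at $u$ for $v_1^{b,mid}$ and $v_1^{b,large}$. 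The main (and only real) subtlety I expect is justifying the additive analog of \cref{lemma-small-pay-ud} parts~\ref{item-4-ud} and~\ref{item-4.5-ud}: one must verify that the additional items possibly allocated to bidder $1$ under $(v_1^{b,large},v_2^{b,one})$ do not spoil the payment bound, which follows from weak monotonicity (cf.\ \cref{wmon-lemma}) together with the fact that $v_1^{b,small}$ and $v_1^{b,large}$ agree on items other than $b$.
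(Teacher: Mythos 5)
Your proof takes the same route as the paper: apply \cref{lemma-bad-leaf-good-leaf} to the pair of profiles $(v_1^{b,large},v_2^{b,one})$ and $(v_1^{b,mid},v_2^{b,large})$ using the additive analogs of \cref{lemma-small-pay-ud} parts~\ref{item-4-ud} and~\ref{item-4.5-ud} (namely \cref{lemma-small-pay-add} parts~\ref{item-4-add} and~\ref{item-4.5-add}), and the key inequalities you write down are exactly the paper's. Two minor corrections on your justification of the lemma parts: (i)~DSIC does \emph{not} force the three profiles $(v_1^{b,small},v_2^{b,one})$, $(v_1^{b,mid},v_2^{b,one})$, $(v_1^{b,large},v_2^{b,one})$ to induce the same allocation or the same payment to bidder~$1$; the paper only needs (and only proves) that each gives bidder~$1$ a bundle containing~$b$ at a payment at most~$k$, obtained by a DSIC utility chain starting from the IR bound at $(v_1^{b,small},v_2^{b,one})$ combined with no negative transfers and the fact that $b$ is the only item valuable to each of $v_1^{b,mid},v_1^{b,large}$. (ii)~Weak monotonicity is not needed for this part; the extra items bidder~$1$ might receive under $(v_1^{b,large},v_2^{b,one})$ cannot spoil anything simply because $v_1^{b,mid}$ assigns them zero value, so the argument goes through directly from DSIC, IR, and NNT as in the unit-demand case (WMON is reserved for the $v_1^{both}$ analysis in part~\ref{item-complicated-add}). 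Neither point affects the validity of your final chain of inequalities, which matches the paper's.
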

\begin{claim}\label{claim-a-both-add}
    The strategy $\mathcal S_1$ dictates the same message at vertex $u$ for the valuations $v_1^{both}$ and $v_1^{a,mid}$. 
\end{claim}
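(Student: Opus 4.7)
The proof mirrors \cref{claim-a-both} from the unit-demand section, adapted to the additive structure. The plan is to exhibit two valuation profiles, $(v_1^{both}, v_2^{b,one})$ and $(v_1^{a,mid}, v_2^{a,large})$, whose paths both pass through vertex $u$, and to show that bidder $1$ with true valuation $v_1^{a,mid}$ strictly prefers the outcome at the former profile to that at the latter. Then \cref{lemma-bad-leaf-good-leaf} concludes that $\mathcal S_1$ dictates the same message at $u$ for $v_1^{both}$ and $v_1^{a,mid}$.

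First, I would show that given $(v_1^{both}, v_2^{b,one})$, bidder $1$ wins a bundle containing item $a$ with payment at most $2k+2$. The argument runs via dominant-strategy incentive compatibility: by \cref{lemma-add-instances} part \ref{condi-1-add}, $f(v_1^{a,one}, v_2^{b,one})$ allocates item $a$ to bidder $1$, and individual rationality yields $P_1(v_1^{a,one}, v_2^{b,one}) \leq 1$. Applying DSIC to bidder $1$ with truthful valuation $v_1^{both}$ then gives
\[
v_1^{both}(f(v_1^{both}, v_2^{b,one})) - P_1(v_1^{both}, v_2^{b,one}) \geq v_1^{both}(\{a\}) - 1 = 2k+2.
\]
Combined with no negative transfers, this forces $v_1^{both}(f(v_1^{both}, v_2^{b,one})) \geq 2k+2$. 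Since $v_1^{both}$ is additive with $v_1^{both}(\{a\}) = 2k+3$ and $v_1^{both}(\{b\}) = 2k+1$, the only bundles of value at least $2k+2$ are $\{a\}$ and $\{a,b\}$, so the won bundle contains item $a$. The payment is then at most $v_1^{both}(\text{bundle}) - (2k+2) \leq (4k+4) - (2k+2) = 2k+2$.

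Second, by \cref{lemma-add-instances} part \ref{condi-2-add}, bidder $2$ wins item $a$ at profile $(v_1^{a,mid}, v_2^{a,large})$, so bidder $1$'s bundle has $v_1^{a,mid}$-value zero; individual rationality and no negative transfers force $P_1(v_1^{a,mid}, v_2^{a,large}) = 0$. Putting the two pieces together yields
\[
v_1^{a,mid}(f(v_1^{both}, v_2^{b,one})) - P_1(v_1^{both}, v_2^{b,one}) \geq k^2 - 2k - 2 > 0 = v_1^{a,mid}(f(v_1^{a,mid}, v_2^{a,large})) - P_1(v_1^{a,mid}, v_2^{a,large})
\]
for arbitrarily large $k$. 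Since $u$ lies on both paths $Path(\mathcal S_1(v_1^{both}), \mathcal S_2(v_2^{b,one}))$ and $Path(\mathcal S_1(v_1^{a,mid}), \mathcal S_2(v_2^{a,large}))$ by the choice of $u$ as the first point of divergence for valuations in $\mathcal V_1 \times \mathcal V_2$, \cref{lemma-bad-leaf-good-leaf} finishes the argument.

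The main obstacle relative to the unit-demand version is the weakened payment bound: in the unit-demand setting $v_1^{both}$ caps at $2k+3$, pinning $P_1 \leq 1$, whereas in the additive setting bidder $1$ could win the grand bundle (value $4k+4$), allowing payments up to $2k+2$. Nevertheless, the residual $v_1^{a,mid}$-utility gap $k^2 - 2k - 2$ remains positive for large $k$, so the strict inequality required by \cref{lemma-bad-leaf-good-leaf} is preserved.
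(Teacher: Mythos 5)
Your proposal is correct and follows essentially the same path as the paper: establish that at $(v_1^{both},v_2^{b,one})$ bidder $1$ wins a bundle containing $a$ with bounded payment (the content of Lemma~\ref{lemma-small-pay-add}, part~\ref{item-6-add}), note that at $(v_1^{a,mid},v_2^{a,large})$ bidder $1$'s $v_1^{a,mid}$-utility is zero (part~\ref{item-2-add}), and then invoke Lemma~\ref{lemma-bad-leaf-good-leaf}. The only difference is that you derive a slightly sharper payment bound of $2k+2$ (by subtracting the DSIC utility floor from the bundle value) where the paper settles for $4k+4$ via individual rationality alone; both yield a positive utility gap of order $k^2$ for large $k$, so the strict inequality needed by Lemma~\ref{lemma-bad-leaf-good-leaf} holds either way.
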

\begin{claim}\label{claim-b-both-add}
    The strategy $\mathcal S_1$ dictates the same message at vertex $u$ for the valuations $v_1^{both}$ and $v_1^{b,mid}$. 
\end{claim}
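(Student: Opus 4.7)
The plan is to follow the structure of \cref{claim-b-both} from the unit-demand proof, with modifications tailored to the additive setting. The key preliminary ingredient is an additive analog of \cref{lemma-small-pay-ud}, providing the allocation and payment information needed for the case analysis. In particular, I need the analog of part \ref{item-complicated-ud}: in $(v_1^{both}, v_2^{a,large})$, bidder $1$ cannot win any bundle containing item $a$; if she wins a bundle containing item $b$ she pays at most $2k+1$; and if she wins a bundle containing neither valuable item she pays zero. The exclusion of item $a$ would follow from weak monotonicity (\cref{wmon-lemma}) applied against $(v_1^{a,mid}, v_2^{a,large})$, where the additive analog of part \ref{item-2-ud} gives that bidder $1$ wins some bundle $S^{mid}$ with $a \notin S^{mid}$: by additivity, switching from $v_1^{a,mid}$ to $v_1^{both}$ changes the value of item $a$ from $k^2$ to $2k+3$, so any candidate bundle $S^{both} \ni a$ violates the weak monotonicity inequality for large $k$, regardless of the remaining items in $S^{both}$.

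With this in hand, there are two subcases. In Case 1, bidder $1$ wins a bundle containing item $b$ (and not $a$). By individual rationality her payment is at most $v_1^{both}(\{b\}) = 2k+1$, so the utility of $v_1^{b,mid}$ at the leaf of $(v_1^{both}, v_2^{a,large})$ is at least $k^2 - (2k+1)$. On the other hand, the additive analog of part \ref{item-4.5-ud} gives that at the leaf of $(v_1^{b,mid}, v_2^{b,large})$ bidder $1$ obtains no bundle containing $b$ and pays zero, so the utility of $v_1^{b,mid}$ there is $0$. Since $u$ was chosen as the first divergence vertex over profiles in $\mathcal V_1 \times \mathcal V_2$, every profile in that product must share the same path up to $u$, and hence $u$ lies on both $Path(\mathcal S_1(v_1^{both}), \mathcal S_2(v_2^{a,large}))$ and $Path(\mathcal S_1(v_1^{b,mid}), \mathcal S_2(v_2^{b,large}))$. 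Then \cref{lemma-bad-leaf-good-leaf} forces $\mathcal S_1$ to dictate the same message at $u$ for $v_1^{both}$ and $v_1^{b,mid}$.

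In Case 2, bidder $1$ wins a bundle containing neither valuable item. By no negative transfers her payment is zero, so the utility of $v_1^{both}$ at the leaf of $(v_1^{both}, v_2^{a,large})$ is $0$. By the additive analog of part \ref{item-4-ud}, at the leaf of $(v_1^{b,mid}, v_2^{b,one})$ bidder $1$ wins a bundle containing item $b$ and pays at most $k$, so the utility of $v_1^{both}$ there is at least $(2k+1) - k = k+1 > 0$. Once again \cref{lemma-bad-leaf-good-leaf}, applied with $u$ on both relevant paths, yields the claim. The main obstacle I anticipate is setting up and verifying the additive analog of \cref{lemma-small-pay-ud} — especially part \ref{item-complicated-ud} — because the additive structure allows bidder $1$ to receive arbitrary bundles rather than single items, so weak monotonicity must be combined with individual rationality, no negative transfers, and the dominant-strategy property across several valuation profiles to pin down the allocation and bound the payments simultaneously.
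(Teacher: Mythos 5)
Your proof is correct and follows essentially the same route as the paper: the same two-case split (bidder~$1$ wins $b$, or wins neither valuable item), the same comparison profiles $(v_1^{b,mid},v_2^{b,large})$ and $(v_1^{b,mid},v_2^{b,one})$, and the same appeal to weak monotonicity to rule out bidder~$1$ winning item~$a$ in $(v_1^{both},v_2^{a,large})$; the ``additive analog of Lemma~\ref{lemma-small-pay-ud}'' you anticipate is precisely Lemma~\ref{lemma-small-pay-add} in the paper.
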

To prove these claims, we use the following observations about the allocation and the payment scheme of player $1$:    
\begin{lemma}\label{lemma-small-pay-add}
    The allocation rule $f$ and the payment scheme $P_1$ of bidder $1$ satisfy that:
    \begin{enumerate}
        \item Given the valuation profiles $(v_1^{a,one},v_{2}^{b,one})$ and $(v_1^{a,large},v_{2}^{b,one})$, bidder $1$ wins item $a$ and pays at most $1$.  \label{item-1-add}
        \item  Given the valuation profile $(v_1^{a,mid},v_2^{a,large})$, bidder $1$ wins a bundle that does not contain item $a$ and pays zero.   \label{item-2-add}
        \item Given the valuation profiles $(v_1^{b,small},v_2^{b,one})$,
        $(v_1^{b,mid},v_2^{b,one})$ and $(v_1^{b,large},v_2^{b,one})$,  
        bidder $1$ wins item $b$ and pays at most $k$. 
        \label{item-4-add}
           \item  Given the valuation profile $(v_1^{b,mid},v_2^{b,large})$, bidder $1$ wins a bundle that does not contain item $b$ and pays zero.   \label{item-4.5-add}
\item Given the valuation profile $(v_1^{both},v_2^{b,one})$, 
bidder $1$ gets a bundle that contains item $a$ and pays at most $4k+4$. \label{item-6-add}
\item Given the valuation profile $(v_1^{both},v_2^{a,large})$, bidder $1$ does not win item $a$.  If bidder $1$ wins item $b$, then he pays at most $2k+1$. If he wins a bundle that contains neither item $a$ or item $b$, then he pays zero. \label{item-complicated-add}
    \end{enumerate}
\end{lemma}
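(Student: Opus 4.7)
The plan is to follow the structure of the proof of Lemma \ref{lemma-small-pay-ud}, since the valuations $v_1^{a,one}, v_1^{a,mid}, v_1^{a,large}, v_1^{b,small}, v_1^{b,mid}, v_1^{b,large}$ are simultaneously unit-demand and additive and Lemma \ref{lemma-add-instances} has exactly the same structure as Lemma \ref{lemma-ud-instances}. The tools I would use are Lemma \ref{lemma-add-instances} (which pins down the allocation on seven specific profiles), individual rationality, no negative transfers, the fact that OSP implies dominant-strategy incentive compatibility, and weak monotonicity via Lemma \ref{wmon-lemma} for the two new profiles involving $v_1^{both}$.

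For parts 1 through 4, I would essentially transplant the arguments from Lemma \ref{lemma-small-pay-ud}. For the ``base'' profiles $(v_1^{a,one},v_2^{b,one})$ in part 1 and $(v_1^{b,small},v_2^{b,one})$ in part 3, Lemma \ref{lemma-add-instances} pins down the allocation and individual rationality bounds the payment by $1$ and $k$ respectively. For the ``scaled-up'' profiles $(v_1^{a,large},v_2^{b,one})$ in part 1 and $(v_1^{b,mid},v_2^{b,one}), (v_1^{b,large},v_2^{b,one})$ in part 3, I would apply a DSIC comparison against the base profile: reporting the base valuation yields utility at least $k^4 - 1$ (or $k^2 - k$), so truthful utility must match this, which together with no negative transfers forces bidder $1$ to win the relevant item and yields the claimed payment bound. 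Parts 2 and 4 follow directly from Lemma \ref{lemma-add-instances} conditions 2 and 4, which prevent bidder $1$ from winning the single item she values; individual rationality and no negative transfers then force the payment to be exactly zero.

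Part 5 is handled via a DSIC comparison between $v_1^{both}$ and $v_1^{a,one}$ against $v_2^{b,one}$. By part 1, reporting $v_1^{a,one}$ when the true valuation is $v_1^{both}$ yields utility at least $v_1^{both}(\{a\}) - 1 = 2k+2$, so DSIC and no negative transfers imply $v_1^{both}(f(v_1^{both},v_2^{b,one})) \ge 2k+2$. Since $v_1^{both}(\{b\}) = 2k+1 < 2k+2$, the winning bundle must be either $\{a\}$ or $\{a,b\}$, so bidder $1$ wins item $a$; individual rationality then bounds the payment by $v_1^{both}(\{a,b\}) = 4k+4$.

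The main obstacle is part 6, which requires weak monotonicity. I would compare $(v_1^{both},v_2^{a,large})$ against $(v_1^{a,mid},v_2^{a,large})$: by part 2, bidder $1$ wins some bundle $S^{mid}$ not containing $a$ in the latter profile. If bidder $1$ were to win a bundle $S^{both} \ni a$ in the former, weak monotonicity would require
\[
v_1^{a,mid}(S^{mid}) - v_1^{a,mid}(S^{both}) \;\ge\; v_1^{both}(S^{mid}) - v_1^{both}(S^{both}).
\]
The left side equals $-k^2$, while the right side lies in $[-(4k+4),-2]$ by enumerating $S^{mid} \in \{\emptyset,\{b\}\}$ and $S^{both} \in \{\{a\},\{a,b\}\}$; for $k$ arbitrarily large this inequality fails, contradicting Lemma \ref{wmon-lemma}. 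Hence bidder $1$ does not win item $a$ on $(v_1^{both},v_2^{a,large})$. The payment bounds in the two remaining cases follow from individual rationality (giving $P_1 \le v_1^{both}(\{b\}) = 2k+1$ when bidder $1$ wins $b$) and from individual rationality together with no negative transfers (forcing $P_1 = 0$ when bidder $1$ wins neither item).
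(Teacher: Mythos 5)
Your proof is correct and takes essentially the same approach as the paper: parts 1--4 transplant the corresponding parts of Lemma \ref{lemma-small-pay-ud}; part 5 follows the same DSIC-comparison argument against $(v_1^{a,one},v_2^{b,one})$ and then uses individual rationality; and part 6 uses the same weak-monotonicity comparison against $(v_1^{a,mid},v_2^{a,large})$, with your explicit enumeration of $S^{mid}$ and $S^{both}$ yielding exactly the $4k+4$ vs.\ $k^2$ contradiction the paper invokes.
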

\subsubsection[All Valuations in V\_1 Send The Same Message: Proofs of Claims \ref{claim-a-one-mid-add} to \ref{claim-b-both-add}]{All Valuations in $\mathcal V_1$ Send The Same Message: Proofs of Claims \ref{claim-a-one-mid-add} to \ref{claim-b-both-add}}\label{app-claims-proofs-add}
We will now prove only \cref{claim-a-both-add} and \cref{claim-b-both-add}, since the proofs of the rest of the claims in fact appear in \cref{app-claims-proofs-ud}: the proof of \cref{claim-a-one-mid-add} is identical to the proof of \cref{claim-a-one-mid}, and the same goes for the proofs of  \cref{claim-a-mid-large-add} and \cref{claim-a-mid-large}, 
the proof of \cref{claim-b-small-large-add} and  \cref{claim-b-small-large} and the proof of \cref{claim-b-mid-large-add} and \cref{claim-b-mid-large}.


\begin{proof}[Proof of \cref{claim-a-both-add}]
            By Lemma \ref{lemma-small-pay-add} part \ref{item-6-add}, $f(v_1^{both},v_2^{b,one})$ allocates item $a$ to bidder $1$ and $P_1(v_1^{both},\allowbreak v_2^{b,one})\le 4k+4$. Therefore:
\begin{equation}\label{eq-good-leaf5-add}
 v_1^{a,mid}(f(v_1^{both},v_2^{b,one}))-P_1(v_1^{both},v_2^{b,one})\ge k^2-4k-4   
\end{equation}
Whereas by part \ref{item-2-add} of Lemma \ref{lemma-small-pay-add}: 
 \begin{equation}\label{eq-bad-leaf5-add}
 v_1^{a,mid}(f(v_1^{a,mid},v_2^{a,large}))-P_1(v_1^{a,mid},v_2^{a,large})=0   
\end{equation}
Combining the inequalities (\ref{eq-good-leaf5-add}) and (\ref{eq-bad-leaf5-add}) gives:
\begin{equation*}
 v_1^{a,mid}(f(v_1^{a,mid},v_2^{a,large}))-P_1(v_1^{a,mid},v_2^{a,large})< 
 v_1^{a,mid}(f(v_1^{both},v_2^{b,one}))-P_1(v_1^{both},v_2^{b,one})
\end{equation*}
Note that vertex $u$ belongs in $Path(\mathcal S_1(v_1^{both}),\mathcal S_2(v_2^{b,one}))$ and also in
$Path(\mathcal{S}_1(v_1^{a,mid}),\allowbreak\mathcal{S}_2(v_2^{a,large}))$. Therefore, Lemma \ref{lemma-bad-leaf-good-leaf} gives that the strategy $\mathcal S_1$ dictates the same message for  $v_1^{a,mid}$ and $v_1^{both}$ at vertex $u$.
\end{proof}
\begin{proof}[Proof of \cref{claim-b-both-add}]
    Note that by \cref{lemma-small-pay-add} part \ref{item-complicated-add}, given the valuation profile $(v_1^{both},v_2^{a,large})$, bidder $1$ cannot win item $a$ so we consider the following two cases: the case where he wins item $b$ and the case where he wins neither of these items.

    Assume that bidder $1$ wins item $b$ given the valuation profile $(v_1^{both},v_2^{a,large})$. Note that in this case, \cref{lemma-small-pay-add} part \ref{item-complicated-add} also implies that:
    \begin{equation}\label{eq-good-leaf6-add}
 v_1^{b,mid}(f(v_1^{both},v_2^{a,large}))-P_1(v_1^{both},v_2^{a,large})\ge k^2 -2k-1   
\end{equation}
Whereas by part \ref{item-4.5-add} of Lemma \ref{lemma-small-pay-add}: 
 \begin{equation}\label{eq-bad-leaf6-add}
 v_1^{b,mid}(f(v_1^{b,mid},v_2^{b,large}))-P_1(v_1^{b,mid},v_2^{b,large})= 0   
\end{equation}
Combining the inequalities (\ref{eq-good-leaf6-add}) and (\ref{eq-bad-leaf6-add}) gives:
\begin{equation*}
 v_1^{b,mid}(f(v_1^{b,mid},v_2^{b,large}))-P_1(v_1^{b,mid},v_2^{b,large})< 
 v_1^{b,mid}(f(v_1^{both},v_2^{a,large}))-P_1(v_1^{both},v_2^{a,large})
\end{equation*}
Note that vertex $u$ belongs in $Path(\mathcal S_1(v_1^{both}),\mathcal S_2(v_2^{a,large}))$ and also in
$Path(\mathcal{S}_1(v_1^{a,mid}),\allowbreak\mathcal{S}_2(v_2^{a,large}))$. Therefore, Lemma \ref{lemma-bad-leaf-good-leaf} gives that the strategy $\mathcal S_1$ dictates the same message for  $v_1^{both}$ and $v_1^{b,mid}$ at vertex $u$, which completing this case.

For the latter case, where bidder $1$ gets a bundle that contains neither item $a$ nor item $b$ given the valuation profile $(v_1^{both},v_2^{a,large})$, note that for \cref{lemma-small-pay-add} part \ref{item-complicated-add} implies that:
\begin{equation}\label{eq-bad-leaf7-add}
 v_1^{both}(f(v_1^{both},v_2^{a,large}))-P_1(v_1^{both},v_2^{a,large})=0   
\end{equation}
Whereas by part \ref{item-4-add} of Lemma \ref{lemma-small-pay-add}: 
 \begin{equation}\label{eq-good-leaf7-add}
 v_1^{both}(f(v_1^{b,mid},v_2^{b,one}))-P_1(v_1^{b,mid},v_2^{b,one}) \ge 2k+1-k=k+1 
\end{equation}
Combining the inequalities (\ref{eq-bad-leaf7-add}) and (\ref{eq-good-leaf7-add}) gives:
\begin{equation*}
 v_1^{both}(f(v_1^{both},v_2^{a,large}))-P_1(v_1^{both},v_2^{a,large})< 
 v_1^{both}(f(v_1^{b,mid},v_2^{b,one}))-P_1(v_1^{b,mid},v_2^{b,one})
\end{equation*}
Note that vertex $u$ belongs in $Path(\mathcal S_1(v_1^{both}),\mathcal S_2(v_2^{a,large}))$ and also in
$Path(\mathcal{S}_1(v_1^{b,mid}),\allowbreak\mathcal{S}_2(v_2^{b,one}))$. Therefore, Lemma \ref{lemma-bad-leaf-good-leaf} gives that the strategy $\mathcal S_1$ dictates the same message for  $v_1^{both}$ and $v_1^{b,mid}$ at vertex $u$, which solves the second case and completes the proof.
\end{proof}

\subsubsection{Proof of Lemma \ref{lemma-small-pay-add}: Observations About The Mechanism} \label{subsubsec-prop-alloc-add}
The proof of \cref{lemma-small-pay-ud} is a direct consequence of the approximation guarantee of the mechanism and the fact that it is 
obviously strategy-proof (and thus also dominant-strategy incentive compatible) and satisfies individual rationality and no negative transfers. 

\begin{proof}[Proof of \cref{lemma-small-pay-add}]
The proofs of parts \ref{item-1-add}, \ref{item-2-add}, \ref{item-4-add} and \ref{item-4.5-add} are identical to the respective parts in the proof of \cref{lemma-small-pay-ud}.

For part \ref{item-6-add}, we first show that bidder $1$ wins a bundle that contains item $a$ given $(v_1^{both},v_2^{b,one})$. Note that since the allocation rule $f$ and the payment scheme $P_1$ are realized by a dominant-strategy mechanism, we have that:
\begin{equation}\label{eq-part6-add}
\begin{aligned}
    v_1^{both}(f(v_1^{both},v_2^{b,one}))-P_1(v_1^{both},v_2^{b,one})&\ge v_1^{both}(f(v_1^{a,one},v_2^{b,one}))-P_1(v_1^{a,one},v_2^{b,one}) \\
    &\ge 2k+2 &\text{(by part \ref{item-1-add})}
\end{aligned}    
\end{equation}
Combining (\ref{eq-part6-add}) with the property of no negative transfers implies that $v_1^{both}(f(v_1^{both},v_2^{b,one}))\ge 2k+2$. Thus, given $(v_1^{both},v_2^{b,one})$, bidder $1$ necessarily gets a bundle that contains item $a$. 
For the upper bound on the payment, note that $v_1^{both}(f(v_1^{both},v_2^{b,one}))\le 4k+4$, so by individual rationality  $P_1(v_1^{a,both},v_2^{b,one})\le 4k+4$. By that, we  complete the proof of part \ref{item-6-add}. 

For part \ref{item-complicated-add},
we begin by showing that given $(v_1^{both},v_2^{a,large})$, bidder $1$ does not win item $a$. This is due to weak monotonicity. Formally, we remind that by part \ref{item-2-add}, given $(v_1^{a,mid},v_2^{a,large})$, bidder $1$ wins a bundle $S^{mid}$
that does not contain $a$. Thus, if $f(v_1^{both},v_2^{a,large})$ allocates 
a bundle $S^{both}$ that contains item $a$ to bidder $1$, then we have that $v_1^{a,mid}(S^{both})-v_1^{a,mid}(S^{mid})=k^2$ whereas $v_1^{both}(S^{both})-v_1^{both}(S^{mid})\le 4k+4$, so $f$ is not weakly monotone. Since $f$ and $P_1$ realize a dominant-strategy mechanism, \cref{wmon-lemma} implies that $f$ has to be weakly-monotone, so we get a contradiction. Thus, $f(v_1^{both},v_2^{a,large})$ has to allocate bidder $1$ a bundle that does not contain item $a$. 

The bounds on the payments are a straightforward implication of individual rationality and no negative transfers. If $f(v_1^{both},v_2^{a,large})$ allocates item $b$ to bidder $1$ then the payment is at most $2k+1$, and if it allocates to bidder $1$ no valuable items, then the payment has to be at most zero. Due to no negative transfers, it is zero exactly.   
\end{proof}

\subsection{Proofs of Claims \ref{cl:subadditive} and \ref{cl:general}} \label{subsec::proofs-subadd-general}
\begin{proof}[Proof of \cref{cl:subadditive}]
First the bidders are assigned a an arbitrary order. 
This is clearly OSP.  Then, after randomly partitioning the bidder into two groups $G_1$ and $G_2$ (which is done uniformly at random and independently of bidder valuations) with probability $1/2$ the auction runs a second-price auction for the grand bundle on bidders in $G_1$.  This is clearly OSP for these bidders.  

With the remaining probability bidders in $G_1$ are discarded and the auction learns the highest value among these bidders for the grand bundle (this is also clearly OSP since bidders in $G_1$ do not gain any positive utility under any value profile).  After learning this value in an OSP way from bidders in $G_1$ the auction runs the \textsc{Binary-Search-Mechanism} on the bidders in $G_2$.  Recall that this mechanism draws independently and uniformly at random a round $r_i$ for each $i \in G_2$ and a random final round $r^*$ (before any bidder in $G_2$ acts).  

Then in each round, the mechanism sets a price for each item, allows the bidders in that round (in the pre-specified order generated at the outset of the auction) to state their demand set for ``unclaimed'' items and ``conditionally claim'' them, and then if the current round is $r^*$, terminate the auction awarding bidders in $r^*$ their conditionally claimed items (otherwise, no bidders in the round are allocated any items).  Observe that each bidder in $G_2$ participates in exactly one round and is, thus, asked a single demand query.  If the round $r_i$ that bidder $i$ participates in is $r^*$ then she weakly maximizes her utility when she is called to act by truthfully reporting her demand set (since she is allocated exactly these items).  On the other hand, if $r_i \neq r^*$ then she obtains no utility under any possible valuation profile of all agents and, thus, truthfully reporting her demand set is a weakly obviously dominant strategy.  

Since for any fixed realization of the random outcomes of all coin flips the resulting mechanism is OSP, we have that the randomized mechanism is universally OSP.
\end{proof}
\begin{proof}[Proof of \cref{cl:general}]
    For this auction, bidders are randomly partitioned into three groups \texttt{STAT}, \texttt{SECOND-PRICE}, and \texttt{FIXED}.  We argue that for any fixed partition the mechanism is OSP and, hence, the mechanism is universally OSP. 
    
    Bidders in \texttt{STAT} cannot possibly win any goods and the mechanism just requests that they report their values.  Since they win nothing regardless of report, it is an obviously dominant strategy for them to report their information truthfully. 
    Bidders in \texttt{SECOND-PRICE} participate in a second-price auction for the grand bundle.  Since this can be implemented as an ascending-price auction for the grand bundle,  \cref{lemma-partial} 
    implies that bidders in \texttt{SECOND-PRICE} have an obviously dominant truthful strategy.  
    Finally, bidders in \texttt{FIXED} are approached in an arbitrary order and they are allocated their utility maximizing bundle of the remaining items given a fixed vector of prices.  
    Thus, 
    each
     bidder in \texttt{FIXED} an obviously dominant truthful strategy.  
    
    As such, each bidder on any fixed realization of the random partition has an obviously dominant truthful strategy and the entire mechanism is then universally OSP.
\end{proof}

\end{document}